\newcommand{\TheTitle}{Noisy Interactive Quantum Communication} 
\newcommand{\TheAuthors}{G. Brassard, A. Nayak, A. Tapp, D. Touchette, and F. Unger}
\headers{\TheTitle}{\TheAuthors}
\title{{\TheTitle}\thanks{Submitted to the editors 12 October 2016. An extended abstract of this work appeared in FOCS'14~\cite{BNTTU14}.
\funding{G.B.~is supported in part by the Natural 
Sciences and Engineering Research Council of Canada (NSERC), the Canada 
Research Chair program, the Canadian Institute for Advanced Research
(CIFAR) and the Institute for Theoretical Studies of ETH~Zurich. 
A.N.'s research was conducted in part at Perimeter Institute
and supported in part by NSERC Canada, CIFAR, an ERA (Ontario), 
QuantumWorks, MITACS, and ARO (USA). Research at Perimeter Institute
for Theoretical Physics is supported in part by the Government of Canada
through Industry Canada and by the Province of Ontario through MRI.
A.T.~was supported by NSERC and CIFAR. D.T.'s research was conducted while he was a student 
in~D\'epartement d'informatique et de recherche op\'erationnelle, Universit\'e de Montr\'eal, and~was supported by a Fonds de Recherche 
Qu\'ebec -- Nature et Technologies B2 Doctoral research scholarship.
F.U.'s research was conducted in part at UC Berkeley.}}}
\author{
  Gilles Brassard\thanks{D\'epartement d'informatique et de recherche op\'erationnelle, Universit\'e de Montr\'eal
    (\email{brassard@iro.umontreal.ca}, \email{alain.tapp@gmail.com}).}
  \and
  Ashwin Nayak\thanks{Department of Combinatorics and Optimization, and 
Institute for Quantum Computing, University of Waterloo (\email{anayak@uwaterloo.ca},
    \email{touchette.dave@gmail.com}).}
  \and
  Alain Tapp\footnotemark[2]
  \and
  Dave Touchette\footnotemark[3] \thanks{Perimeter Institute for Theoretical Physics, Waterloo.}
  \and
  Falk Unger\thanks{No Affiliation (\email{falk.unger@gmail.com}).}
}
\renewcommand{\epsilon}{\varepsilon}
\newcommand{\ket}[1]{\mathop{\left|#1\right>}\nolimits}
\newcommand{\bra}[1]{\mathop{\left<#1\,\right|}\nolimits}
\newcommand{\kb}[2]{| #1\rangle\!\langle #2 |}
\newcommand{\Tr}[2]{\mathop{{\mathrm{Tr}}_{#1}} (#2) }
\newcommand{\Tra}[1]{\mathop{{\mathrm{Tr}}_{#1}} }
\newcommand{\wt}[1]{\mathop{{\mathrm{wt}}} (#1) }
\def\N{\mathcal{N}}
\def\L{\mathcal{L}}
\def\A{\mathcal{A}}
\def\M{\mathcal{M}}
\def\T{\mathcal{T}}
\def\E{\mathcal{E}}
\def\F{\mathcal{F}}
\def\D{\mathcal{D}}
\def\H{\mathcal{H}}
\def\X{\mathcal{X}}
\def\R{\mathcal{R}}
\def\sA{{\mathsf A}}
\def\sAM{\mathsf{AM}}
\def\sAD{\mathsf{AD}}
\def\sB{{\mathsf B}}
\def\sBM{\mathsf{BM}}
\def\sBD{\mathsf{BD}}
\newcommand{\set}[1]{\left\{ #1 \right\}}
\newcommand{\size}[1]{\left| #1 \right|}
\newcommand{\tensor}{\otimes}
\newcommand{\rI}{{\mathrm I}}
\newcommand{\rX}{{\mathrm X}}
\newcommand{\rZ}{{\mathrm Z}}
\newcommand{\rC}{{\mathrm C}}
\newcommand{\rE}{{\mathrm E}}
\newcommand{\rH}{{\mathrm H}}
\newcommand{\rg}{{\mathrm g}}
\newcommand{\rb}{{\mathrm b}}
\newcommand{\rB}{{\mathrm B}}
\newcommand{\re}{{\mathrm e}}
\newcommand{\rs}{{\mathrm s}}
\newcommand{\ra}{{\mathrm a}}
\newcommand{\rd}{{\mathrm d}}
\newcommand{\rT}{{\mathrm T}}
\newcommand{\rS}{{\mathrm S}}
\newcommand{\rQ}{{\mathrm Q}}
\newcommand{\rO}{{\mathrm O}}
\newcommand{\IP}{{\mathrm{IP}}}
\newcommand{\complexi}{{\mathrm i}}
\newcommand{\msA}{{\mathscr A}}
\newcommand{\msE}{{\mathscr E}}
\newcommand{\msD}{{\mathscr D}}
\newcommand{\msB}{{\mathscr B}}
\newcommand{\suppress}[1]{}
\begin{document}

\maketitle

\begin{abstract}
We~study the problem of simulating protocols in a quantum communication 
setting over noisy channels. This problem falls at the intersection of 
quantum information theory and quantum communication complexity, and it will be
of importance for eventual real-world applications of interactive 
quantum protocols, which can be proved to have exponentially lower 
communication costs than their classical counterparts for some problems. 
These are the first results concerning the 
quantum version of this problem, originally studied by Schulman in a classical 
setting (FOCS '92, STOC '93). We simulate a length \boldmath{$N$} quantum 
communication protocol by a length \boldmath{$O(N)$} protocol with arbitrarily small 
error. 
Under adversarial noise, our strategy can withstand, for arbitrarily 
small \boldmath{$\epsilon > 0$}, error rates as high as \boldmath{$1/2 -\epsilon$} when parties 
pre-share perfect entanglement, but the classical channel is noisy.
We show that this is optimal. 
We provide extension of these results in several other models of communication, 
including when also the entanglement is noisy, 
and when there is no pre-shared entanglement but 
communication is quantum and noisy. We also study 
the case of random noise, for which we provide
simulation protocols with positive communication rates
and no pre-shared entanglement over some quantum channels with quantum capacity \boldmath{$C_Q=0$}, 
proving that \boldmath{$C_Q$} is in general not the right characterization of a channel's capacity 
for interactive quantum communication.
Our results are stated for a general quantum communication 
protocol in which Alice and Bob collaborate, and these results hold in 
particular in the quantum communication complexity settings of the Yao and 
Cleve--Buhrman models.
\end{abstract}

\begin{keywords}
  Coding Theory, Communication Complexity, Quantum Computation and Information
\end{keywords}

\begin{AMS}
  81P45, 68Q12, 81P70, 94A24
\end{AMS}

\section{Introduction}
Quantum information theory is well developed for information transmission 
over noisy quantum channels, dating back to the work of Holevo in the 
1970's~\cite{Hol72, Hol73} for the transmission of classical information 
\cite{Hol98, SW97} and quantum information \cite{Lloyd97, Shor02, Dev05}, 
and even for cases allowing pre-shared entanglement between sender and 
receiver \cite{BSST99, BSST02}. It~describes the ultimate limits for 
(unidirectional) data transmission over noisy quantum channels without 
concern for explicit, efficient construction of codes. Closely related is 
the area of quantum coding theory, which takes a more practical approach 
toward the construction of quantum error correcting codes 
\cite{Shor95, Ste95} by providing explicit and efficient constructions 
\cite{CS96, Ste95, Gott96, CRSS98} and by providing bounds on their 
existence \cite{CRSS98, FM04, Rai99}.

Quantum communication complexity has also been studied in depth since 
Yao's paper introduced the field in 1993~\cite{Yao93}. It is an 
idealized setting in  which local computation is deemed free and 
communication is noiseless but expensive. Two parties want to 
compute a classical function of their joint input while minimizing the 
number of qubits they have to exchange. Exponential separations have been 
shown for some promise problems between their classical and quantum 
communication complexity \cite{BCW98}---even in cases allowing bounded error 
\cite{Raz99}. Moreover, for both classical and quantum communication 
complexity, interaction has been proved to be a powerful resource: 
exponential separations in the communication complexity of some functions 
have also been established between protocols restricted to~$k$ messages, 
and protocols with $k + 1$ messages \cite{NW91, KNTZ07}. In 1997, Cleve and 
Buhrman \cite{CB97} defined an alternative model for communication 
complexity in a quantum setting, in which the players are allowed to 
pre-share an arbitrary entangled state but transmit classical rather than 
quantum bits. 
They proved the first separation between such a quantum model and the 
classical model of communication complexity (for a three party task).
This model is at least as powerful as Yao's (up 
to a factor of $2$), since entanglement can be used to teleport~\cite{BBCJPW93} 
the message qubits with twice as many classical bits. It 
is still open whether the two models are essentially equivalent, since no 
good bound on the amount of entanglement required in the Cleve--Buhrman 
model is known.

With the ubiquity of 
distributed computing nowadays, it has become increasingly important to 
develop an information and coding theory for interactive protocols. In 
the realm of classical communication, Schulman initiated the field with 
his pioneering works \cite{Sch92, Sch93, Sch96}, showing that it is 
possible to simulate any protocol defined over a noiseless channel with a 
noisy channel with exponentially small probability of error while only 
dilating the protocol by a constant factor. This multiplicative dilation 
factor, in the case of a binary symmetric channel, is proportional to the 
inverse of the capacity, as in the data transmission case. However, the 
hidden constant of proportionality does not go to $1$ asymptotically. For 
adversarial errors, Schulman also shows how to withstand corruption up to 
a rate of \smash{$\frac{1}{240}$}. Recent work by Braverman and Rao 
\cite{BR11} shows how to withstand error rates of $\smash{\frac{1}{4}} - 
\epsilon$ in the case of an adversarial channel, and they also show that this 
is optimal in their model of noisy communication. Even more recently, 
Franklin, Gelles, Ostrovsky, and Schulman \cite{FGOS12} were able to show 
that in an alternative model in which Alice and Bob are allowed to share 
a secret key unknown to the adversary Eve, they can withstand error rates 
up to $\smash{\frac{1}{2}} - \epsilon$, which is also shown to be optimal 
in this model. 

All of the above simulations use \emph{tree codes}, which 
were introduced by Schulman. Tree codes exist for various parameters, but 
no efficient construction is known. A relaxation of the tree code 
condition still strong enough for most applications in interactive coding 
was proposed by Gelles, Moitra and Sahai \cite{GMS12}, and they provided 
an efficient randomized construction for these so-called potent
tree codes. Using these in a random error model leads to efficient 
decoding on average hence to efficient simulation protocols (of 
course when given black-box access to the original protocol, which might be 
inefficient in itself). In a worst-case adversarial scenario, the 
decoding might still take exponential time with potent tree codes. It was 
only recently that an alternative coding strategy, developed by Brakerski 
and Kalai \cite{BK12}, was able to address the adversarial error case 
efficiently. Their strategy is to cleverly split the communication into 
blocks of logarithmic length in which tree encoding is used. In addition, 
they send, in between the blocks, some history information that enables 
efficient decoding. This construction was further improved by Brakerski 
and Naor \cite{BN13}. A survey article by Braverman \cite{Bra12} 
provides a good overview of results and open questions in the area of 
classical interactive communication circa 2011, though some of the 
important questions raised there have been addressed since. In 
particular, the question of interactive capacity of binary symmetric 
channels was recently investigated by Kol and Raz \cite{KR13}. For this 
channel they find that indeed, in the low noise regime, the communication 
capacity behaves differently in the asymptotic limit of long interactive 
protocols than in the data transmission case.

Quantum communication, even more so than classical communication, is 
prone to transmission errors in the real world. 
The approach taken in all of the above is inherently classical and does 
not generalize well to the quantum setting. In~particular, the fact that 
classical information can be copied and resent multiple times is 
implicitly used, and therefore the fact that the information in the 
communication register can be destroyed by noise is inconsequential. In 
contrast, the no-cloning theorem of quantum theory \cite{Dieks82, WZ82} 
rules out copying of quantum messages. As a result, if the information in 
some communication register is destroyed, it cannot be resent. A naive 
strategy, which applies in the quantum as well as the classical case, 
would be to encode each round separately. However, in a random error 
model, a constant dilation of each round would not be sufficient to 
achieve constant fidelity in the worst case of one-qubit transmission per 
round, and a super-constant dilation leads to a communication rate of 
zero asymptotically. Moreover, in the case of adversarial errors, no 
constant rate of error can be withstood with such a strategy unless the 
number of rounds is constant: the adversary can always disrupt a whole 
block. 

The properties of classical information made it possible for 
Schulman and his successors to design clever classical simulation 
protocols that can withstand constant error rates at constant 
communication rates and that can succeed in simulating classical protocols 
designed for noiseless channels over noisy channels by reproducing the 
whole transcript of the noiseless protocol. However, it was not immediately 
obvious 
that it is possible, given an arbitrary protocol designed for a noiseless 
bidirectional quantum channel, to simulate it over noisy quantum channels 
with constant error rate at a constant communication rate. Even for 
protocols in the Cleve--Buhrman model, in which the communication is 
classical, it is not clear whether we can achieve results similar to those 
for classical protocols. Indeed, a quantum measurement is in general 
irreversible. If such a measurement is performed on the shared entangled 
state and the players later realize that the measurement was based on 
wrong classical information, the naive adaptation of the classical 
simulation to the Cleve--Buhrman model fails.

\section{Overview of Results}

We~show that despite the above obstacles, it is indeed possible to 
simulate arbitrary quantum protocols over noisy quantum channels with 
good communication rates. We~consider two models for interaction over 
noisy channels. One is analogous to Yao's model, and all communication 
in it is over noisy quantum channels, but the parties do not pre-share 
entanglement. The other is analogous to the Cleve--Buhrman model, and all 
communication in it is over noisy classical channels and parties are 
allowed to pre-share noiseless entanglement. We~call these models the 
\emph{quantum\/} and \emph{shared entanglement models\/}, respectively. 
We also consider a further variation on the shared entanglement model in which entanglement is also noisy.

Our main focus is on the model with perfect shared entanglement but adversarial noise on the classical communication. 
In such a context, the number of errors is defined to be the Hamming distance between the transcript of sent messages 
and the transcript of possibly corrupted received messages. Messages are over a constant size alphabet, and the error rate 
is the ratio between the number of errors introduced by the adversary in the worst case and
the number of such messages sent, i.e.~the transcript length. Note that in this model, it is
possible for the honest parties to generate a secret key unknown to the adversary by measuring their shared entanglement. 
Details about the other models of communication appear in 
\cref{sec:oth}.
Most of our technical contributions involve showing the following result,
which is stated more formally as Theorem~\ref{th:optach} later.
\begin{theorem}
A constant 
dilation factor on the communication suffices to withstand an
 adversarial error rate of $\frac{1}{2} - \epsilon$ in the shared
 entanglement model, for arbitrarily small $\epsilon > 0$. 
\end{theorem}
This is optimal and matches 
the highest tolerable error rate in the analogous shared secret key model 
for classical interactive communication \cite{FGOS12}.

The results in the other models are consequences of this main theorem. For the quantum communication model in which parties do not pre-share entanglement, but have access to a noisy quantum channel, we first distribute a linear amount of entanglement using standard quantum information and coding theory techniques. We can tolerate 
any adversarial error rate less than $\frac{1}{6}$ in that case (Theorem~\ref{th:advstand}),
close to the best achievable for quantum 
data transmission with zero error at $\frac{1}{4}$. 
This is better than the factor of two drop that might be expected if 
we compare classical interactive coding to unidirectional coding.
We can also adapt our techniques for an adversarial error model to the case of a random error model. Then, dilation factors proportional to $\frac{1}{C_Q}$ for a depolarizing channel of
quantum capacity $C_Q$ in the quantum model (Theorem~\ref{th:iidstand}), and proportional to $\frac{1}{C}$ for a binary symmetric channel of capacity $C$ in the shared entanglement model (Theorem~\ref{th:iidshent}), are sufficient. We also show that the result in the shared entanglement model is asymptotically optimal: there exists a family of binary functions for which a dilation factor proportional to $\frac{1}{C}$ is necessary (Theorem~\ref{th:iidshentopt}). We further extend the study in the shared entanglement model to consider noisy entanglement in the form of noisy Einstein-Podolsky-Rosen
(EPR) pairs in the so-called Werner states. For any non-separable Werner state, we give simulation protocols with linear noisy classical communication and noisy EPR pair consumption. Perhaps surprisingly, similar techniques can be used to show that 
the use of depolarizing channels in both directions enables the 
simulation to succeed whenever the quantum capacity with two-way 
classical communication, $C_Q^2$, is strictly positive (Theorem~\ref{th:iidstandopt}). 
For some range of the depolarizing parameter, $C_Q = 0$ but $C_Q^2 > 0$, so 
this proves that $C_Q$ does not characterize a quantum channel's 
capacity for interactive quantum communication.

Due to the use of tree codes, the protocols presented in this paper are 
not computationally efficient. However, it is possible to extend 
classical results on efficient interactive coding tolerating maximum error to noisy quantum 
communication. The representation of noisy protocols mentioned above is 
quite powerful and could be used to adapt classical 
results on computationally efficient interactive computation over 
adversarial channels \cite{BK12} and on the interactive capacity of
random noise channels \cite{KR13} to the quantum regime.

There are two main components that establish our main result.

\subsection{First Component: Teleportation and Active Rewinding}

First, we need to establish a framework for simulating 
quantum protocols over noisy channels. To avoid losing quantum information, the approach we take is to teleport 
\cite{BBCJPW93} the quantum communication register back and forth. When 
the register is in some party's possession, this party tries to evolve the 
simulation by applying one of his unitary operations in the noiseless protocol, or 
one of its inverses if he realizes at some point he applied it wrongly 
before. The important point is that all operations on the quantum 
registers are reversible, being a sequence of noiseless protocol 
unitary operators and random (but known) Pauli operators.
Of particular importance to our work is the notion of tree codes as 
introduced by Schulman, which the players use to transmit classical 
information.

As described in a recent paper on efficient interactive coding 
\cite{BN13}, the high-level logic of all solutions proposed until now for 
classical protocol simulation can be summarized as follows: the parties 
try to evolve the protocol, and if they later realize there has been some 
error, they try to go back to the point where they last agreed (in a 
protocol tree representation, this would be their least common ancestor). 
In our approach for quantum protocols, the parties try to follow roughly 
the same idea, but for two reasons are not able to do this passively. 
First, there is no underlying transcript (or protocol tree) that the 
parties try to synchronize, except for their wish to evolve the correct 
sequence of unitary operations. By the no-cloning theorem \cite{Dieks82, WZ82}, 
the parties cannot restart with a copy of the quantum information 
received up to some earlier point. Instead they have to actively rewind 
previous unitary operators and wrong teleportation decodings until a suitable 
point in the protocol. Second, when they try to synchronize in this 
manner, they actively teleport, potentially leading to more errors on the 
joint quantum register. 

An important ingredient in our 
simulation is the representation for noisy quantum protocols that we 
develop. As said before, in quantum protocols there is no direct 
analogue of a protocol tree representation that enables one to keep track exactly 
and explicitly of the evolution of the noiseless protocol simulation. 
The cleaned-up form \cref{eq:coll} of our representation provides 
in some sense a quantum analogue of a protocol tree representation. 
As the classical representation, it enables an exact and explicit 
assessment of the evolution of the noiseless protocol simulation,
as well as such an assessment of the departure from it due to noise.

At this point, it might look like we have reduced our problem
to the classical case, since the parties only transmit 
classical information---the teleportation measurement 
outcomes. This enables us to reuse tools from classical interactive 
coding, most notably tree codes, but the design of the quantum simulation 
protocol needs extra care. Unlike in the classical case, agreement 
by the two parties on a common classical transcript is not sufficient.
This transcript consists mostly of random teleportation measurement 
outcomes and is useless by itself. Additionally, we need to maintain a
joint quantum state that eventually evolves according to the original
protocol.

Once we realize the importance of teleportation in the context of noisy
communication, and carefully design the simulation protocol, 
it may not come as a surprise that the simulation incurs only a constant
factor overhead. The need for backtracking in the quantum simulation,
however, seems to impose serious constraints on the 
tolerable error rate. \emph{A priori\/} it is entirely unclear whether
we could hope to circumvent the low error tolerance seen in simulations
with backtracking.

\subsection{Second Component: Simulation via Blueberry Codes}

The second part of our main contribution is to develop 
the necessary techniques to prove that we \emph{can} tolerate an error rate 
as high as $\frac{1}{2} - \epsilon$. These techniques are indeed novel, 
and could be used to improve on previously known 
\emph{classical} results.

Indeed, all recent classical schemes tolerating high error rates have the 
property that the parties always go forward with the communication by 
using the tree structure of classical protocols. In comparison, in the 
original Schulman scheme based on tree codes there is some form of 
backtracking, due to which the scheme could only tolerate a much lower 
adversarial error rate of \smash{$\frac{1}{240}$}. This is due to the 
fact that in a protocol with backtracking \cite{Sch96}, fort he
simulation to succeed the fraction of 
good rounds, in which both players correctly decode the tree code 
transmission, must be higher than in a 
protocol that always goes forward by transmitting edges of a pointer 
jumping problem \cite{BR11, FGOS12}.
There also is some form of backtracking in the outer level of the 
computationally efficient protocol of Ref.~\cite{BK12}, thus limiting the
 overall error rate that can be tolerated to a fourth of that of the 
inefficient protocol used at the inner level.
Hence, computationally efficient protocols in the shared secret key
communication model prior to this work could only tolerate error rates
less than $\frac{1}{8}$~\cite{FGOS12}. In light of these results, it is
clear that previously used techniques would not suffice to tolerate
error rates as high as $\smash{\frac{1}{2}} -\epsilon$ for our protocol,
which requires backtracking.
The new techniques we develop are thus necessary.

To achieve higher error tolerance, we follow 
Ref.~\cite{FGOS12} and use a \emph{blueberry code\/} to effectively turn most 
adversarial errors into erasures. Concatenating such a code on top of a 
tree code yields a tree code with an erasure symbol. Since general 
transmission errors are twice as harmful as erasures for the tree code 
condition, which is stated in terms of Hamming distance, it was shown in 
Ref.~\cite{FGOS12} that if the error rate is below $\smash{\frac{1}{2}} - 
\epsilon$, then the large number of rounds in which both parties correctly 
decode a long enough prefix is sufficient to imply success of the simulation. 
Once again due to backtracking, this condition is not sufficient for our purpose
and in particular blueberry codes by themselves are not 
sufficient to improve error tolerance up to $\frac{1}{2}$ here. 
For us,
the number of rounds in 
which both parties correctly decode even the whole string could be high, 
but if these rounds alternate with rounds in which at least one of the 
parties makes a decoding error, then the protocol could stall, and 
simulation would fail. To circumvent this possibility, we need to
bound the number of rounds with bad tree code decoding. Previously known bounds on this \cite{Sch96} can be used
to show the success of our simulation but are far from enabling us to tolerate
error rates up to $\frac{1}{2}$.
We develop a new bound on tree codes with an erasure symbol, 
(see Lemma~\ref{lem:optcor}), which might be 
of independent interest for classical interactive coding. 
This bound enables us to tightly control the number of rounds with bad decoding. 
Once we control this quantity,
 it is also important to ensure that even when there is corruption 
detected as an erasure in a round, as long as there is no bad
 decoding, the protocol will not need to spend a good round to correct 
for this previous erasure round. 

In fact, the techniques that we develop are not just powerful enough to prove that
our quantum protocol can tolerate the maximum error rate of 
$\smash{\frac{1}{2}} - \epsilon$. 
\Cref{lem:optcor} can be used to 
obtain a strengthening of the theorem of Ref.~\cite{FGOS12} in the \emph{classical} shared
 secret key model, and then our techniques can be applied with
 this strengthened theorem and the techniques of Ref.~\cite{BK12} to obtain 
computationally efficient simulation protocols in this model that can also
 tolerate any error rate less than $\frac{1}{2}$~\cite{Tou13}.
This demonstrates the power of our techniques.
However, this result has been superseded by slightly adapting a result from Ref.~\cite{GH14}, which uses different techniques; there
the authors
obtain computationally efficient simulation protocols at a maximum error 
up to $\frac{1}{4}$ in the model without a shared secret key.

%
\subsection{Organization}

The paper is structured as follows: in 
\cref{sec:not}, we set up the
notation and state the relevant definitions, in particular for 
the different models of communication. 
In 
\cref{sec:bas},
we state and prove a simpler version of our main result for the 
adversarial case in the shared entanglement model.
In 
\cref{sec:opt}, we state 
and prove our main result for the adversarial case in the shared 
entanglement model. 
\Cref{sec:oth} shows how to adapt the result 
of the previous section to obtain various interesting results, in 
particular for the quantum model, for the noisy shared entanglement model,
and in the case of a random error model. 
We conclude with a discussion of our results and further research directions.

\section{Preliminaries}
	\label{sec:not}
	\label{sec:prel}

\subsection{Quantum Mechanics}

We briefly review the quantum formalism for finite dimensional systems, 
mainly to set notation; for a more thorough treatment, we refer the 
interested reader to the following good introductions in a quantum information theory 
context \cite[Chapter 2]{NC00}, \cite[Chapter 2]{Wat08} \cite[ Chapters 3, 4, 5]{Wilde11}.

\subsubsection{Quantum States and Quantum Evolution}

To every quantum system $A$ we 
associate a finite dimensional Hilbert space, which by abuse of notation 
we also denote by $A$. The state of quantum system $A$ is represented by 
a density operator $\rho^A$, a positive semi-definite operator over the 
Hilbert space $A$ with unit trace. We denote by $\D (A)$ the set of all 
density operators representing states of system $A$. Composite quantum 
systems are associated with the (Kronecker) tensor product space of the 
underlying spaces, i.e., for systems $A$ and $B$, the allowed states of 
the composite system $A \otimes B$ are (represented by) the density 
operators in $\D(A \otimes B)$. We sometimes use the shorthand $AB$ for 
$A \otimes B$. The evolution of a quantum system $A$ is represented by a 
completely positive, trace preserving linear map (CPTP map) $\N^A$ such 
that if the state of the system is $\rho \in \D(A)$ before evolution 
through $\N^A$, the state of the system is $\N^A (\rho) \in \D(A)$ after. 
If the system $A$ is clear from the context, we might drop the superscript.
We 
refer to such maps as quantum channels, and to the set of all channels 
acting on $A$ as $\L(A)$. An important quantum channel that we consider 
is the qubit depolarizing channel $\T_{\epsilon}$ with depolarizing 
parameter $0 \leq \epsilon \leq 1$: it takes as input a qubit $\rho$ and 
outputs a qubit $\T_{\epsilon}(\rho) = (1 - \epsilon) \rho + \epsilon 
\tfrac{\rI}{2}$, i.e., with probability $1 - \epsilon$ it outputs $\rho$, and 
with complementary probability $\epsilon$ it outputs a completely mixed 
state. We also consider quantum channels with different input and output 
systems; the set of all quantum channels from a system $A$ to a system 
$B$ is denoted $\L(A, B)$. 
An example of such a channel that we consider is the qubit erasing channel $\R_{\epsilon}$ with erasing 
parameter $0 \leq \epsilon \leq 1$: it takes as input a qubit $\rho$ and 
outputs a qutrit $\R_{\epsilon}(\rho) = (1 - \epsilon) \rho + \epsilon 
\kb{e}{e}$, i.e., with probability $1 - \epsilon$ it outputs $\rho$, and 
with complementary probability $\epsilon$ it outputs an orthogonal erasure flag~$\ket{e}$.
Another important operation on a composite 
system $A \otimes B$ is the partial trace $\Tr{B}{\rho^{AB}}$ which 
effectively derives the \emph{reduced\/} or marginal state of the $A$
subsystem from the quantum state $\rho^{AB}$.
Fixing an orthonormal basis $\{ \ket{i} \}$ for $B$, the 
partial trace is given by~$\Tr{B}{\rho^{AB}} = \sum_i (\rI \tensor \bra{i} )
\rho (\rI \tensor \ket{i})$, and this is a valid quantum channel in $\L(A\otimes B, 
A)$.
Note that the action of $\Tra{B}$ is independent of the choice of basis chosen to represent it,  so we unambiguously write $\rho^A = \Tr{B}{\rho^{AB}}$.

An important special case for quantum systems comprises pure states, whose 
density operators have a special form:  rank-one projectors 
$\kb{\psi}{\psi}$. In such a case, a more convenient notation is provided 
by the pure state formalism: a state is represented by the unit vector 
$\ket{\psi}$ (up to an irrelevant complex phase) upon which the density operator 
projects. We denote by $\H(A)$ the set of all such unit vectors (up 
to equivalence of global phase) in system $A$.

%
%

Pure state evolution is represented by a 
unitary operator $U^A$ acting on $\ket{\psi}^A$, denoted $U 
\ket{\psi}^A$. Evolution of the $B$ register of a state $\ket{\psi}^{AB}$ 
under the action of a unitary operator $U^B$ is represented by $(\rI^{A} \otimes 
U^{B})\ket{\psi}^{AB}$, for $\rI^{A}$ representing the identity operator 
acting on the $A$ system, and is denoted by the shorthand $U^{B} 
\ket{\psi}^{AB}$ for convenience. We occasionally drop the superscripts when the 
systems are clear from the context. The evolution under consecutive action of 
unitary operators $U_j$'s is denoted by
\begin{equation}
	\left( \prod_{j=1}^\ell  U_j \right) \ket{\psi} = U_\ell \dotsc U_1 \ket{\psi}.
\end{equation}

We represent a classical random variable $X$ with probability density 
function $p_X$ by a density operator $\sigma^X$ that is diagonal in a 
fixed (orthonormal) basis $\{ \ket{x} \}_{x \in \X}$: $\sigma^X = \sum_{x \in \X} 
p_X(x) \kb{x}{x}^X$. For a quantum system $A$ classically correlated with 
a random variable $X$, we represent the corresponding classical-quantum 
state by the density operator $\rho^{XA} = \sum_{x \in \X} p_X(x) 
\kb{x}{x}^X \otimes \rho_x^A$, in which $\rho_x^A$ is the state of system 
$A$ conditioned on the random variable $X$ taking value $x \in \X$. The 
extraction of classical information from a quantum system is represented 
by quantum instruments: classical-quantum CPTP maps that take 
classical-quantum states on a composite system $X \otimes A$ to 
classical-quantum states. Viewing classical random variables as a special 
case of quantum systems, quantum instruments can be viewed as a special 
case of quantum channels.

\subsubsection{Pauli Operators}

When considering a quantum system $A$ of dimension $q$, we fix
an orthonormal basis $\{ \ket{i} \}_{i \in \{0, 1, \dotsc, q-1 \}}$ for $A$ and use the
following generalizations of Pauli operators: for $j, k \in \{0, 1, \dotsc, q-1 \}$,
$\rX^j \ket{k} = \ket{(k+j) \mod{q}}$ and $\rZ^j \ket{k} = 
e^{\complexi 2 \pi \tfrac{jk}{q}} \ket{k}$.
The operators in the set $\{\rX^j \rZ^k \}_{j, k \in \{0, 1, q-1  \}}$ 
are known as the Heisenberg-Weyl operators and form a basis for the 
linear vector space of operators on $A$, and the operators in
\begin{equation}
\label{eq:FqN}
	\F_{q, N} = \{\rX^{j_1} \rZ^{k_1} \otimes \dotsb \otimes \rX^{j_N} \rZ^{k_N} \}_{
		j_\ell k_\ell \in \{0, 1, \dotsc, q-1 \}^2, \ell \in [N]}
\end{equation}
form a basis for the space of operators on $A^{\otimes N}$. For $E \in \F_{q, N}$,
we denote by $\wt{E}$ the weight of $E$, i.e., the number of $A$ subsystems
on which $E$ acts non-trivially. For~$\delta \in [0,1]$, the set
\begin{equation}
\label{eq:Edelta}
	\E_{\delta, q, N} = \{E \in \F_{q, N} : \wt{E} \leq \delta N \}
\end{equation}
is the subset of elements of $\F_{q, N}$ of weight less than or equal to $\delta N$.

\subsubsection{Teleportation}

Our simulation protocols make heavy use of the teleportation protocol 
between Alice and Bob \cite{BBCJPW93}, which uses the following resource 
state shared by Alice and Bob, called an EPR pair: $\ket{\Phi^+}^{T_\sA 
T_\sB} = \tfrac{1}{\sqrt{2}} (\ket{00} + \ket{11})$, with the qubit  in the 
$T_\sA$ register held by Alice, and the qubit in the $T_\sB$ register held by 
Bob. The teleportation protocol then uses one of these resource states to 
teleport one qubit either from Alice to Bob, or from Bob to Alice. If 
Alice wants to teleport a qubit $\ket{\psi}$ in the register $C$ to Bob, 
with whom she shares an EPR pair, she applies a joint Bell measurement, 
which can perfectly distinguish the Bell states $\{\ket{\Phi_{xz}} = 
\tfrac{1}{\sqrt{2}} (\ket{0x} + (-1)^z \ket{1\bar{x}}) \}_{x, z \in \{0, 1 
\}}$, to the registers $C T_\sA$ she holds, and obtains uniformly random 
measurement outcomes $xz \in \{0, 1 \}^2$. After this measurement, the 
state in the $T_\sB$ register is $\rX^x \rZ^z \ket{\psi}$, for $\rX$ and $\rZ$ the 
Pauli operators corresponding to bit flip and phase flip in the 
computational ($\rZ$) basis, respectively. If Alice transmits the two bits 
$xz$ to Bob, he can then decode the state $\ket{\psi}$ on the $T_\sB$ 
register by applying $(\rX^x \rZ^z)^{-1} = \rZ^z \rX^x$. Teleportation from Bob 
to Alice is performed similarly (EPR pairs are symmetric).

\subsubsection{Pseudo-Measurements}

Another technique we use is that of making classical operations coherent: 
measurements and classically controlled operations are replaced by 
corresponding unitary operators (and ancilla register preparation). We call the
coherent version of a measurement a \emph{pseudo-measurement\/}. Without
loss in generality, it suffices to consider the measurement of a single
qubit in the standard basis~$\set{\ket{0}, \ket{1}}$.
This measurement corresponds to the
instrument~$\N$ defined by~$\N(\rho) = \bra{0} \rho \ket{0} \, \kb{0}{0} +
\bra{1} \rho \ket{1} \, \kb{1}{1}$. We replace this with the action of
the CNOT operation~$\kb{0}{0} \tensor \rI + \kb{1}{1} \tensor \rX$
on the qubit and a fresh ancillary qubit prepared in state~$\ket{0}$,
i.e., with the CPTP map~$\N'$ defined by~$\N'(\rho) = U (\rho \tensor 
\kb{0}{0}) U^*$, where~$U$ is the CNOT operation. The ancilla qubit may
now be transmitted instead of sending the classical outcome of the
measurement~$\N$. Provided all further operations on the two
qubits are only controlled unitary operations (in which the two qubits
may only be control qubits), each separately behaves like the classical 
measurement outcome.
\suppress{
to obtain a binary classical outcome in $\{0, 1 \}$ 
with some probability $p_0$ and $p_1$, respectively, a classical value that 
can be distributed classically among two parties (or more), a 
pseudo-measure is applied to the quantum state, leaving it in a pure 
quantum state in which the two qubits will then act the same as if they 
had been measured, provided they do not further interact. The technique 
to do so uses a controlled-$\rX$ gate, i.e., a gate mapping $\ket{x}^S 
\ket{b}^T$ to $\ket{x}^S \ket{b \oplus x}^T$ for some source qubit  in 
register $S$ to be kept at the previous measurement point, and some 
target qubit in register $T$ to be distributed. Then a pseudo-measure is 
done by preparing a fresh ancilla qubit in state $\ket{0}$ in some 
register $T$, and if we relabel the register of the previously measured 
qubit by register $S$, we simply use the controlled-$\rX$ gate as described 
above. Then, if the $S$ and $T$ registers are left as is, a subsequent 
measurement on any one of these registers will still output $0$ or $1$ 
with probability $p_0$ and $p_1$, respectively, and if both registers are 
measured, both outcomes are perfectly correlated.
}
The advantage of this substitution is that unlike measurements, they 
are \emph{reversible\/}. If it is later realized that a qubit should not have 
been measured, the pseudo-measurement can be undone.

\subsubsection{Distance Measures}

To measure the success of the simulation, we use the trace distance $\| 
\rho - \sigma \|_1^A$ between two arbitrary states $\rho^A$ and 
$\sigma^A$, in which $\| O \|_1^A = \Tr{}{(O^\dagger O)^{\tfrac{1}{2}}}$ is 
the trace norm for operators on system $A$. We might drop the $A$ 
superscript if the system is clear from the context. The trace distance has the 
operational interpretation  to be (four times) the best possible bias to 
distinguish between the two states $\rho^A$ and $\sigma^A$, given a 
single unknown copy of one of these two states~\cite[Chapter 3]{Wat08}. To distinguish between 
quantum channels, we first consider the induced norm for quantum channels 
$\N \in \L(A, B)$: $\|\N \| = \max{ \{\|\N(\sigma) \|_1^B : \sigma \in 
\D(A) \}}$. Correlations with another quantum system can help distinguish 
between quantum channels, so an appropriate norm to use to account for 
this is the completely bounded trace norm \cite{AKN97}: $\| \N \|_\diamond = \|\N 
\otimes \rI^R \|$ for some reference system $R$ of the same dimension as 
the input system $A$~\cite[Chapter 3]{Wat08}. For two quantum channels $\N$, $\M \in \L(A, 
B)$, $\| \N - \M \|_\diamond$ has a useful operational interpretation:
it is (four times) the best possible bias with which we can identify 
a uniformly random (unknown) channel out of the two, when we are allowed only
one use of the channel.

\subsection{Quantum Communication Model} 
	\label{sec:qucomm}

\subsubsection{Noiseless Communication Model}
	\label{sec:nslss}

In the \emph{noiseless quantum communication model} that we want to  
simulate, there are five quantum registers: the $A$ register held by 
Alice; the $B$ register held by Bob; the $C$ register, which is the 
communication register exchanged back and forth between Alice and Bob and 
initially held by Alice,  the $E$ register held by a potential adversary Eve; and finally the $R$ register, a reference system which purifies the 
initial (and then also the final) state of the $ABCE$ registers. The initial state 
$\ket{\psi_\mathrm{init}}^{ABCER} \in \H (A \otimes B \otimes C \otimes E \otimes R)$ is chosen arbitrarily 
from the set of possible inputs and is fixed at the outset 
of the protocol, but it is possibly unknown (totally or partially) to Alice and 
Bob. Note that to allow for composition of quantum protocols in an 
arbitrary environment, we consider arbitrary quantum states as input, 
which may be entangled with systems $RE$. A protocol $\Pi$ is then 
defined by the sequence of unitary operations $U_1, U_2, \dotsc , U_{N + 
1}$, with $U_{i}$ for odd~$i$ known at least to Alice (or given to her 
in a black box) and acting on registers~$AC$, and $U_{i}$ for even~$i$ 
known at least to Bob (or given to him in a black box) and acting on
registers~$BC$. For simplicity, we assume that $N$ is even. We can
modify any protocol to satisfy this property, while increasing
the total cost of communication by at most one communication of the $C$ 
register. 
The unitary operations of protocol $\Pi$ can be assumed to be public information
and known to Eve.
On a particular input state $\ket{\psi_\mathrm{init}}$, the 
protocol generates the final state $\ket{\psi_\mathrm{final}}^{ABCER} = 
U_{N + 1} \cdots U_1 \ket{\psi_\mathrm{init}}^{ABCER}$, for which at the 
end of the protocol the $A$ and $C$ registers are held by Alice, the $B$ 
register is held by Bob, and the $E$ register is held by Eve. The reference register $R$ is left untouched throughout the protocol.
The output state of the protocol is the $ABC$ part, i.e.,  $\Pi (\ket{\psi_\mathrm{init}}) =
\Tr{ER}{\kb{\psi_\mathrm{final}}{\psi_\mathrm{final}}^{ABCER}}$, and by  a slight
abuse of notation we also represent the induced quantum channel from $ABCE$ 
to $ABC$ simply by $\Pi$. This is depicted in Figure~\ref{fig:int_mod}.
Note that while the protocol only acts on $ABC$, we wish to maintain correlations with the reference system $R$, while we simply disregard
what happens on the $E$ system assumed to be in Eve's hand.
Since we consider local computation to be free, 
the sizes of $A$ and $B$ can be arbitrarily large, but still of finite 
size, say $m_A$ and $m_B$ qubits, respectively. We restrict ourselves to
the case of a single-qubit communication register~$C$, which is the worst 
case for noisy interactive communication. Every protocol can be converted 
into such a form by increasing the communication by a factor of at most two
but possibly at the expense of much more interaction: if a party has to speak 
when it is not his turn, he sends a qubit in state $\ket{0}$.
Note that both the Yao and the
Cleve--Buhrman models of quantum communication complexity can be recast 
in this framework; see Section~\ref{sec:qucc}. 

		\begin{figure}
		\begin{overpic}[width=1\textwidth]{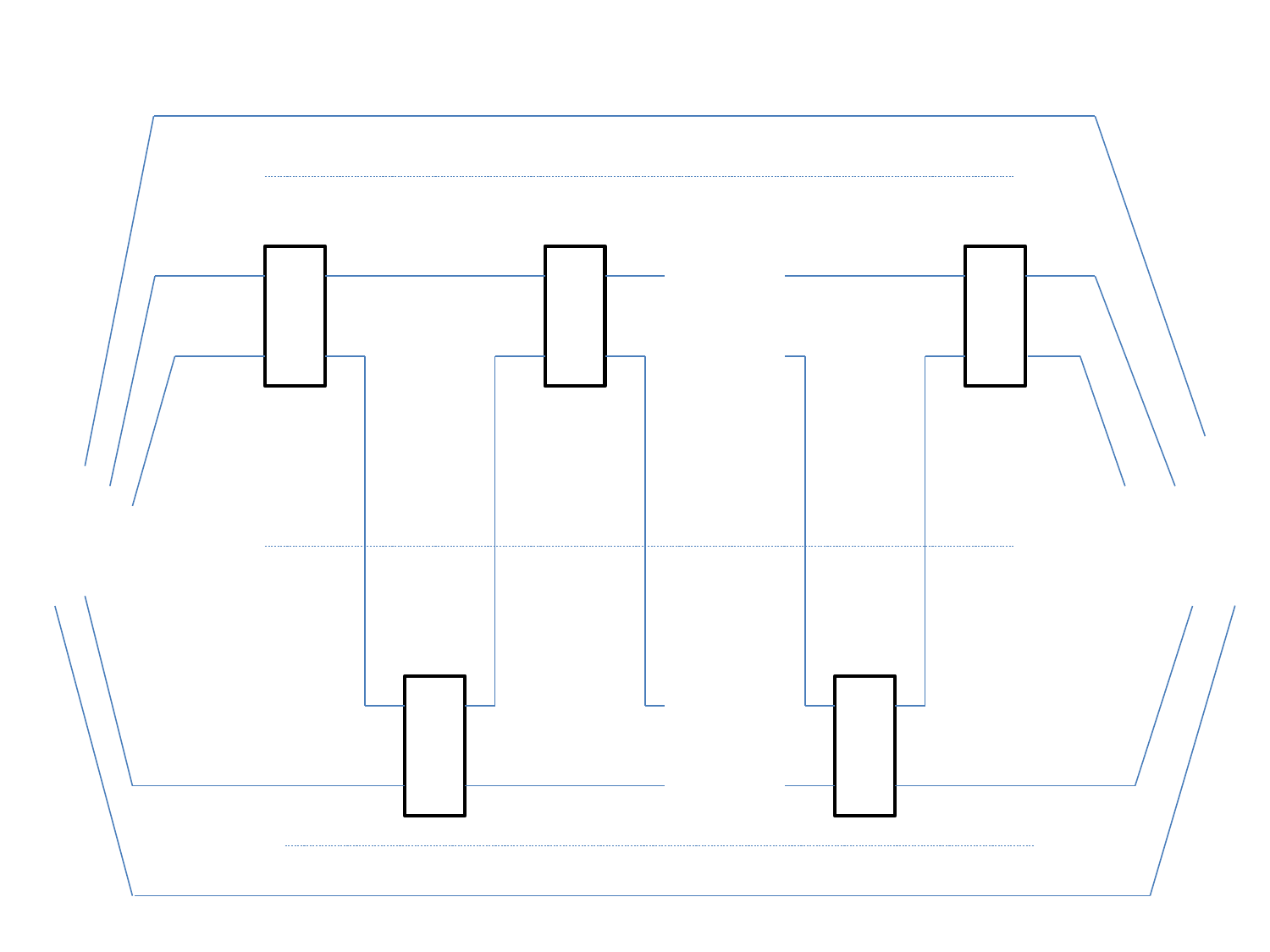}
		  \put(0,66.5){Reference}
		  \put(0,45){Alice}
		  \put(0,22){Bob}
		  \put(0,4.8){Eve}
		  \put(6,31){ $\ket{\psi_{\mathrm{init}}}$}
		  \put(15,66.5){\footnotesize$R$}
		  \put(15,4.8){\footnotesize$E$}
		  \put(15,54){\footnotesize$A$}
		  \put(15,48){\footnotesize$C$}
		  \put(15,13.7){\footnotesize$B$}
		  \put(22.1,49.5){\footnotesize$U_1$}
		  \put(26.2,54){\footnotesize$A$}
		  \put(26.2,48){\footnotesize$C$}
		  \put(33,15.5){\footnotesize$U_2$}
		  \put(37.2,17.4){\footnotesize$C$}
		  \put(37.2,13.7){\footnotesize$B$}
		  \put(44.2,49.5){\footnotesize$U_3$}
		  \put(48.2,54){\footnotesize$A$}
		  \put(48.2,48){\footnotesize$C$}
		  \put(55,33){\footnotesize$\cdots$}
		  \put(59.5,54){\footnotesize$A$}
		  \put(59.5,48){\footnotesize$C$}
		  \put(59.5,13.7){\footnotesize$B$}
		  \put(66.2,15.5){\footnotesize$U_{M}$}
		  \put(73.5,23){\footnotesize$C$}
		  \put(73.5,13.7){\footnotesize$B$}
		  \put(77.3,49.5){\footnotesize$U_{f}$}
		  \put(81.3,54){\footnotesize$A$}
		  \put(81.3,48){\footnotesize$C$}
		  \put(92,31){ $\ket{\psi_{\mathrm{final}}}$}
		\end{overpic}
		  \caption{Depiction of a quantum protocol in the noiseless communication model, adapted from the long version of~\cite[Figure 1]{Tou15}.}
		  \label{fig:int_mod}
		\end{figure}

We later embed length $N$ protocols into others of larger length 
$N^\prime > N$. To perform such \emph{noiseless protocol embedding}, we 
define some dummy registers $\tilde{A}$, $\tilde{B}$, $\tilde{C}$ isomorphic 
to $A$, $B$, $C$, respectively. $\tilde{A}$ and $\tilde{C}$  are part of 
Alice's scratch register and $\tilde{B}$ is part of Bob's scratch 
register. Then, for any isomorphic quantum registers $D$, $\tilde{D}$, let 
SWAP$_{D \leftrightarrow \tilde{D}}$ denote the unitary operation that 
swaps the $D, \tilde{D}$ registers. Recall that~$N$ is assumed to be
even. In a noiseless protocol embedding, 
for $i \in \{1, 2, \dotsc, N-1 \}$, we leave $U_i$ untouched. We replace 
$U_{N}$ by (SWAP$_{B \leftrightarrow \tilde{B}} U_{N})$ and $U_{N + 
1}$ by (SWAP$_{AC \leftrightarrow \tilde{A} \tilde{C}} U_{N + 1})$. 
Finally, for $i \in \{N+2, N+3, \dotsc, N^\prime + 1 \}$, we define $U_i = 
\rI$, the identity operator.
This embedding is important in the setting of interactive quantum coding for the following reasons:
first, adding these $U_i$ for $i > N$ makes the protocol well defined for $N^\prime +1$ steps.
Then,  swapping the important registers into the safe registers $\tilde{A}$, $\tilde{B}$, $\tilde{C}$
ensures that the important registers are never affected by noise arising after the first $N+1$ steps have been applied.
Hence, in our simulation, as long as we succeed in implementing the first $N+1$ steps without errors, the simulation will succeed since the
$\tilde{A}$, $\tilde{B}$, $\tilde{C}$ registers will then contain the output of the simulation, with no error acting on these registers.

We refer later to the \emph{unidirectional model}, consisting of one-way protocols; in this noiseless 
model, we allow for large local registers $A^\prime$, $B^\prime$ and for a 
large communication register $C^\prime$ that is used only once, either 
from Alice to Bob or from Bob to Alice, depending on the protocol. These 
registers can be further decomposed such that when used for simulation, 
the $A$ and $C$ registers of the protocol to be simulated are subsystems of 
$A^\prime$, and $B$ is one of $B^\prime$. 
For concreteness we consider here the case of communication from Alice to 
Bob; the other case is symmetric. A simulation protocol $U$ in the 
unidirectional model is defined by two quantum instruments $\M_1^{
A^\prime C^\prime}$, $\M_2^{ B^\prime C^\prime}$, and the output of the 
protocol on input $\ket{\psi} \in \H (A \otimes B \otimes C \otimes E \otimes R)$ 
is the state of the $ABC$ subsystem of $\M_2 \M_1 (\ket{\psi})$ and is denoted 
$U(\ket{\psi})$. By abuse of notation, the induced quantum channel from 
$ABCE$ to $ABC$ is also denoted $U$.

\subsubsection{Noisy Communication Model}

There are many possible models for noisy communication. We consider two 
in particular: one analogous to the Yao model with no shared entanglement 
but noisy quantum communication, which we call the \emph{quantum model}, 
and one analogous to the Cleve--Buhrman model with noiseless pre-shared 
entanglement but noisy classical communication, which we call the 
\emph{shared entanglement model}. A further variation on the shared 
entanglement model in which the entanglement is also noisy is considered 
in 
\cref{sec:nsent}. For simplicity, we formally define in this 
section what we sometimes refer to as \emph{alternating} communication models, 
in which Alice and Bob take turns transmitting the 
communication register to each other, and this is the model in which most of our protocols are 
defined. Our definitions easily adapt to somewhat more general models 
which we call \emph{oblivious} communication models, following Ref.~\cite{BR11}.
In these models, Alice and Bob do not necessarily 
transmit their messages in alternation, but nevertheless in a fixed order and
of fixed sizes
known to all (Alice, Bob, and Eve) depending only on the round and not on 
the particular input or the actions of Eve. 
Communication models with a dependence on inputs or 
actions of Eve are called \emph{adaptive\/} communication models.

\paragraph{Quantum Model}

We give formal definitions for the quantum model in Appendix~\ref{sec:nqucommqu}. Let us give an informal description here.

In the \emph{quantum model}, Alice has workspace $A^\prime$, Bob has workspace $B^\prime$, adversary Eve has workspace $E^\prime$, and there is some quantum communication register $C^\prime$ of some fixed size $q$, exchanged back and forth between them $N^\prime$ times, passing through Eve's hand each time. Alice and Bob can perform arbitrary local processing between each transmission, whereas Eve's processing when the $C^\prime$ register passes through her hand is limited by the noise model as described below. The input registers $ABCE$ are shared between Alice ($AC$), Bob ($B$) and Eve ($E$) and the output registers $\tilde{A} \tilde{B} \tilde{C}$ are shared between Alice ($\tilde{A} \tilde{C}$) and Bob ($\tilde{B}$). The reference register $R$ containing the purification of the input is left untouched throughout. Alice and Bob also possess registers $C_\sA$ and $C_\sB$, respectively, acting as virtual communication register $C$ from the original protocol $\Pi$ of length $N$ to be simulated. The communication rate of the simulation is given by the ratio $\frac{N}{N^\prime \log q}$.

We are interested in two models of errors, adversarial and random noise. In the \emph{adversarial} noise model, we are mainly interested in adversary Eve with a bound $\delta N^\prime$ on the number of errors that she introduces on the quantum communication register $C^\prime$ that passes through her hand. The fraction $\delta$ of corrupted transmissions is called the error rate, and is assessed by requiring that there exists a representation of the global action of Eve on the $N^\prime$ quantum communication registers with Kraus operators of weight at most $\delta N^\prime$.

In the random noise model, we consider $N^\prime$ independent and identically distributed uses of a noisy quantum channel acting on register $C^\prime$, half the time in each direction. Eve's workspace register $E^\prime$ (including her input register $E$) can be taken to be trivial in this noise model. 

For both noise models, we say that the simulation succeeds with error $\epsilon$ if for any input, the output in register $\tilde{A} \tilde{B} \tilde{C}$ corresponds to that of running protocol $\Pi$ on the same input, while also maintaining correlations with system $R$, up to error $\epsilon$ in trace distance.

\paragraph{Shared Entanglement Model}

We give formal definitions for the shared entanglement model in Appendix~\ref{sec:nqucommsh}. Let us give an informal description here.

In the \emph{shared entanglement model}, Alice has workspace $A^{ \prime}$, Bob has workspace $B^{ \prime}$, adversary Eve has workspace $E^{ \prime}$, and there is some classical communication register $C^{ \prime \prime}$ of some fixed size $q$, exchanged back and forth between them $N^\prime$ times, passing through Eve's hand each time. Alice and Bob also pre-share noiseless entanglement in register $T_\sA T_\sB$. Alice and Bob can perform arbitrary local processing between each transmission, whereas Eve's processing when the $C^{ \prime \prime}$ register passes through her hand is limited by the noise model as described below. The input registers $ABCE$ are shared between Alice ($AC$), Bob ($B$) and Eve ($E$) and the output registers $\tilde{A} \tilde{B} \tilde{C}$ are shared between Alice ($\tilde{A} \tilde{C}$) and Bob ($\tilde{B}$). The reference register $R$ containing the purification of the input is left untouched throughout. Alice and Bob also possess registers $C_\sA$ and $C_\sB$, respectively, acting as virtual communication register $C$ from the original protocol $\Pi$ of length $N$ to be simulated. The communication rate of the simulation is given by the ratio $\frac{N}{N^\prime \log q}$.

We are interested in two models of errors, adversarial and random noise. In the \emph{adversarial} noise model, we are mainly interested in an adversary Eve with a bound $\delta N^\prime$ on the number of errors that she introduces on the classical communication register $C^{\prime \prime}$ that passes through her hand. The fraction $\delta$ of corrupted transmissions is called the error rate, and is assessed by requiring that the global action of Eve on the $N^\prime$ classical communication registers introduces errors of Hamming weight at most $\delta N^\prime$.

In the random noise model, we consider $N^\prime$ independent and identically distributed uses of a noisy classical channel acting on register $C^{\prime \prime}$, half the time in each direction. Eve's workspace register $E^{ \prime}$ (including her input register $E$) can be taken to be trivial in this noise model. 

For both noise models, we say that the simulation succeeds with error $\epsilon$ if for any input, the output in register $\tilde{A} \tilde{B} \tilde{C}$ corresponds to that of running protocol $\Pi$ on the same input, while also maintaining correlations with system $R$, up to error $\epsilon$ in trace distance.

Notice that adversaries in the quantum model and shared entanglement model are incomparable. In the quantum model, the adversary can inject fully quantum errors since the messages are quantum, while errors in the shared entanglement model are restricted to be modifications of classical symbols. On the other hand, in the shared entanglement model the adversary can read all the classical messages without the risk of corrupting them, whereas in the quantum model, any attempt to ``read'' messages will result in an error in general on some quantum message.

\subsection{Quantum Communication Complexity} 
	\label{sec:qucc}

We discuss how standard models for quantum communication complexity fit into our model for noiseless quantum communication.
In the Yao model for quantum communication complexity \cite{Yao93}, Alice 
is given a classical input $x \in X$ and Bob is given a classical input $y \in Y$, 
and they want to compute a classical function $f: X \times Y \rightarrow 
Z$  of their joint input (often $X = Y = \{0, 1 \}^n, Z = \{0, 1 \}$) by 
communicating as few quantum bits as possible, but without regard to the 
local computation cost. Often, we are only interested in $x \in X$, $y \in 
Y$ satisfying some promise $P : X \times Y \rightarrow \{0, 1 \}$. A 
global quantum system is split into three subsystems: the $A$ register 
held by Alice, the $B$ register held by Bob, and 
the $C$ register, which is the communication register initially held by Alice
and exchanged back and forth by Alice and Bob in each round. Our formal 
description of the protocols in this model is based upon the one given in 
Ref.~\cite{Kre95}. 

A length $N$ protocol is defined by a sequence of unitary operators $U_1$, $\dotsc$, $ 
U_{N + 1}$ in which for $i$ odd, $U_i$ acts on the $AC$ register, and for 
$i$ even, $U_i$ acts on the $BC$ register. 
We need $N+1$ unitary operators in order to have $N$ messages 
since a first unitary operation is applied before the first message is sent and a last one is
applied after the final message is received.
Initially, all the qubits in 
the $A$, $B$, $C$ registers are set to the all $\ket{0}$ state, except for 
$n$ qubits in the $A$ register initially set to $x \in X$, and $n$ in the 
$B$ register set to $y \in Y$. The number of qubits $m_A$, $m_B \in 
\mathbb{N}$ in the $A$ and $B$ registers is arbitrary (of course, $m_A$, $
m_B \geq n$) and is not taken into account in the cost of the protocol.
The complexity of the $U_i$'s is also immaterial, since local computation is 
deemed free. However, the number of qubits $c$ in the $C$ register is 
important and is taken into account in the communication cost, which is 
$N \cdot c$. The outcome of the protocol is obtained by measuring 
an appropriate number of qubits of registers $A$ and~$B$ of Alice
and Bob, respectively, after the application of $U_{N + 1}$. The 
protocol succeeds if the outcomes of both measurements equal $f(x, y)$ with 
good probability, usually required to be a constant greater than~$1/2$,
for any $x$, $y$ satisfying the promise.

Another model  for quantum communication complexity was introduced 
by Cleve and Buhrman~\cite{CB97}. In their model, communication is 
classical, but parties are allowed to pre-share an arbitrary entangled 
quantum state at the outset of the protocol. We can view protocols in 
this model as a modification on those of Yao's model in which the initial 
state $\ket{\psi}$ on the $ABC$ register is arbitrary except for $n$ 
qubits in each of the  $A$, $B$ registers initialized to $x$, $y$,
respectively. Also, each qubit in the $C$ register is measured in the 
computational basis, and it is the outcome of these measurements that is 
communicated to the other party. Note that by using pseudo-measurements 
instead of actual measurements in each round, the parties can use quantum 
communication instead of classical communication. Then the two models 
become almost identical, except for the initial state, which is arbitrary 
in the Cleve--Buhrman model, and fixed to the all $0$ state in the Yao model 
(not including each party's classical input). Since our simulation 
protocols consider general unitary local processing but do not assume any 
particular form for the initial state, they work on this slight 
adaptation of the Cleve--Buhrman model as well as on the Yao model of 
quantum communication complexity.

Hence, both the Yao and the
Cleve--Buhrman models of quantum communication complexity can be recast 
in our framework for noiseless communication by making all operations coherent: put the initial 
classical registers into quantum registers, replace classically 
controlled operations by quantumly controlled operations, also replace 
measurements by pseudo-measurements, and then replace any classical 
communication by quantum communication. In particular, this gets rid of 
the problem of the non-reversibility of measurements
in the Cleve--Buhrman model.

\subsection{Classical Communication}
	\label{sec:clcomm}

\subsubsection{History}

Our simulation protocols contain an important classical component. In our 
setting, we are interested in protocols in which each party sends a 
message from some message set $[d] = \{1, 2, \dotsc, d-1, d \}$ of size 
$d$ in alternation, for some fixed number of rounds $N^\prime$ (actually, 
$\tfrac{N^\prime}{2}$ in our protocols). A round consists of Alice sending 
a message to Bob and then Bob sending a message back. Parties only have 
access to some noisy channels, so they need to encode these messages in 
some way. The codes used to do so in an interactive setting are described 
in the next subsection. For the moment, let us focus on the 
messages the parties wish to transmit, without the coding.

In round $i$, Alice transmits a message $a_i \in [d]$ to Bob, and then 
Bob sends back a message $b_i \in [d]$. These messages depend on the 
messages $a_1$, $a_2$, $\dotsc$, $a_{i-1} \in [d]$ and $b_1$, $b_2$, $
\dotsc$, $b_{i-1} \in~[d]$ that Alice and Bob sent in the previous rounds, 
respectively. We refer to these sequences of 
messages (at the end of round $i$) as Alice's history $s_\sA = a_1 \cdots a_i 
\in [d]^i$ and Bob's history $s_\sB = b_1 \cdots b_i \in [d]^i$, 
respectively. Note that these histories are updated in each round, and that 
each history, at the end of round $i$, can be represented as a node at 
depth $i$ in some $d$-ary tree of depth $N^\prime$. This tree is called a 
history tree. The whole (noiseless) communication can be extracted from the 
information in these two histories.

When the communication is noisy, in some rounds one party makes errors 
when trying to determine the other party's history. When comparing the 
history $s = s_1 \cdots s_i \in [d]^i$ of a party in round $i$ of the
protocol without coding, with the other party's best guess $s^i = s_1^i 
\cdots s_i^i \in [d]^i$ for that history, 
the least 
common ancestor of $s$ and $ s^i$ is the node at depth $i-\ell$ such that 
$s_1 \cdots s_{i-\ell} = s_1^i \cdots s_{i-\ell}^i$ but $s_{i-\ell+1} 
\not= s_{i-\ell+1}^i$. We call $\ell$ the \emph{magnitude} of the error 
of such a guess $s^i$, and in general for two histories $s, s^i \in [d]^i$ 
satisfying the above (with least common ancestor at depth $i-\ell$) we 
write $L(s, s^i) = \ell$. Note that we can compute $\ell$ as $i - \max{ 
\{t: (\forall j \leq t) [s_j = s_j^i] \}}$.


\subsubsection{Tree Codes}
	\label{sec:tc}

Standard error correcting codes are designed for data transmission and 
therefore are not particularly well suited for interactive communication 
over noisy channels. In his breakthrough papers 
\cite{Sch93, Sch96}, Schulman defined tree codes, which are 
particular codes designed for such interactive communication. Indeed, 
these tree codes can perform encoding and decoding round by round (following 
Ref.~\cite{FGOS12}, we refer to such codes as online codes), such that for 
each round, a message from the message set $[d]$ is transmitted, but even 
if there is some decoding error in this round, for each additional round 
that we perform (without transmission error), the more likely it is that this 
previous decoding error is correctly decoded. We describe this self-healing property 
in more detail after formally defining tree codes. We use the 
following for our definition. Given a set $A$ and its $k$-fold Cartesian 
product $A^k = A \times \cdots \times A$ ($k$-times), we denote, for any 
$n \in \mathbb{N}$, $A^{\leq n} = \cup_{k=1}^n A^k$. Also, given a 
transmission alphabet $\Sigma$ and two words $\bar{e} = e_1 \cdots e_t 
\in \Sigma^t$ and $\bar{e}^\prime = e_1^\prime \cdots e_t^\prime \in 
\Sigma^t$ over this alphabet, we denote by $\Delta(\bar{e}, 
\bar{e}^\prime)$ (the Hamming distance) the number of different symbols, 
i.e., $\Delta(\bar{e}, \bar{e}^\prime) = |\{i: e_i \not= e_i^\prime \}|$.

\begin{definition}
	\emph{(Tree codes \cite{Sch96})} Given a message set $[d]$ of size 
	$d > 1$, a number of rounds of communication $N^\prime \in 
	\mathbb{N}$, a distance parameter $0 < \alpha < 1$ and a 
	transmission alphabet $\Sigma$ of size $|\Sigma| > d$, a $d$-ary 
	\emph{tree code} of depth $N^\prime$ and distance parameter 
	$\alpha$ over alphabet $\Sigma$ is defined by an encoding 
	function $\msE: [d]^{\leq N^\prime} \rightarrow \Sigma$, and a
decoding function~$\msD: \Sigma^{\leq N^\prime} \rightarrow [d]^{\leq
N^\prime}$.

Let~$\bar{\msE}: [d]^{\leq N^\prime} \rightarrow 
\Sigma^{\leq N^\prime}$ denote the extension of~$\msE$ to strings, i.e., for
any~$t \leq N^\prime$ and~$a = a_1 \cdots a_t \in [d]^t$,
\begin{align*}
        \bar{\msE}(a) = \msE(a_1) \, \msE(a_1 a_2)  \cdots \msE(a_1 \cdots a_{t-1})
        \, \msE(a_1 \cdots a_t) \enspace,
\end{align*}
which is a string in~$\Sigma^t$.

The encoding function satisfies the following distance property, called 
the tree code property. For any~$t \le N^\prime$, and~$a, a^\prime \in
[d]^t$,
\begin{align*}
	L(a, a^\prime) = \ell  \quad \implies \quad \Delta(\bar{\msE}(a), 
	\bar{\msE}(a^\prime)) \geq \alpha \cdot \ell \enspace.
\end{align*}
In other words, if the least common ancestor of~$a,a^\prime$ is at
depth~$t-\ell$, then the corresponding codewords are at distance at
least~$\alpha \ell$.

The decoding function satisfies the property that for any~$t
\leq N^\prime$, and~$\bar{e} \in \Sigma^t$,
\begin{align*}
	\msD(\bar{e}) \in \{a: a \in [d]^t  \text{\ minimizes\ } 
	\Delta(\bar{\msE}(a), \bar{e}) \} \enspace.
\end{align*}
\end{definition}

See Appendix~\ref{sec:apptreecode} for a depiction of tree codes.

We later consider decoding of tree codes with an erasure symbol~$\perp$,
that is not used by the encoding function, but may occur in the output
of a channel. The decoding algorithm extends \emph{verbatim\/} to
received words with erasure symbols: it outputs a message sequence 
whose tree encoding is closest in Hamming distance to the received word.

Note that the decoding function is not uniquely defined for a given tree 
code: we could avoid ambiguity by outputting a special failure symbol for 
$\msD(\bar{e})$ whenever $|\{a: a \in [d]^t  \mathrm{\ minimizes\ } 
\Delta(\bar{\msE}(a), \bar{e}) \}| > 1$. Also note that we can view 
tree codes in the following alternative way, connecting them with the history 
tree representation defined above. Starting with a history tree, we can 
label the arcs out of each node by a symbol from $\Sigma$ corresponding 
to the encoding of that path in the tree code. The encoding function 
$\bar{\msE}$ represents the concatenation of the symbols on the path from 
root to node $a$, and the distance property is related to the distance of 
$a, a^\prime$ to their least common ancestor in the history tree, and to 
the number of errors during these corresponding $L(a, a^\prime)$ last 
transmissions. The following was proved in Ref.~\cite{Sch96} for the 
existence of tree codes.
Let $\rH(\alpha) = -\alpha \cdot \log{\alpha} - 
	(1-\alpha) \cdot \log{(1-\alpha)} $ denote the binary entropy 
	function.
\begin{lemma}
	\label{lem:tccode}
	Given a message set $[d]$ of size $d > 1$, a number of rounds of 
	communication $N^\prime \in \mathbb{N}$, and a distance parameter $0 < 
	\alpha < 1$, taking transmission alphabet $\Sigma$ with $|\Sigma| 
	= 2 \lfloor ( 2 \cdot 2^{\rH(\alpha)} \cdot d 
	)^{\tfrac{1}{1-\alpha}} \rfloor -1$ suffices to label the arcs of 
	some tree code, i.e., there exists an encoding function $\msE$ 
	satisfying the tree code property, and the required alphabet size 
	is independent of $N^\prime$, the number of rounds of 
	communication.
\end{lemma}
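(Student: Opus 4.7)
The plan is a probabilistic construction followed by the Lov\'asz Local Lemma (LLL), which is Schulman's original strategy. View $[d]^{\leq N^\prime}$ as the non-root nodes of the depth-$N^\prime$ truncation of the $d$-ary tree, and independently assign to each such node $a_1 \cdots a_i$ a uniformly random label $E(a_1 \cdots a_i) \in \Sigma$. The goal is to show that with positive probability this random labeling satisfies the tree code property, so that a valid code exists. The decoding function $D$ is then defined to output any minimum-Hamming-distance path of the appropriate depth, exactly as in the statement.

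For each pair of distinct paths $a, a^\prime \in [d]^t$ at the same depth $t \leq N^\prime$ with least common ancestor at depth $t-\ell$, let $B_{a,a^\prime}$ be the bad event that $\Delta(\bar E(a), \bar E(a^\prime)) < \alpha \ell$. Since $a$ and $a^\prime$ share their first $t - \ell$ nodes, the first $t - \ell$ symbols of $\bar E(a)$ and $\bar E(a^\prime)$ agree automatically; at each of the remaining $\ell$ positions the two encoded prefixes are distinct nodes, so their labels are independent uniform samples on $\Sigma$, differing with probability $1 - 1/|\Sigma|$. A standard entropy bound on the lower tail of $\mathrm{Binomial}(\ell, 1-1/|\Sigma|)$ gives
\[
\Pr[B_{a,a^\prime}] \;\leq\; \left(\frac{2^{H(\alpha)}}{|\Sigma|^{1-\alpha}}\right)^{\ell}.
\]

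Next I would analyse the dependency structure. The event $B_{a,a^\prime}$ depends only on the at most $2\ell$ node labels strictly below the least common ancestor along the two branches, so two bad events are independent whenever these node sets are disjoint. To count size-$\ell^\prime$ neighbours of $B_{a,a^\prime}$: pick one of the $2\ell$ nodes of $B_{a,a^\prime}$ to lie on the conflicting pair $(b, b^\prime)$, pick the relative depth of the least common ancestor of $(b,b^\prime)$ (at most $\ell^\prime$ choices), and extend the partner branch in at most $d^{\ell^\prime}$ ways, for a total of order $\ell \cdot \ell^\prime \cdot d^{\ell^\prime}$. Applying the asymmetric LLL with weights exponentially decaying in $\ell$, e.g.\ $x_{a,a^\prime} = (2d)^{-\ell}$, the sum $\sum_{B^\prime \sim B} x_{B^\prime}$ converges geometrically (the $d^{\ell^\prime}$ neighbour growth is killed by the $(2d)^{-\ell^\prime}$ weight), and the LLL hypothesis reduces to an inequality of the shape $\frac{2^{H(\alpha)}}{|\Sigma|^{1-\alpha}} \cdot 2d < 1$, which is precisely ensured by the stated choice $|\Sigma| = 2 \lfloor (2 \cdot 2^{H(\alpha)} \cdot d)^{1/(1-\alpha)} \rfloor - 1$. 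The leading factor of $2$ and the floor absorb both the $(1-x)^{-1}$ slack in the LLL and integer rounding, and crucially $|\Sigma|$ is independent of the depth $N^\prime$.

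The main obstacle is keeping the bound uniform in $N^\prime$: the total number of bad events grows roughly like $d^{2 N^\prime}$, so a naive symmetric LLL (with a single dependency degree $D$) collapses. The resolution is exactly the choice of weights decaying exponentially in the error magnitude $\ell$: the exponential decay of $\Pr[B_{a,a^\prime}]$ in $\ell$ is balanced against the exponential growth of the neighbour count in $\ell^\prime$, causing the relevant LLL sum to telescope to a constant independent of depth. This balancing is what produces the $1/(1-\alpha)$ exponent in the alphabet size and what makes a single finite alphabet sufficient for tree codes of arbitrarily long protocols.
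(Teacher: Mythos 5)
The paper does not prove this lemma; it invokes it directly from \cite{Sch96}, where it is established by an argument that is not the Lov\'asz Local Lemma. Your LLL route is a well-known alternative strategy for tree-code existence, and your setup and the probability estimate $\Pr[B_{a,a'}] \le (2^{H(\alpha)}/|\Sigma|^{1-\alpha})^{\ell}$ are sound. The gap is in the dependency count. Fix one of the $2\ell$ diverging nodes $v$ of $B_{a,a'}$ and count the magnitude-$\ell'$ bad events $B_{b,b'}$ that contain $v$ among their $2\ell'$ diverging nodes. After choosing the relative depth $j\in\{1,\dots,\ell'\}$ of $v$ below the LCA of $(b,b')$, you must still extend the branch \emph{through} $v$ for another $\ell'-j$ levels (about $d^{\ell'-j}$ ways), in addition to choosing the partner branch (about $d^{\ell'}$ ways). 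Summing over $j$, the per-node count is $\Theta(d^{2\ell'})$ and the total is $\Theta(\ell\,d^{2\ell'})$, not $O(\ell\,\ell'\,d^{\ell'})$: your estimate drops the $d^{\ell'-j}$ factor entirely. A concrete check: for $d=3$, $\ell=1$, $\ell'=3$, the true neighbour count is $387$, while your formula gives $81$.

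This is not a benign constant. With the corrected count, the convergence of $\sum_{\ell'}(\text{\# neighbours at magnitude }\ell')\cdot x_{\ell'}$ forces the weights to decay like $x_{\ell'}\lesssim d^{-2\ell'}$, and the LLL hypothesis then reads $2^{H(\alpha)}/|\Sigma|^{1-\alpha}\lesssim d^{-2}$, i.e.\ $|\Sigma|=\Theta\bigl(d^{2/(1-\alpha)}\bigr)$. That is qualitatively weaker than the $\Theta\bigl(d^{1/(1-\alpha)}\bigr)$ asserted in the lemma, and because the discrepancy sits in the exponent it is not absorbed by the leading factor of $2$ or by rounding; in the paper's regime $\alpha = 1-\epsilon_\alpha$ with $\epsilon_\alpha$ small it is severe. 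So your argument, once the count is repaired, does give the qualitatively important conclusion that a finite alphabet independent of $N'$ suffices, but it does not recover the specific alphabet size stated in Lemma~\ref{lem:tccode}; that constant requires Schulman's original, more delicate counting, not a straightforward LLL over pairwise bad events.
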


In fact, the result due to Schulman is even stronger: there exists an 
unbounded depth tree code with $\Sigma$ of the size discussed above. This 
stronger result could be useful in the case in which the number of rounds 
$N^\prime$ is not bounded at the beginning of the protocol, and it has been 
used to authenticate streams of classical data in Ref.~\cite{FGOS12}.

The distance property of tree codes ensures the following: if in 
round $t$ the decoding is good for the first $t-\ell$ messages sent 
$(\ell \geq 0)$, but wrong for the message sent in round $t-\ell+1$ (and 
possibly also for some other messages), then the re-encoding of the 
sequence of decoded messages must be distinct from the transmitted one in 
at least $\alpha \cdot \ell$ positions in the last $\ell$ rounds. Then, 
incorrect decoding (i.e., decoding to a message different from the one
encoded) implies that there were at least $\tfrac{1}{2} 
\cdot \alpha \cdot \ell$ transmission errors during those rounds, 
independent of what was sent in the first $t-\ell$ rounds. More 
precisely, given a transmitted message $\bar{a} \in [d]^t$, encoded as 
$\bar{e} = \bar{\msE}(\bar{a}) \in \Sigma^t$, received as $\bar{e}^{\prime 
\prime} \in \Sigma^t$, and decoded as $\bar{a}^\prime = \msD(\bar{e}^{\prime 
\prime}) \in [d]^t$, with $\bar{e}^\prime = \msE(\bar{a}^\prime)$, if we 
have $a_1 \cdots a_{t-\ell} = a_1^\prime \cdots a_{t-\ell}^\prime$ but 
$a_{t-\ell+1} \not= a_{t-\ell+1}^\prime$, i.e., $L(a, a^\prime) = \ell$, 
then $\Delta(\bar{e}, \bar{e}^\prime) \geq \alpha \cdot \ell$ and 
$\Delta(e_{t-\ell+1} \cdots e_t, e_{t-\ell+1}^{\prime \prime} 
\cdots e_t^{\prime 
\prime}) \geq \tfrac{1}{2} \cdot\alpha \cdot \ell$. (Note that $e_1 \cdots 
e_{t-\ell} = e_1^\prime \cdots e_{t-\ell}^\prime$). This property is
extremely useful for interactive communication: even if the decoding of a 
message is incorrect in some round, if there are sufficiently many error-free 
subsequent transmissions, we can later correct that error. This self-healing
property is essential to our analysis of the simulation protocol, and to our 
proof of 
\cref{lem:optcor}.

\subsubsection{Blueberry Codes}
	\label{sec:bbc}

Another kind of online code we need in order to withstand the highest possible 
error rates are randomized error detection codes called blueberry codes 
in Ref.~\cite{FGOS12}. To use these, Alice and Bob encode and decode messages 
with a shared secret key in a way that weakly  authenticates and encrypts 
each message, and in this way adversary Eve cannot apply a corruption 
of her choosing. Such codes unknown to the adversary were termed private 
codes in Ref.~\cite{Lan04}. At best, with some small (but constant) 
probability Eve is able to corrupt a message in such a way that Alice and 
Bob do not detect it, and this results in an effective decoding error, but 
most of the time a corruption of Eve results in an effective erasure 
decoding. Since the tree code property, and hence also its decoding, is 
defined in terms of Hamming distance, transmission errors are twice as harmful 
as erasures in the tree decoding. (We can view the erasure flag $\perp$ as 
a special symbol in $\Sigma$; although never used in the encoding, this
symbol helps 
in decoding.) When incorrect decoding occurs, the two parties might
perform  operations on the quantum registers that need to be corrected
later. On the other hand, when an erasure occurs, it is 
visible to the recipient and this prevents him from performing such incorrect
operations. Hence, concatenating a blueberry code with the tree 
code enables significant improvement in the allowed error rates.

These blueberry codes were defined in Ref.~\cite{FGOS12} for the purpose of 
authenticating streams of classical messages and for the simulation of 
interactive classical protocols. Below we summarize their
definition and important properties.

\begin{definition}
	\emph{(blueberry codes \cite{FGOS12})} For $i \geq 1$ let $\msB_i : \Gamma 
	\rightarrow \Gamma$ be a random and independent permutation. The 
	\emph{blueberry code} maps a string $e \in \Sigma^t \subset 
	\Gamma^t$ of arbitrary length $t$ to $\msB(e) = \msB_1(e_1) \msB_2(e_2) 
	\cdots \msB_t(e_t)$. We denote such a code as $\msB: \Sigma^* 
	\rightarrow \Gamma^*$, and define the erasure parameter of this 
	code as $\beta = 1 - \tfrac{|\Sigma| - 1}{|\Gamma| - 1}$, and its 
	complement $\epsilon_\beta = 1 - \beta = \tfrac{|\Sigma| - 
	1}{|\Gamma| - 1}$.
\end{definition}

\begin{definition}
	Assume that at some time $i$, $d_i = \msB_i (e_i)$ is transmitted and 
	$d_i^\prime \not= d_i$ is received. If $d_i^\prime 
	\not\in \msB_i (\Sigma)$, we mark the transmission as an erasure, and the 
	decoding algorithm (for the Blueberry code) outputs $\perp$. 
	Otherwise, this event is called an error.
\end{definition}

\begin{corollary}
	Let $e \in \Sigma^t$ and assume $\msB(e)$ is communicated over a 
	noisy channel. Every symbol corrupted by the channel causes 
	either an error with probability $\epsilon_\beta$, or an erasure 
	with probability $\beta$.
\end{corollary}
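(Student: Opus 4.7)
The plan is to unpack the definitions of the blueberry code and compute, round by round, the conditional probability that a corrupted symbol falls on an element of $\Sigma$ versus outside $\Sigma$, where the randomness is taken over the secret permutation $B_i$. Fix a round $i$ and let $e_i \in \Sigma$ be the symbol to be transmitted. The sender computes $d_i = B_i(e_i)$ and transmits it; the channel (or adversary) substitutes some $d_i' \neq d_i$, possibly adaptively based on its view, which consists of $d_i$ together with side information drawn independently of the secret key.

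The crucial observation is that $B_i$ is a uniformly random permutation of $\Gamma$ drawn independently of everything else, and the adversary learns at most the single value $B_i(e_i) = d_i$. Conditioned on this view, the restriction of $B_i$ to $\Gamma \setminus \{e_i\}$ is a uniformly random bijection onto $\Gamma \setminus \{d_i\}$; equivalently, $B_i^{-1}$ restricted to $\Gamma \setminus \{d_i\}$ is a uniformly random bijection onto $\Gamma \setminus \{e_i\}$. Hence, for any $d_i' \in \Gamma \setminus \{d_i\}$ picked as a function of the adversary's view, the preimage $B_i^{-1}(d_i')$ is distributed uniformly over $\Gamma \setminus \{e_i\}$.

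From here the counting is immediate: since $e_i \in \Sigma$, one has $|\Sigma \setminus \{e_i\}| = |\Sigma| - 1$ and $|\Gamma \setminus \{e_i\}| = |\Gamma| - 1$, so
\[
\Pr\bigl[B_i^{-1}(d_i') \in \Sigma\bigr] = \frac{|\Sigma| - 1}{|\Gamma| - 1} = \epsilon_\beta,
\]
which by definition is the probability of an error under the decoding rule, while the complementary event of probability $\beta = 1 - \epsilon_\beta$ is precisely an erasure.

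The only (mild) point to verify is that independence of $B_i$ from the adversary's choice of corruption persists even when the adversary is adaptive, both within a round and across rounds. This holds because the permutations $B_1, B_2, \ldots$ are drawn independently and kept secret from Eve, so the per-round analysis above applies irrespective of her strategy. I expect no real obstacle in this proof: once the uniform-random-bijection structure is extracted and the correct conditioning is identified, the statement reduces to a one-line probability count.
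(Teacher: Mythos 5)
Your proof is correct and takes essentially the same route as the source: the paper itself imports this corollary from \cite{FGOS12} without proof, but the identical argument---the adversary's substitution $d_i' \neq d_i$ is chosen independently of the fresh secret permutation $B_i$, so conditioned on her view $B_i^{-1}(d_i')$ is uniform on $\Gamma \setminus \{e_i\}$ and lands in $\Sigma$ with probability $\frac{|\Sigma|-1}{|\Gamma|-1} = \epsilon_\beta$, with erasure probability $\beta$ as the complement---is exactly what the paper invokes in the proof of Corollary \ref{cor:optcor}. The adaptive-adversary point you flag is handled just as you say, via the independence and single use of each $B_i$, so nothing is missing.
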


\begin{lemma}
\label{lem:bbc}
	Assume a blueberry code $\msB : \Sigma^* \rightarrow \Gamma^*$ is 
	used to transmit a string $e \in \Sigma^t$ over a noisy channel. 
	For any constant $0 \leq c \leq 1$, if the channel's corruption 
	rate is $c$, then with probability $1 - 2^{- \Omega(t)}$ at least 
	a $(1 - 2 \epsilon_\beta)$-fraction of the~$ct$ \emph{corrupted\/} transmissions are 
	marked as erasures.
\end{lemma}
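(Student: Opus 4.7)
The plan is to reduce the claim to a standard concentration bound applied to the conditionally independent erasure/error outcomes produced by the random permutations $B_i$. First I would establish the per-symbol probability. Fix any round $i$ that the adversary chooses to corrupt, and condition on the entire history up to but not including round $i$. Alice transmits $d_i = B_i(e_i)$ and the channel replaces it by some $d_i^\prime \neq d_i$. Since $B_i$ is a uniformly random permutation of $\Gamma$, chosen independently of all $B_j$ with $j \neq i$ and of everything prior to round $i$, the conditional distribution of $B_i^{-1}(d_i^\prime)$ given $(e_i, d_i^\prime)$ is uniform over $\Gamma \setminus \{e_i\}$. Hence the corruption is flagged as an erasure (i.e.\ $B_i^{-1}(d_i^\prime) \in \Gamma \setminus \Sigma$) with probability exactly $\frac{|\Gamma|-|\Sigma|}{|\Gamma|-1} = \beta$, and it is an undetected error with the complementary probability $\epsilon_\beta$. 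Critically, this outcome is independent of the adversary's adaptive choices because $B_i$ is fresh randomness unseen by the channel.

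Second, let $k$ be the number of corrupted positions, indexed $i_1 < \cdots < i_k$. Without loss of generality $k = ct$ (if the adversary uses fewer than $ct$ corruptions, she can fill her budget with arbitrary extra corruptions without decreasing the erasure count, which only weakens the conclusion we want). Let $X_j \in \{0,1\}$ denote the indicator that position $i_j$ is marked as an erasure. By the previous step, conditional on all the randomness revealed before round $i_j$, the variable $X_j$ is Bernoulli$(\beta)$. Consequently the partial sums $M_\ell = \sum_{j \leq \ell} (X_j - \beta)$ form a martingale with increments bounded in $[-1,1]$ against an adaptive (possibly entangled) adversary, because the randomness injected at step $j$ by $B_{i_j}$ is independent of $M_1, \ldots, M_{\ell-1}$ and of the adversary's record of past transmissions.

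Third, I would apply Azuma-Hoeffding with deviation $\epsilon_\beta ct$ to obtain
\begin{equation*}
    \Pr\Bigl[ \sum_{j=1}^{k} X_j < k\beta - \epsilon_\beta c t \Bigr] \leq \exp\bigl( -2 \epsilon_\beta^2 c t \bigr) = 2^{-\Omega(t)},
\end{equation*}
treating $c$ and $\epsilon_\beta$ as absolute constants. With $k = ct$, the event $\sum_j X_j \geq c\beta t - \epsilon_\beta c t = c(\beta - \epsilon_\beta)t = c(1 - 2\epsilon_\beta)t$ thus holds with probability at least $1 - 2^{-\Omega(t)}$, yielding the claimed erasure fraction. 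The main delicacy is handling the adaptive adversary rather than an i.i.d.\ channel, but this is resolved by the observation that the blueberry randomness is injected per-round and remains independent of the adversary's view, which is exactly what makes the per-symbol erasure probability equal to $\beta$ regardless of the adversary's history and what lets us invoke Azuma-Hoeffding rather than a pure Chernoff bound.
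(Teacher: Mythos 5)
Your argument is correct, but note that the paper does not actually prove Lemma \ref{lem:bbc}: it is quoted verbatim from \cite{FGOS12} (``The authors gave the following definition for them and proved the following properties''), so there is no internal proof to compare against. Your route is the natural one and matches in spirit what \cite{FGOS12} does and what the present paper later relies on in the proof of Corollary \ref{cor:optcor}: the key point, which you state correctly, is that conditioned on the adversary's entire view (including $d_i = B_i(e_i)$ and all earlier rounds), the restriction of $B_i$ to $\Gamma \setminus \{e_i\}$ is a uniformly random bijection onto $\Gamma \setminus \{d_i\}$, so any substitution $d_i' \neq d_i$ decodes into $\Sigma$ with probability exactly $\epsilon_\beta = \frac{|\Sigma|-1}{|\Gamma|-1}$, independently of the adversary's adaptive strategy; this is precisely the independence statement the paper invokes when it writes $\Pr[B_i^{-1}(g_i') \in \Sigma \mid z_1, \cdots, z_i] = \epsilon_\beta$. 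Your explicit martingale/Azuma treatment of the adaptively chosen corruption positions is in fact a bit more careful than a bare i.i.d.\ Chernoff bound, and it buys a clean handling of adaptive (even entanglement-assisted) adversaries. Two small blemishes, neither affecting correctness of the $2^{-\Omega(t)}$ claim: the exponent you quote, $\exp(-2\epsilon_\beta^2 c t)$, is the Hoeffding constant for independent summands, whereas Azuma--Hoeffding with increments in $[-1,1]$ over $k = ct$ steps gives $\exp(-\epsilon_\beta^2 c t/2)$; and the ``fill her budget'' reduction to $k = ct$ is unnecessary, since the lemma fixes the corruption rate at $c$ (and for $c = 0$ the statement is vacuous).
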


\begin{corollary}
	If out of $t$ received transmissions, $ct$ were marked as 
	erasures while decoding a blueberry code $\msB : \Sigma^* \rightarrow \Gamma^*$, 
	then except with probability $2^{- \Omega(t)}$ over the shared 
	randomness, the adversarial corruption rate is at most $c / (1 - 
	2 \epsilon_\beta)$.
\end{corollary}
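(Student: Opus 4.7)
The plan is to derive this corollary as essentially the contrapositive of Lemma~\ref{lem:bbc}. Let $c'$ denote the true adversarial corruption rate, namely the fraction of the $t$ positions on which the channel output $d_i'$ differs from the blueberry-encoded symbol $d_i$. I want to argue that observing exactly $ct$ erasures forces $c' \leq c/(1-2\epsilon_\beta)$, except on a set of shared randomness of measure $2^{-\Omega(t)}$.

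Concretely, I would apply Lemma~\ref{lem:bbc} with corruption rate $c'$: except with probability $2^{-\Omega(t)}$ over the random permutations $B_i$, at least a $c'(1-2\epsilon_\beta)$-fraction of the transmissions get marked as erasures. Now suppose, for contradiction, that $c' > c/(1-2\epsilon_\beta)$. Then $c'(1-2\epsilon_\beta) > c$, so with probability $1-2^{-\Omega(t)}$ the erasure fraction strictly exceeds $c$, contradicting the hypothesis that exactly $ct$ transmissions were flagged as erasures. Taking the contrapositive, conditioning on the observed erasure count $ct$ forces $c' \leq c/(1-2\epsilon_\beta)$ except on the bad event of probability $2^{-\Omega(t)}$ provided by Lemma~\ref{lem:bbc}.

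The main subtlety I anticipate is that $c'$ is not a deterministic constant: the adversary may be adaptive, and her choice of which positions to corrupt can depend on the transcript she has observed, which in turn depends on the shared randomness. I would handle this by a simple discretization and union bound: since the number of corrupted positions $c' t$ is an integer in $\{0, 1, \ldots, t\}$, one can apply Lemma~\ref{lem:bbc} at each of the finitely many threshold values of $c'$ and take a union bound. The resulting failure probability $(t+1)\cdot 2^{-\Omega(t)}$ is still $2^{-\Omega(t)}$, so the bound holds uniformly over all adversarial strategies, giving the corollary. The only part requiring genuine work is therefore already encapsulated in Lemma~\ref{lem:bbc}; the corollary itself is a one-line rearrangement plus this union bound.
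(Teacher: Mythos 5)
Your proof is correct and takes the natural route: the corollary is indeed the contrapositive of Lemma~\ref{lem:bbc}, and the paper itself presents it without proof as a direct consequence. You have also correctly identified the one technical wrinkle — that the adversary's corruption rate is a random variable, not a fixed constant — and the union bound over the $t+1$ possible integer corruption counts is the standard clean way to discharge it while keeping the failure probability $2^{-\Omega(t)}$. This matches the spirit of how the paper later uses this corollary (in Corollary~\ref{cor:optcor}, where the independence of Eve's actions from the blueberry randomness $B_i$ is invoked explicitly to justify concentration), so no gap remains.
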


\section{Basic Simulation Protocol}
	\label{sec:bas}

We start by describing a basic simulation protocol, which achieves our
first goal of simulating quantum protocols with asymptotically positive 
communication and tolerable error rates, and with entanglement consumption 
linear in the communication.
This provides an interactive analogue of a family of good quantum 
codes. This protocol contains the essential ideas of the optimal protocol of 
\cref{sec:opt}, but the description and analysis are simplified
because we do not have the additional blueberry code layer. Moreover,
this protocol succeeds with perfect fidelity, provided the number of
errors is below a certain threshold.


\subsection{Result}

We focus on the shared entanglement model. Techniques to distribute 
entanglement in both random \cite{Lloyd97, Shor02, Dev05}
 and adversarial \cite{CRSS98, FM04, Rai99} error models are 
well studied. We can combine our findings with these entanglement 
distribution techniques to translate results in the shared entanglement model
to the quantum model. 
We first focus on an adversarial error model, and then adapt 
these results to a random error model. Such extensions to 
other communication models are explored in 
\cref{sec:oth}. For 
the basic simulation protocol described in this section, entanglement is 
only used to teleport the quantum information back and forth between the 
two parties. In 
\cref{sec:opt}, we show how to tolerate maximum 
error rates by also using entanglement to generate a shared secret key 
unknown to the adversary, thus enabling the two honest parties to detect 
most adversarial errors as effective erasures.

Given an adversarial channel in the shared entanglement model with 
low enough error rate, we show how to simulate perfectly any noiseless 
protocol of length $N$ over this channel using a number of transmissions 
linear in $N$, and consuming a linear number of EPR pairs. 
More precisely, we prove the following.
(See~\cref{sec:nqucommsh} for the definition 
of~$\A_{\delta, q, N^\prime}^\rS \; $ which is mentioned in the theorem.)
\begin{theorem}
\label{th:bas}
There exist a constant error rate $\delta > 0$, communication 
rate $R_\rC > 0$, 
transmission alphabet size $q \in \mathbb{N}$, and entanglement 
consumption rate $R_\rE \in \mathbb{R}^+$ such that for all noiseless 
protocol lengths $N \in 2 \mathbb{N}$, there exists a universal simulator 
$S$ in the shared entanglement model of length $N^\prime$, with 
communication rate at least $R_\rC$, transmission alphabet size $q$, 
entanglement consumption rate at most $R_\rE$, which succeeds with zero 
error at simulating all noiseless protocols of length $N$ against 
all adversaries in $\A_{\delta, q, N^\prime}^\rS$.
\end{theorem}
Specific values for the constants posited in the theorem are given at
the end of 
\cref{sec:anal}.

\subsection{Intuition for the Simulation Protocol}
	\label{sec:int}

Before describing in detail the basic simulation protocol, first we
give some intuition on how it succeeds in simulating a noiseless quantum 
protocol over a noisy channel. The strategy to avoid losing the quantum 
information in the communication register over the noisy channel is to 
teleport the $C$ register of the noiseless protocol back and forth into 
Alice's $C_\sA$ register and Bob's $C_\sB$ register, creating a virtual $C$ 
register which is either in Alice's or in Bob's hand. They use the shared 
entanglement in $T_\sA T_\sB$ to do so, and use the noisy 
classical channels to transmit their teleportation measurement outcomes. 
Whenever Alice possesses the virtual $C$ register she can try to evolve 
the simulation of the noiseless protocol by applying one of her noiseless 
protocol unitary operators on the virtual $AC$ register, and this
applies similarly for Bob on 
the virtual $BC$ register. If they later realize that there has been 
some error in the teleportation decoding, they might have to apply 
inverses of these operations, but overall, everything acting on the 
virtual $ABC$ quantum register can be described as an intertwined 
sequence of Pauli operators acting on the $C$ register and noiseless 
protocol unitary operators (and their inverses) acting on the $AC$ and the $BC$ 
registers. There are two important points to notice here. First, the 
sequence of operations acting on the joint register is a sequence of 
reversible unitary operators. 
Hence, if the parties keep track of the sequence of operations on the 
joint register, then at least one of the parties can reverse any of his/her 
operations when he/she is in possession of the virtual $C$ register. Second, 
both parties know the order in which these operators have been applied 
while only one knows exactly which operator was applied: for Pauli operators, 
both parties know $\pm \rX^x \rZ^z$ is applied at some point, but only one 
knows the correct value of $x z \in \{0, 1 \}^2$, and similarly both 
know that $U_j^M$ (with $U_j^{+1} = U_j$, $U_j^{-1} = U_j^\dagger$, $U_j^0 = \rI$) 
is applied at some point, but only one knows the correct values of $j 
\in \{1, \dotsc, N^\prime + 1 \}$ and $M \in \{-1, 0, +1 \}$. This is the 
classical information they try to transmit to each other so that both
know exactly the sequence of operations that have been applied on the joint 
register. The tree codes due to Schulman are particularly 
well suited for protecting against noise in this interactive scenario.

More concretely, in each round the parties first need to decode the 
teleportation before trying to evolve the simulation of the quantum 
protocol and finally teleporting back the communication register to the 
other party. The goal is for each party to know his/her exact position
in the simulation of the protocol (i.e., the sequence of unitary
operators
that have been applied to the virtual protocol registers) 
when they are able to correctly 
decode the classical messages sent by the other party. 
To enable a party to learn exactly what action was taken by the other 
party in the earlier rounds, the message sent in each round is in $\{0, 1 
\}^2 \times \{-1, 0, +1\} \times \{0, 1 \}^2$, encoded with a tree code. 
The first pair of bits corresponds to 
the teleportation decoding operation done at the beginning of a party's 
turn. The trit is associated with the evolution in the noiseless 
protocol: $+1$ stands for going forward with the protocol, i.e., for a
unitary operator of 
the noiseless protocol that was applied to the joint state of the party's 
local register and the communication register; $-1$ stands for going 
backward with the protocol, i.e., for the inverse of a unitary operation of 
the noiseless protocol that was
applied by that party to the joint state; and $0$ stands for 
holding the protocol idle, i.e., no action is taken by that party to evolve the 
protocol in that round. Note that the index $j$ of the unitary operator $U_j^M$ 
that a party applies can be computed solely from the sequence of trits sent by 
that party, and such an explicit calculation is defined in the simulation 
description. Finally, the last pair of bits corresponds to the outcome of 
the measurement in the teleportation of the communication register, which
enables the other party to correctly decode the teleportation.

For each party, we call his/her \emph{history} at some point the sequence of 
these triplets of messages that he/she transmitted up to that point (see 
\cref{sec:clcomm}). If a party succeeds in correctly decoding the history of 
the other party, he/she then possesses all the information about the 
operations that were applied on the joint quantum register and can choose his/her 
next move accordingly.

Note that the information about which Pauli 
operator was used to decode the teleportation might appear 
redundant, it is not when there are decoding errors. 
This is a subtle and important point, 
so let us explain in more detail what we mean. In the case of decoding errors,
the wrong Pauli operator might be 
applied to do the teleportation decoding. Even though the party who 
applied the wrong Pauli operator will later realize his/her mistake (when the self-healing property of the
tree code eventually enables him/her to decode this message correctly), the other 
party still needs to be informed of this previous error in decoding.  Sending the 
information about which Pauli operator was used to do the teleportation 
decoding accomplishes this and even enables the other party to 
correct this wrong teleportation decoding if needed. 
Indeed this property has an essential use, especially in the simulation for
maximal error tolerance in 
\cref{sec:opt}. In more detail, when a corruption is detected as an erasure,
the teleportation decoding operation applied is the trivial one.
This is wrong three-quarters of the time on average.
Another approach that would also work would be to 
let the other party know what information was received, 
and then let each party correct for his/her own previous decoding error. 
The problem with this is that the tolerable error rate would 
have to be much lower than $\frac{1}{2} - \epsilon$: in the terms used in the analysis, 
we would need a good round to recover from an erasure round, which is undesirable.

\subsection{Description of the Simulator}

All communication is done with a tree encoding over some alphabet 
$\Sigma$. To later simplify the analysis, we fix the distance parameter 
to $\alpha = \tfrac{39}{40}$. The message set consists of $\{0, 1 \}^2 
\times \{-1, 0, +1 \} \times \{0, 1 \}^2 \cong [4] \times [3] \times [4] 
\cong [48]$, so we take arity $d=48$. Also, taking $N^\prime = 4 (1 + 
\tfrac{1}{N}) N$ is sufficient. By 
\cref{lem:tccode}, we know that 
there exists a $q \in \mathbb{N}$ independent of $N^\prime$ such that an 
alphabet $\Sigma$ of size $q$ suffices to label the arcs of a tree code 
of any depth $N^\prime \in \mathbb{N}$.  Before the protocol begins,
both parties agree on such a tree code of depth $N^\prime$ with 
corresponding encoding and decoding functions $\msE$ and $\msD$ (each party 
uses a separate instance of the same tree code to transmit her/his messages 
to the other party). The goal is to tolerate error rates up to $\delta = 
\tfrac{1}{80}$.

We use the following convention for the variables describing the protocol.
On Alice's side, in round $i$, $x_{i}^{\sAD} z_{i}^{\sAD} \in \{0, 1 \}^2$ 
correspond to the bits she uses for the teleportation decoding on the $\rX$ 
and $\rZ$ Pauli operators, respectively; $x_{i}^{\sAM} z_{i}^{\sAM} \in \{0, 1 \}^2$ 
correspond to the bits of the teleportation measurement on the 
corresponding Pauli operators; $j_{i}^{\sA} \in \mathbb{Z}$ and $M_{i}^{\sA} \in 
\{-1, 0, +1 \}$ correspond, respectively, to the index of the unitary
operator she 
uses in round $i$ and to whether she uses $U_{j_{i}^{\sA}}^{+1} = U_{j_{i}^{\sA}}$
or its 
inverse $U_{j_{i}^{\sA}}^{-1} = U_{j_{i}^{\sA}}^\dagger$ or simply applies the 
identity channel $U_{j_{i}^{\sA}}^0 = \rI$ on the $AC$ quantum register; and the 
counter $c_{i}^{\sA}$ keeps track of the sum of all previous messages $M_{\ell 
}^{\sA}$, $l \leq i$. On Bob's side, we use a similar set of variables, 
with superscript $\sB$ instead of $\sA$. All Pauli operators are applied on 
the virtual $C$ register.
When discussing variables obtained 
from decoding in round~$i$, a superscript $i$ is added to account for the 
fact that this decoding might be wrong and could be corrected in later 
rounds. Similarly, a superscript $i$ is used when discussing other variables that are 
round-dependent.

\subsubsection{Representations of the Joint State}

The actions taken by Alice and Bob round~$i$ are based on their best
guesses for the state $\ket{\psi_i}$ of the joint register at the 
beginning of round~$i$. (Note that $\ket{\psi_1} = 
\ket{\psi_\mathrm{init}}$ is the initial state in the protocol being
simulated.) The state~$\ket{\psi_i}$ can be classically 
computed from the information in Alice's and Bob's histories; due to noise, it 
is generally unknown, at least in part, to Alice and Bob.
The analysis rests on the following two representations for the 
state~$\ket{\psi_i}$. The first can be directly 
computed, up to irrelevant operations of Eve on the $E$ register, as
\begin{align}
\label{eq:psii}
	\ket{\psi_i}^{ABCER}  = 
		\prod_{\ell=1}^{i-1}  \left( \rX^{x_{\ell}^{ \sBM}} \rZ^{z_{\ell}^{ \sBM}} 
		U_{j_{\ell}^{ \sB}}^{M_{\ell}^{ \sB}} \rZ^{z_{\ell}^{ \sBD}} \rX^{x_{\ell}^{ \sBD}} 
		\rX^{x_{\ell}^{ \sAM}} \rZ^{z_{\ell}^{ \sAM}} 
		U_{j_{\ell}^{ \sA}}^{M_{\ell}^{ \sA}} \rZ^{z_{\ell}^{ \sAD}} \rX^{x_{\ell}^{ \sAD}} \right)
		\ket{\psi_{\mathrm{init}}}^{ABCER}.
\end{align}
Here, from the history $s_\sA$ of Alice's history tree, we can directly obtain 
from the $\ell$th message sent by Alice, for $\ell = 1 \cdots i-1$, the 
two bits $x_{\ell}^{ \sAD} z_{\ell}^{ \sAD}$ used to decode the teleportation, the 
trit $M_{\ell}^{ \sA}$ corresponding to the evolution of the protocol 
performed in round $\ell$, and then the two bits $x_{\ell}^{ \sAM} z_{\ell 
}^{\sAM}$ corresponding to the outcome of the teleportation measurement. We 
then use counters $c_{\ell}^{ \sA}$'s that maintain the sums of the $M_{\ell 
}^{\sA}$'s to compute the indices $j_{\ell}^{ \sA}$'s of the noiseless protocol 
unitary operators used by Alice in round~$\ell$: $c_{0}^{\sA} = 0, c_{\ell}^{ \sA} = 
c_{(\ell-1)}^{\sA} + M_{\ell}^{ \sA}, j_{\ell}^{ \sA} = 2 c_{(\ell-1)}^{\sA} + M_{\ell}^{ \sA}$. 
Note that $j_{i}^{\sA}$ depends only on the sequence of messages $M_{1}^{\sA}, 
M_{2}^{\sA}, \dotsc , M_{(i-1)}^{\sA}, M_{i}^{\sA}$. Similarly, the history $s_\sB$ of Bob's 
history tree is used to obtain $x_{\ell}^{ \sBD} z_{\ell}^{ \sBD}, x_{\ell}^{ \sBM} 
z_{\ell}^{ \sBM}$, as well as $M_{\ell}^{ \sB}$, and to compute $c_{0}^{\sB} = 0, 
c_{\ell}^{ \sB} = c_{(\ell-1)}^{\sB} + M_{\ell}^{ \sB}, j_{\ell}^{ \sB} = 2 c_{(\ell-1)}^{\sB} + 
M_{\ell}^{ \sB} + 1$. We define $U_j^M = \rI$ whenever $j \leq 0$ or $M=0$. Note 
that if $M_\ell^\sA \not= 0$, $j_{\ell}^{ \sA}$ is odd and $U_{j_{\ell}^{ \sA}}^{M}$ 
acts on Alice's side. Similarly, if $M_\ell^\sB \not= 0$, $j_{\ell}^{ \sB}$ is even and $U_{j_{\ell}^{ \sB}}^{M}$ 
acts on Bob's side. Also note that $j \leq N^\prime + 1$, so the $U_j$'s 
are well-defined, by the noiseless protocol embedding described in
\cref{sec:nslss}.

This first representation of the form of the state $\ket{\psi_i}$ is not too informative in itself, but from it
we can classically compute a second representation by recursively cleaning 
it up.  The cleanup is performed by combining
as many of the operators as possible as follows: we multiply all consecutive
Pauli operators acting on the~$C$ register, and simplify consecutive pairs of
operators $U_\ell, U_\ell^{-1}$ acting on the same set of qubits,
to obtain a state of the form
\begin{align}
	\label{eq:coll}
\suppress{
	\ket{\psi_{i}}^{ABCE}  =
	 \hat{\sigma}^i \cdot \tilde{U}_{t_i}^i & \cdot 
	\tilde{\sigma}_{t_i}^i \cdot \tilde{U}_{t_i-1}^i \cdot 
	\tilde{\sigma}_{t_i-1}^i \cdots
		\tilde{U}_2^i \cdot \tilde{\sigma}_2^i \cdot \tilde{U}_1^i 
		& \cdot \tilde{\sigma}_1^i \cdot U_{r_i} \cdot U_{r_i-1} 
		\cdots U_2 \cdot U_1 \ket{\psi_{\mathrm{init}}}^{ABCE}
}
	\ket{\psi_{i}}^{ABCER}  =
	 \hat{\sigma}^i \; \tilde{U}_{t_i}^i 
	\; \tilde{\sigma}_{t_i}^i  \; \tilde{U}_{t_i-1}^i  
	\; \tilde{\sigma}_{t_i-1}^i \; \cdots
		\; \tilde{U}_2^i  \; \tilde{\sigma}_2^i  \; \tilde{U}_1^i 
		 \; \tilde{\sigma}_1^i \:   U_{r_i}  U_{r_i-1} 
		\cdots U_2  U_1 \ket{\psi_{\mathrm{init}}}^{ABCER}
\end{align}
with $\hat{\sigma}^i = \pm \rX^{\hat{x}^i} \rZ^{\hat{z}^i}$, and
for~$\ell \in \{1, \dotsc, t_i\}$,
$\tilde{\sigma}_\ell^i = \rX^{x_\ell^i} \rZ^{z_\ell^i}$ for $\hat{x}^i 
\hat{z}^i, x_\ell^i z_\ell^i \in \{0, 1 \}^2$, and $\tilde{U}_\ell^i = 
U_{\ell^\prime}^{\pm 1}$ for some $r_i - 2t_i \leq \ell^\prime \leq r_i + 
2t_i$. The rules used recursively to perform the cleanup are the 
following: in the case when $\tilde{\sigma}_\ell^i = \rI$, 
for two consecutive unitary operators acting on the same set of
qubits we require that if 
$\ell > 1$, then $\tilde{U}_{\ell}^i \not= (\tilde{U}_{\ell-1}^i)^{-1}$, 
and if $\ell=1$, then $\tilde{U}_1^i \not= U_{r_i+1}$ and~$\tilde{U}_1^i
\not= U_{r_i}^{-1}$. This last rule is 
what determines the cut between $U_{r_i}$ and $\tilde{U}_1^i 
\tilde{\sigma}_1^i$. The parameter $r_i$ determines the number of 
noiseless protocol unitary operators the parties have been able to successfully 
apply on the joint register before errors  arise, and the 
parameter $t_i$ determines the number of errors the parties have to 
correct before being able to evolve the state as in the noiseless protocol.
Note that this is 
well defined: there is a unique representation in the form 
\cref{eq:coll} corresponding to any in the form \cref{eq:psii}.
This second representation is thus powerful: it is the analogue in our
setting of the protocol tree representation of classical protocols, and
it enables us to precisely keep track of the evolution of the noiseless
protocol simulation. This is why Alice and Bob will always base their
actions on their best estimates of this representation.

\subsubsection{Choosing the Next Step}

To decide which action to take in round~$i$, Alice starts by decoding the 
possibly corrupted messages $f_1^\prime, \dotsc, f_{i-1}^\prime \in 
\Sigma$ received from Bob up to this point to obtain her best guess 
$s_\sB^i = \msD(f_1^\prime, \dotsc, f_{i-1}^\prime)$ for the history $s_\sB$ of 
his history tree. Along with the history $s_\sA$ of her history tree,
she uses this 
to compute her best guess of the form \cref{eq:coll} of the joint state. 
If her decoding of Bob's history is \emph{good} (error-free), then she has all the information 
she needs to compute the joint state $\ket{\psi_i}$. She can 
then choose the correct actions to evolve the simulation. She takes 
the following actions based on the assumption that her decoding is good. 
If it is not, errors might accumulate on the joint register $ABC$, which 
she will later have to correct.

Alice's next move depends on whether $t_i=0$ in her 
best guess for the state~$\ket{\psi_i}$.  If 
$t_i=0$, then she wishes to evolve the protocol one round further, if it 
is her turn to do so. That is, if $r_i$ is even, then Alice sets $M_{i}^{\sA} = 
+1$ to apply $U_{r_i + 1}^{AC}$, but if $r_i$ is odd, Bob should be the 
next to apply a unitary operator of the protocol, so she sets $M_{i}^{\sA} = 0$. If 
$t_i \not= 0$, then she wishes to correct the last error not yet 
corrected if she is the one who applied it. That is, if $\tilde{U}_{t_i} 
= U_{\ell^\prime}^{M^\prime}$ for $\ell^\prime$ odd, then she sets 
$M_{i}^{\sA}= - M^\prime \in \{ \pm 1 \}$ (note that in this case it holds 
that $j_{i}^{\sA} = \ell^\prime$); otherwise, she sets $M_{i}^{\sA} = 0$ and 
hopes that Bob will correct $\tilde{U}_{t_i}$. In all cases, with 
$\hat{\sigma}_i^C = \pm \rX^{\hat{x}_i} \rZ^{\hat{z}_i}$, she sets $x_{i}^{\sAD} = 
\hat{x}_i, z_{i}^{\sAD}=\hat{z}_i$ and computes $c_{i}^{\sA} = c_{(i-1)}^{\sA} + M_{i}^{\sA}, 
j_{i}^{\sA} = 2 c_{(i-1)}^{\sA} + M_{i}^{\sA}$. Note that she does 
not care about the 
global phase factor $\pm 1$ appearing in $\hat{\sigma}_i$ 
during the clean-up from the form \cref{eq:psii} to the form 
\cref{eq:coll}. This phase arises because the Pauli operators $\rX$ and $\rZ$ 
anticommute, and it is irrelevant.

After this classical preprocessing, she can now perform her quantum 
operations on the $AC$ registers: she first decodes the teleportation 
operation (and possibly some other Pauli errors remaining on the $C$ 
register) by applying $\rZ^{z_{i}^{\sAD}} \rX^{x_{i}^{\sAD}}$ on 
the $T_\sA^{2(i-1)}$ register before swapping registers $T_\sA^{2(i-1)}$ 
and $C_\sA$, effectively putting the virtual $C$ register into $C_\sA$. 
(Note that in round $1$, Alice already possesses the $C$ 
register so this part is trivial: we let $T_\sA^0 = C_\sA$ and 
set $x_{1}^{\sAD} z_{1}^{\sAD} = 00$.) She then 
performs $U_{j_{i}^{\sA}}^{M_{i}^{\sA}}$ on the virtual $AC$ register to try to 
evolve the protocol (or correct a previous error) before teleporting 
back the virtual $C$ register to Bob using the half of the entangled state in 
the $T_\sA^{2i-1}$ register, obtaining measurement outcome $x_{i}^{\sAM} z_{i}^{\sAM} 
\in \{0, 1 \}^2$. She updates her history $s_\sA$ by following the edge $a_i 
= (x_{i}^{\sAD} z_{i}^{\sAD}, M_{i}^{\sA}, x_{i}^{\sAM} z_{i}^{\sAM})$ in the history tree, and 
transmits message $e_i = \msE(a_1 \cdots a_i)$ over the noisy classical 
channel, with $\msE$ the encoding function of the tree code.

Upon receiving the message $e_i^\prime$, a possibly corrupted version 
of $e_i$, Bob obtains his best guess $s_\sA^i$ for Alice's history $s_\sA$ by 
computing, with previous messages $e_1^\prime \cdots e_{i-1}^\prime$, 
$s_\sA^i = \msD(e_1^\prime \cdots e_i^\prime)$. He uses this, along with his own 
history $s_\sB$, to compute his best guess of the representation of the
state
\begin{align}
 	\left( \rX^{x_{i}^{\sAM}} \rZ^{z_{i}^{\sAM}} U_{j_{i}^{\sA}}^{M_{i}^{\sA}} \rZ^{z_{i}^{\sAD}} 
	\rX^{x_{i}^{\sAD}} \right) \ket{\psi_i}
\end{align}
analogous to that in \cref{eq:psii}. He then cleans this up to obtain a 
representation analogous to that in \cref{eq:coll} and, based on this latest 
representation, chooses in the same way as Alice his $x_{i}^{\sBD} z_{i}^{\sBD}, 
M_{i}^{\sB}$, and then uses $M_{i}^{\sB}$ and $c_{i-1}^{\sB}$ to compute $c_{i}^{\sB}, j_{i}^{\sB}$. After this 
classical preprocessing, he can then perform his quantum operations: he 
first decodes the teleportation operation by applying $\rZ^{z_{i}^{\sBD}} 
\rX^{x_{i}^{\sBD}}$ on the $T_\sB^{2i-1}$ register and by swapping it with $C_\sB$, 
creating a virtual $C$ register, then performs $U_{j_{i}^{\sB}}^{M_{i}^{\sB}}$ on 
the virtual  $BC$ register to try to evolve the protocol before 
teleporting back the virtual $C$ register to Alice using the half of 
the entangled state in the $T_\sB^{2i}$ register, and obtains measurement 
outcome $x_{i}^{\sBM} z_{i}^{\sBM}$. He updates his history $s_\sB$ by following the 
edge $b_i = (x_{i}^{\sBD} z_{i}^{\sBD}, M_{i}^{\sB}, x _{i}^{\sBM} z_{i}^{\sBM})$, and transmits 
message $f_i = \msE(b_1 \cdots b_i)$ over the channel. The round is completed
when Alice receives message $f_i^\prime$, a possibly corrupted version of 
$f_i$. After the $\tfrac{N^\prime}{2}$ rounds, Alice and Bob take the 
particular registers $\tilde{A}, \tilde{B}$, and $\tilde{C}$ specified by 
the noiseless protocol embedding (see 
\cref{sec:nslss}) and use 
them as their respective outcomes for the protocol. If the simulation is 
successful, the output quantum state corresponds to the $ABC$ subsystem of 
$\ket{\psi_{\mathrm{final}}}^{ABCE}$ specified by the original noiseless 
protocol.  We later prove that the protocol is successful if the error 
rate is below $\tfrac{1}{80}$.

\subsubsection{Summary of Protocol}

We summarize the protocol below. Alice and Bob start with the
state~$\ket{\psi_\mathrm{init}}$ in the registers~$A B C_\sA E$, the
register~$C_\sB$ initialized to~$\ket{0}$, the registers~$T_\sA
T_\sB$ initialized to~$N^\prime$ EPR pairs~$\left[\tfrac{1}{\sqrt{2}}(\ket{00}
+ \ket{11})\right]^{\otimes {N^\prime}}$, with one qubit each from each
EPR pair held by Alice and
Bob, and the qubits in registers~$\tilde{A}, \tilde{B}, \tilde{C}$ initialized
to~$\ket{0}$ (cf. the noiseless protocol embedding described in
\cref{sec:nslss}). They also have access to a suitable amount of 
classical workspace for local computations required for the simulation.
They repeat the following for \mbox{$i = 1$, $\dotsc$, $\tfrac{N^\prime}{2}$:} 

\begin{enumerate}
 	\item If~$i > 1$, Alice computes $s_\sB^i = \msD(f_1^\prime \cdots 
		f_{i-1}^\prime)$, and  for $\ell=1, \dotsc, i-1$ she extracts $b_\ell^i = (x_{\ell}^{i \sBD} 
		z_{\ell}^{i \sBD}, M_{\ell}^{i \sB}, x_{\ell}^{i \sBM} z_{\ell 
		}^{i \sBM})$. These are her best 
		guesses for Bob's messages. She computes the corresponding 
                $c_{\ell}^{i \sB}, j_{\ell }^{i \sB}$. 
		For~$i = 1$, the
                values of the parameters Alice needs for the simulation
                are straightforward.

	\item Also using $s_\sA$, she computes her best guess for the form 
		\cref{eq:coll} of the state $\ket{\psi_{i}}$ of the 
		joint register and of the corresponding $x_{i}^{\sAD} z_{i}^{\sAD}$, $
		M_{i}^{\sA}$, $c_{i}^{\sA}$, $j_{i}^{\sA}$, described
                earlier in this section.
	\item If~$i > 1$, she completes the teleportation operation by applying $\rZ^{z_{i}^{\sAD}} 
		\rX^{x_{i}^{\sAD}}$ to register $T_\sA^{2(i-1)}$ and swaps this 
		with the $C_\sA$ register.
	\item She applies
		$U_{j_{i}^{\sA}}^{M_{i}^{\sA}}$ to the $AC_\sA$ register,
              in an attempt to evolve the original protocol.
	\item She teleports the $C_\sA$ register to Bob using 
		entanglement in register $T_\sA^{2i-1}$ and gets outcomes 
		$x_{i}^{\sAM} z_{i}^{\sAM}$.
	\item Alice updates her state $s_\sA$ by following edge $a_i = 
		(x_{i}^{\sAD} z_{i}^{\sAD}, M_{i}^{\sA}, x_{i}^{\sAM} z_{i}^{\sAM})$ and transmits 
		message $e_i = \msE(a_1 \cdots a_i)$ using the noisy
classical channel to 
		Bob, who receives $e_i^\prime$, a possibly corrupted 
		version of $e_i$.
	\item Bob computes $s_\sA^i = \msD(e_1^\prime \cdots e_i^\prime)$ and 
		also using $s_\sB$, performs actions analogous 
		to Alice's. He completes the teleportation
operation, swaps register $T_\sB^{2i-1}$ with 
		$C_\sB$, applies the appropriate unitary operation
to the register~$C_\sB B$, uses the $T_\sB^{2i}$ register to teleport 
		the $C_\sB$ register to Alice, and finally transmits $f_i$. 
		Round $i$ is completed when Alice receives
		$f_i^\prime$, a possibly corrupted version of $f_i$.
\end{enumerate}
After these $\tfrac{N^\prime}{2}$ rounds, both Alice and Bob extract their protocol 
outcomes from the $\tilde{A} \tilde{B} \tilde{C}$ registers specified by 
the noiseless protocol embedding.

\subsection{Analysis}
	\label{sec:anal}

The analysis is done conditioned on some overall classical state (and in particular, some respective views of Alice and Bob) at each round. By a view of Alice or Bob, we mean the transcript of messages sent and received. Moreover, if the adversary Eve has an adaptive, probabilistic strategy, we condition on some strategy based on the outcome of her previous measurements. We return to this issue later.

The total number of rounds is 
$\tfrac{N^\prime}{2}$, with two transmissions per round, for a total of 
$N^\prime$ transmissions.
We define two kinds of rounds: \emph{good} rounds in which both parties
correctly decode 
the each other's history, and \emph{bad\/} rounds, in which at least one 
party makes a decoding error. To analyse the protocol, 
we define a ``potential function'' $P(i) \in \mathbb{Z}$,
which increases at least by some (strictly positive) amount in good 
rounds, and decreases by at most some other (bounded) amount in bad 
rounds. The potential function is such that we know the simulation
succeeds whenever $P(\tfrac{N^\prime}{2} +1) \geq N + 1$.
Hence, it is sufficient to bound the ratio of 
good to bad rounds as a function of the error rate to prove the success 
of the simulation. 

Let us now define $P(i)$ more formally. To do so, we use the 
representation \cref{eq:coll} for the form of the quantum state of the 
joint registers at the beginning of round $i$ (or equivalently, at the 
end of round $i-1$). Recall that $r_i$ determines the number of noiseless 
protocol unitary operators that the parties have been able to successfully apply on 
the joint register before errors arise, and $t_i$ 
determines the number of errors that the parties have to correct before being 
able to resume the simulation. Define
\begin{equation}
	P(i) = r_i - 2t_i.
\end{equation}
The factor of $2$ in front of $t_i$ accounts for the worst-case
scenario for the simulation in round~$i$.
As will be apparent from our analysis below, in the worst case, 
all remaining $\tilde{U}_\ell^i$'s are applied by the same 
party who applied $U_{r_i-1}$ and $\tilde{U}_{t_i}^i = 
U_{r_i-1-2(t_i-1)}^{-1}$. 
Then, if $P(\tfrac{N^\prime}{2} + 1) \geq N + 1$,
the operators $\tilde{U}^i_\ell$ in \cref{eq:coll} at the end of the 
simulation (i.e., with~$i = N^\prime+1$) may only be equal to the identity 
operator, as ensured by the noiseless protocol embedding.
Thus the output of the simulation is correct.
We now prove the following technical lemma 
which bounds $P(i)$ as a function of the number of good and bad rounds.

\begin{lemma}
\label{lem:pi}
	At the end of round $i$, define
\begin{align*}
	N_\rg^i & \quad = \quad |\{j: j \leq i, \text{ round } j \text{ was good}\}|, \\
	N_\rb^i & \quad = \quad |\{j: j \leq i, \text{ round } j \text{ was bad}\}|.
\end{align*}
Then $P(i + 1) \geq N_\rg^i - 4 N_\rb^i$.
\end{lemma}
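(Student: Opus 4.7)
The plan is to prove the bound by induction on $i$. The base case $i = 0$ is immediate: initially $r_1 = t_1 = 0$ (the joint state is $\ket{\psi_{\mathrm{init}}}$), so $P(1) = 0 = N_g^0 - 4 N_b^0$. The inductive step reduces to proving two per-round claims: (a) in any good round $j$, $P(j+1) \geq P(j) + 1$; and (b) in any bad round $j$, $P(j+1) \geq P(j) - 4$. Summing these contributions over $j = 1, \ldots, i$ then gives the desired $P(i+1) \geq N_g^i - 4 N_b^i$.

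For (b), the key observation is that each party's total effect on the joint state in one round is a single operator of the form $X^{x} Z^{z} U_{j_P}^{M_P} Z^{z'} X^{x'}$, with Pauli factors on $C$ coming from teleportation decoding and teleportation measurement, sandwiching at most one protocol unitary. The Pauli factors are absorbed into the leading $\hat{\sigma}$ of the canonical form (\ref{eq:coll}) and affect neither $r$ nor $t$, while the single $U_{j_P}^{M_P}$ can alter the canonical form in exactly one of four ways: collide with the current top $\tilde{U}_{t}$ and decrement $t$ by $1$; collide with $U_{r+1}$ when $t=0$ and increment $r$ by $1$; stack on as a new $\tilde{U}$ at the top and increment $t$ by $1$; or do nothing when $M_P = 0$. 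The worst-case change in $P$ per party subaction is therefore $-2$, so even if both Alice and Bob take the worst possible action, $P$ decreases by at most $4$ over the round.

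For (a), I use the fact that in a good round both Alice and Bob correctly decode each other's state at the moment of their respective subactions, hence both act on the true canonical form of the current joint register. If $t_i = 0$, Alice's prescription is to set $M_{iA} = +1$ exactly when $r_i$ is even (advancing via the Alice-side $U_{r_i+1}$, increasing $r$ by $1$) and $M_{iA} = 0$ otherwise; a symmetric analysis for Bob on the post-Alice state shows $P(j+1) - P(j) \in \{1, 2\}$. If $t_i > 0$, the party on whose side the top operator $\tilde{U}_{t_i}^i$ lives inverts it, collapsing $t$ by $1$ and thus raising $P$ by $2$, while the other idles; a short case split on what the new top looks like after this collapse (and on the parity of $r_i$ in the subcase $t_i = 1$) shows the other party's subsequent subaction can only increase $P$ further. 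In all cases $P$ increases by at least $1$.

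The main obstacle I anticipate is bookkeeping in (a): one must verify that after the first party's subaction within a round, the resulting form still respects the canonicity constraints "$\tilde{U}_\ell \neq (\tilde{U}_{\ell-1})^{-1}$ for $\ell > 1$" and "$\tilde{U}_1 \neq U_{r+1}$", so that the second party, acting on correctly decoded information, correctly identifies whether the new top $\tilde{U}$ is on his side or whether $t$ has dropped to $0$ and it is his turn to advance. Since the form is \emph{defined} to be the collapsed representative of (\ref{eq:psii}), this is routine but is precisely where the tight "$+1$" in (a) (rather than a larger increment) comes from.
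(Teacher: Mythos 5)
Your proposal follows essentially the same route as the paper's proof: induction over rounds on the canonical form (\ref{eq:coll}), with good rounds raising $P$ by at least $1$ and bad rounds lowering it by at most $4$ (at most $2$ per party, since each party contributes a single protocol unitary sandwiched between Paulis). The only points the paper makes explicit that you leave implicit are the structural facts guaranteeing that the counter-derived index $j_{iA}=2C_{(i-1)A}+M_{iA}$ really names the top uncorrected operator $\tilde{U}_{t_i}^i$ (so that ``inverts it'' is justified), and your ``exactly four ways'' enumeration omits the harmless fifth possibility of cancelling $U_{r_i}$ itself (decreasing $r$ by $1$, also noted parenthetically in the paper), neither of which affects the validity of the bound.
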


\begin{proof}
We prove 
\cref{lem:pi} by induction. For the 
base case, $\ket{\psi_1} = \ket{\psi_\mathrm{init}}$, so $P(1) = 0$, and 
the statement holds. 

To get a flavor of the induction step, let us look 
at $P(2)$ at the end of round $1$. In round $1$, Alice applies $U_1$ 
and then teleports the virtual~$C$ register.
If Bob decodes the message correctly, he applies $U_2$ and 
teleports back the virtual register~$C$, leading to a joint 
state of the form~$\hat{\sigma} U_2 U_1 \ket{\psi_{\mathrm{init}}}$.
In this case $N_\rb^1 = 0$, so $P(2) = 2 \geq 1 = N_\rg^1$.
If there is a decoding error, at worst Bob applies the incorrect
Pauli operation to complete the teleportation step, and he still applies $U_2$.
The joint state is then of the form $\hat{\sigma} U_2 \tilde{\sigma} U_1 
\ket{\psi_{\mathrm{init}}}$. In this case $N_\rg^1 = 0$, and $P(2) 
= 1 - 2 = -1 \geq -4 = -4 N_\rb^1$.

For the induction step, given the state $\ket{\psi_{i}}$ at the end of round 
$i-1$, we consider two cases.  First, suppose that the $i$th round is good, so
that $N_\rg^{i} = N_\rg^{i-1} + 1$ and~$ N_\rb^{i} = N_\rb^{i-1}$.
Both Alice and Bob correctly reconstruct the state as
in~\cref{eq:coll}. If~$t_i = 0$, by the simulation rules, at least one 
of Alice or Bob can advance the original noiseless protocol, and
$t_{i+1} = t_{i} = 0$ and $r_{i+1} \geq r_{i} 
+ 1$. (If~$r_i$ is odd, only Bob advances the protocol, otherwise
both do.) If $t_{i} \geq 1$, again, at least one of Alice or Bob can
invert the unitary operation~$\tilde{U}_{t_i}^i$ (depending on the
parity of~$\ell$, where~$\tilde{U}_{t_i}^i = U_\ell^{\pm 1}$). 
Then $t_{i+1} \leq t_{i} - 1$, and $r_{i+1} \geq r_{i}$.
So in all cases 
\begin{align*}
	P(i + 1) &= r_{i+1} - 2 t_{i+1} \\
		& \geq r_{i} - 2 t_{i} +1 \\
		& = P(i) + 1  \\
		& \geq N_\rg^{i-1} - 4 N_\rb^{i-1} + 1  \\
		& = N_\rg^{i} - 4 N_\rb^{i} \enspace.
\end{align*}

In the second case, the~$i$th round is bad, so
that~$N_\rg^{i} = N_\rg^{i-1}$ and~$N_\rb^{i} = N_\rb^{i-1} + 1$. At
worst, both Alice and Bob decode the received messages incorrectly.
With an incorrect guess for the state in~\cref{eq:coll}, Alice's
actions in this round either decrease~$r_i$ by one, 
increase~$t_i$ by one, or leave both unchanged. The same holds for Bob.
At worst, $t_{i+1} = t_{i} + 2$ and~$ r_{i+1} = r_{i}$. The other eight
possibilities such as $t_{i+1} = t_{i} + 1, r_{i+1} 
= r_{i} - 1$, or $t_{i+1} = t_{i}, r_{i+1} = r_{i} -2$, lead to a smaller
decrease in the potential function~$P$. So
\begin{align*}
	P(i+1) & = r_{i+1} - 2 t_{i+1} \\
		&\geq r_{i} - 2 t_{i} - 4  \\
		&= P(i) - 4 \\
		&\geq N_\rg^{i-1} - 4 N_\rb^{i-1} - 4 \\
		& = N_\rg^{i} - 4 N_\rb^{i}.
\end{align*}
In all cases, $P(i+1) \geq N_\rg^i - 4 N_\rb^i$ which proves the claim. 
\end{proof}

\begin{corollary}
\label{cor:pi}
	If $P(\tfrac{N^\prime}{2} + 1) \geq N + 1$, then the simulation 
	succeeds with zero error.
\end{corollary}

\begin{proof}
	For notational convenience, in this proof let $r = 
	r_{\tfrac{N^\prime}{2} + 1}$, $t = t_{\tfrac{N^\prime}{2} + 1}$. 
	We also let the superscript $\tfrac{N^\prime}{2} + 1$ be implicit in 
	all of the operators $\tilde{U}_\ell^{\pm 1}$ that occur in the proof below.

	The only unitary operations from the original protocol that Alice 
applies are of the form~$U_\ell^{\pm 1}$ for odd~$\ell$. Moreover,
Alice knows her history at all times. Thus, even in a bad round~$i$, she
applies either~$U_{\ell+2}$, $\rI$, or~$U_\ell^{-1}$, where~$U_\ell$ is the
last unitary operation she applied in the representation~\cref{eq:coll}.
A similar statement holds for Bob. Thus, the subscripts in the original
protocol of two consecutive unitary operators applied by the same 
party in~\cref{eq:coll} do not differ by more than~$2$.

	We have $P(\tfrac{N^\prime}{2} + 1) = r - 2t \geq N + 1$, so $r 
	\geq N + 1 + 2t$ with $t \geq 0$. In particular, we have $r 
	\geq N + 1$. Once $U_{r}$ has been applied, the noiseless 
	protocol embedding ensures that the final state of the noiseless 
protocol  in registers $ABC$ is safely stored in local registers 
$\tilde{A} 
	\tilde{B} \tilde{C}$ that are never changed by $U_{N + 2} 
	\cdots U_{N^\prime + 1}$ or by the Pauli operations on the
virtual~$C$ register. It remains to be verified that
	all of the operators $\tilde{U}_\ell$, $0 \le \ell \le t$,
have indices strictly higher than $N + 1$.

The indices (in the original protocol) of the operators~$\tilde{U}_\ell$
applied by Alice may decrease 
	by at most two at once, and similarly for Bob. So the 
	worst case is if all of the operators $\tilde{U}_\ell$ are applied
by the same party, and 
	are inverses of the noiseless protocol unitary operators. Without loss of 
	generality, we consider only this case. If the party 
	who applied $U_r$ also applies all the operators $\tilde{U}_\ell$, then 
	$\tilde{U}_1 = U_r^{-1}$, $\tilde{U}_2 = U_{r-2}^{-1}$, $\dotsc$, $
	\tilde{U}_t = U_{r-2(t-1)}^{-1}$ and $r-2(t-1) > r-2t = 
	P(\tfrac{N^\prime}{2} + 1) \geq N + 1$. Thus the simulation
generates the correct output. Similarly,
	if the party who applied $U_{r-1}$ also applies all the operators
	$\tilde{U}_\ell$, then $\tilde{U}_1 = U_{r-1}^{-1}$, $\tilde{U}_2 = 
	U_{r-3}^{-1}$, $\dotsc$, $ \tilde{U}_{t} = U_{r-2t+1}$, and $r-2t+1 
	>r-2t = P(\tfrac{N^\prime}{2} + 1) \geq N + 1$. In all cases, 
	the safe registers $\tilde{A} \tilde{B} \tilde{C}$ 
	to be outputted by the parties contain the $ABC$ subsystem of 
	$\ket{\psi_\mathrm{final}}$ at the end of round 
	$\tfrac{N^\prime}{2}$ whenever $P(\tfrac{N^\prime}{2} + 1) \geq N + 
	1$.
\end{proof}

\suppress{
 	Let us adopt the following notation: 
$V_1 = U_1, V_2 = U_3, \cdots V_{\tfrac{N^\prime}{2}+1} = U_{N^\prime + 
1}$, i.e., the $V_m$'s are the $U_l$'s acting on Alice's side, and $W_1 = 
U_2, W_2 = U_4, \cdots W_{\tfrac{N^\prime}{2}} = U_{N^\prime}$, i.e., the $W_m$'s are the $U_l$'s 
acting on Bob's side. We can then observe the following three facts 
whenever $t_i\geq 1$ at the end of round $i$, given our way to compute 
$j_{i}^{\sA}$ and $j_{i}^{\sB}$ defined in the protocol description above. The 
proofs of these facts follow by noting that a statement analogous to the second one 
holds for the representation \cref{eq:psii} and any two consecutive 
$V$'s (or $W$'s), and that statement still holds at each step of the 
recursive cleanup until arriving at form \cref{eq:coll}.

First, looking at the $\tilde{U}$'s acting on Alice's side (if such a 
$\tilde{U}$ exists), the first one, say $\tilde{U}_{\ell_0^i}^i$ for some 
$1 \leq \ell_0^i \leq t_i$, satisfies the following: 
$\tilde{U}_{\ell_0^i}^i \in \{V_{\ell^\prime +1}, V_{\ell^\prime}^{-1} 
\}$ for $V_{\ell^\prime} = U_{r_i}$ or $U_{r_i - 1}$ (whichever acts on 
Alice's side). A similar statement holds for Bob with the $W_i$'s.

Second, for any two successive $\tilde{U}$'s acting on Alice's side (if 
two such $\tilde{U}$'s exist), say $\tilde{U}_{\ell_1^i}^i$ and 
$\tilde{U}_{\ell_2^i}^i$ for some $\ell_1^i < \ell_2^i$, if 
$\tilde{U}_{\ell_1^i}^i = V_{\ell^\prime}^{M_1}$ for some $1 \leq 
\ell^\prime \leq N^\prime$ and $M_1 \in \{\pm 1 \}$, then 
$\tilde{U}_{\ell_2^i}^i = V_{l^\prime + M^\prime}^{M_2}$ for $M^\prime = 
\tfrac{M_1 + M_2}{2}$. A similar statement holds for Bob.

Third, the choice of $M_{i}^{\sA}$ and $j_{i}^{\sA}$ are good, i.e., in a good round 
in which Alice tries to correct the last $\tilde{U}$ acting on her 
register, say $\tilde{U}_{\ell_3^i}^i = V_{\ell^\prime}^{M_3}$, then 
$U_{j_{i}^{\sA}}^{M_{i}^{\sA}} = V_{\ell^\prime}^{- M_3}$ indeed. Note that the 
choice of $j_{i}^{\sA}$ is also good when $t_i = 0$, and similar statements 
hold for Bob.
}

We now show that if the number of errors as a fraction of 
$N^\prime$, which is the total number of classical symbols transmitted over the 
adversarial channel, is bounded by a particular constant $\delta > 0$,
then we 
are guaranteed that the simulation succeeds. We do this in two steps:  we 
first give a bound on the fraction of bad rounds as a function of the 
error rate, and then use it to show that below a certain error rate, the 
simulation succeeds.

The bound on the fraction of bad rounds as a function of the error rate 
we use follows from the more general result in 
\cref{lem:optcor}, which we prove in the next section when studying a
protocol designed to
tolerate the highest possible error rate. The implication we use here 
is the following: if the error rate is bounded by $\delta$ (so there are 
at most $\delta N^\prime$ errors) and the tree code distance of both 
Alice and Bob's tree code is at least $\alpha$, then 
the number of bad rounds $N_\rb$ is bounded as $N_\rb \leq
(2 \delta + \epsilon_\alpha) N^\prime$, where~$\epsilon_\alpha = 1 -
\alpha$.

We are now ready to prove that the simulation succeeds with the 
parameters chosen for our protocol. We have $\epsilon_\alpha = \tfrac{1}{40}$, $
\delta = \tfrac{1}{80}$, $N^\prime = 4(N + 1)$, so 
\begin{align*}
	P\!\left(\frac{N^\prime}{2} + 1 \right) &\geq N_\rg - 4 N_\rb \\
		& = \frac{N^\prime}{2} - 5 N_\rb \\
		& \geq \frac{N^\prime}{2}  - 5(2 \delta + 
			\epsilon_\alpha)N^\prime  \\
		& = N^\prime \left(\frac{1}{2} - \frac{10}{80} - \frac{5}{40}
\right)  \\
		& = \frac{1}{4} N^\prime \\
		& = N + 1 \enspace.
\end{align*}
Here, the first inequality is from 
\cref{lem:pi}, the first 
equality is by definition of $N_\rg$, $N_\rb$, i.e., $\tfrac{N^\prime}{2} =
N_\rg + N_\rb$, and the second inequality is from our bound on $N_\rb$ 
due to 
\cref{lem:optcor}. The fact that the simulation succeeds is 
then immediate from 
\cref{cor:pi}.

Note that the form of the simulation protocol does not depend on the particular 
protocol to be simulated but only on its length $N$ and the noise 
parameter of the adversarial channel we want to tolerate. Also note that 
even if the adversary is adaptive and probabilistic (with adaptive, 
random choices depending on her measurement outcomes and her view 
of the transcript, as allowed by the 
model), the simulation succeeds regardless of her choice of action. As long 
as the corruption rate is bounded by $\delta$, our analysis holds in each 
branch of the adversary's probabilistic computation . We use the 
definition of the class $\A_{\delta, q, N^\prime}^\rS$ to prove that,
indeed, the simulation succeeds with zero error. (See \cref{sec:nqucommsh}
for the definition of~$\A_{\delta, q, N^\prime}^\rS \; $.)

For $\ket{\psi} \in \H(A \otimes B \otimes C \otimes E \otimes R)$, with 
$R$ a purifying system of the same size as $A \otimes B \otimes C \otimes 
E$, we have that
\begin{align*}
 	(\Pi \otimes \rI^R) (\ket{\psi}) = \Tr{E}{U_N \cdots U_1 
	\kb{\psi}{\psi} U_1^\dagger \cdots U_N^\dagger} \enspace,
\end{align*}
where~$\Pi$ is the protocol being simulated.
For any adversary in $\msA \in \A_{\delta, q, N^\prime}^\rS$, the
simulation yields state
\begin{align*}
	(S^\Pi (\msA) \otimes \rI^R ) (\ket{\psi}) =
 	\Tr{ \neg (\tilde{A} \tilde{B} \tilde{C} R) }{\M_{N^\prime + 
	1}^\Pi \N_{N^\prime} \M_{N^\prime}^\Pi \cdots \M_2^\Pi \N_1 
	\M_1^\Pi (\kb{\psi}{\psi})},
\end{align*}
in which the $\neg (\tilde{A} \tilde{B} \tilde{C} R)$ subscript 
for the partial trace means that we trace all except the 
$\tilde{A} \tilde{B} \tilde{C} R$ registers, and the
instrument~$\M^\Pi_\ell$ is the simulation step for the~$\ell$th
local computation by the corresponding party. Then we can rewrite 
\begin{align*}
	(S^\Pi & (\msA)  \otimes \rI^R )  (\ket{\psi}) \\
	& = \sum_{x_\rT y_\rT z} p_{X_\rT Y_\rT Z} (x_\rT, y_\rT, z | \ket{\psi}) \; \kb{x_\rT}{x_\rT}^{X_\rT} 
	\otimes 
	\kb{y_\rT}{y_\rT}^{Y_\rT} \otimes &\kb{z}{z}^{Z} \otimes \rho(x_\rT, y_\rT, z)
\end{align*}
where $X_\rT$, $Y_\rT$ are the registers containing the views $x_\rT$, $y_\rT$ of the 
transcript as seen by Alice and Bob, respectively, $Z$ is the adversary's
classical register, $\rho(x_\rT, y_\rT, z)$ are some quantum 
states, and $p_{X_\rT Y_\rT Z}$ is a probability distribution
conditional on the input $\ket{\psi}$. By definition of the class $\A_{\delta, q, N^\prime}^\rS$, we have that, 
conditioned on some classical state $z$ of Eve, $\rho(x_\rT, y_\rT, z)$ 
suffers at most $\delta N^\prime$ corruptions by Eve for any possible 
transcript views $x_\rT$, $y_\rT$. So, by the above analysis, the
$\tilde{A} \tilde{B} \tilde{C} R$ subsystems contains $\Tr{E}{U_N \cdots 
U_1 \kb{\psi}{\psi} U_1^\dagger \cdots U_N^\dagger}$, a perfect copy 
of $(\Pi \otimes \rI^R)(\ket{\psi})$ for any views $x_\rT, y_\rT$ of the 
transcripts of Alice and Bob, respectively. Hence, tracing over all 
subsystems but $\tilde{A} \tilde{B} \tilde{C} R$, we obtain $(\Pi \otimes 
\rI^R) (\ket{\psi})$, and the simulation protocol succeeds with zero probability
of error 
at simulating any noiseless protocol of length $N$ against all 
adversaries in $\A_{\delta, q, N^\prime}^\rS$.

We have thus established the following. We use a tree code of arity $d=48$ 
and distance parameter $\alpha = 1 - \epsilon_\alpha = \tfrac{39}{40}$.
With $q = |\Sigma|$ chosen according to 
\cref{lem:tccode},
$R_\rC = \tfrac{N}{N^\prime 
\log{q}} = \tfrac{1}{4 (1 + \tfrac{1}{N}) \log{q}} \geq \tfrac{1}{8 
\log{q}}$, $R_\rE = \tfrac{1}{\log{q}}$, and $\delta = \tfrac{1}{80}$, we have 
that for all $N$, there exists a universal simulation protocol in the 
shared entanglement model that, given black-box access to any two-party 
quantum protocol of length $N$ in the noiseless model, succeeds with zero 
probability of  
error at simulating the noiseless protocol on any input (independent of 
the contents of the purifying register held by Eve) while transmitting 
$\tfrac{1}{R_\rC \log{q}} N$ symbols from an alphabet $\Sigma$ of size $q$ 
over any adversarial channel with error rate $\delta$, and consuming 
$\tfrac{R_\rE}{R_\rC} N$ EPR pairs. This proves 
\cref{th:bas}.

\section{Tolerating Maximal Error Rates}
	\label{sec:opt}

We show how we can modify the basic protocol described in the last 
section such that it tolerates an error rate up to $\tfrac{1}{2} -
\epsilon$,
for arbitrarily small $\epsilon > 0$, in the shared entanglement model.
In particular, we show that given an adversarial channel in the shared 
entanglement model with error rate strictly smaller than $\tfrac{1}{2}$,
we can simulate any noiseless protocol of length $N$ with negligible
error 
over this channel using a linear in $N$ number of constant-size
transmissions and consuming a linear number of EPR pairs.

\begin{theorem}
\label{th:optach}
There exists a constant $c > 0$ such that for arbitrarily small 
constant~$\epsilon > 0$, there exist a communication rate 
$R_\rC > 0$, an alphabet size $q \in \mathbb{N}$, and an entanglement 
consumption rate $R_\rE \geq 0$ such that for all~$N \in 2 \mathbb{N}$,
there exists a universal simulator $S$ for noiseless quantum protocols
of
length~$N$ with the following properties. The simulator~$S$ is in the 
shared entanglement model, has length~$N^\prime$, communication 
rate $R_\rC$, transmission alphabet size $q$, and entanglement
consumption 
rate $R_\rE$. Further, the simulation succeeds with error at 
most~$2^{- c N}$ for all noiseless protocols of length $N$ against 
all adversaries in~$\A_{\tfrac{1}{2} - \epsilon, q, N^\prime}^\rS$~.
\end{theorem}

This is optimal since we also prove that no interactive protocol can 
withstand an error rate of $\tfrac{1}{2}$ in this model. 
In particular, given any two-party quantum protocol of length $N$ in the 
noiseless model, no simulation protocol in the shared entanglement model
can 
tolerate an error rate of $\tfrac{1}{2}$ and succeed in simulating the
noiseless protocol with worst-case error lower than the worst-case error
of the best uni-directional protocol. 

\begin{theorem}
\label{th:optobl}
For all noiseless protocol lengths 
$N \in \mathbb{N}$, communication rates $R_\rC > 0$, transmission 
alphabet sizes $q \in \mathbb{N}$, entanglement consumption rates $R_\rE 
\geq 0$, and simulation protocols $S$ in the shared entanglement 
model of length $N^\prime$ 
with the above parameters, there exists an adversary $\msA \in 
\A_{\tfrac{1}{2}, q, N^\prime}^\rS$ and a unidirectional protocol $U$
such that for all 
noiseless protocols $\Pi$ of length $N$, $\|S^\Pi(\msA) - \Pi
\|_\diamond 
\geq \|U - \Pi \|_\diamond$.
This result holds in the oblivious model as well as the alternating 
communication model.
\end{theorem}

\subsection{Proof of Optimality}

To prove 
\cref{th:optobl}, we observe that the argument of Ref.~\cite{FGOS12} in the 
classical case applies here as well; we need only note that if the 
error rate is $\tfrac{1}{2}$ with alternating communication in the shared 
entanglement model, then an adversary can completely corrupt all of the 
transmissions of either Alice or Bob, at his choosing.
For example, the adversary could replace all of Bob's transmissions by a fixed message
and leave Alice's messages unchanged. Effectively, Bob does not 
transmit any information to Alice, and this protocol can be simulated in 
the uni-directional model. Indeed, suppose that for a fixed register $E$, transmission 
alphabet $\Sigma$ of size $q$, noiseless protocol length $N$, and simulation 
protocol length $N^\prime$, the adversary $\msA_{\tfrac{1}{2}}$ 
maps all transmissions from Bob to Alice to a fixed 
symbol $e_0 \in \Sigma$ for any simulator $S$ of length $N^\prime$ that 
tries to simulate a noiseless protocol $\Pi$ of length $N$. We construct
$\M_1^U$, which is the composition of all operations of Alice in $S$ while 
replacing all messages of Bob by $e_0$. In the unidirectional
protocol~$U$, Alice applies the instrument~$\M_1^U$ to Alice's share of
the joint state in the simulation protocol.
The quantum communication from Alice to Bob is the concatenation of 
all the messages from Alice in the simulation protocol, along 
with Bob's share of the initial joint state. Bob would then apply the
instrument $\M_2^U$, 
which is the sequential application of all his operations in the
simulation protocol $S$. 
This unidirectional protocol simulates $S$ running against 
the adversary $\msA_{\tfrac{1}{2}}$ for any noiseless protocol and any input
and then produces the same output.

The above proof also applies in an oblivious model for noisy 
communication. In an oblivious model, the order in which the 
parties speak is fixed by the protocol and does not depend on the input 
or the actions of the adversary. An adversary can choose to 
disrupt all of the messages of the party who communicates at most half the  
number of bits. Hence, the proof also extends to the case of oblivious, 
but not necessarily alternating, communication. In such a case, the 
simulation protocol would also define a function $\mathrm{Speak}: 
[N^\prime] \rightarrow \{A, B \}$ known to all (Alice, Bob, and Eve) which 
specifies whose turn it is to speak and is independent of both the input and 
the action of Eve.

We can further extend the argument to the case of a $\mathrm{Speak}$ 
function, which 
depends on some secret key and is unknown to Eve, so Eve does not always 
know who is going to speak more often. In that case, Eve can flip a 
random bit to decide which party's communication she is going 
to corrupt. If the communication is classical, then a reasonable 
assumption is that Eve can see who speaks \emph{before\/} she decides 
whether or not to corrupt a message. In this case, the statement is 
changed to ``$\|S^\Pi (\msA) - \Pi \|_\diamond$ is bounded away from zero'', 
as can be seen by considering, for increasing $N$, some family of 
protocols computing, for example, the bitwise parity function of 
$\tfrac{N}{2}$ bits output by both parties or the swap function in which 
Alice and Bob want to exchange their $A, B$ registers. An extension of 
the argument of the proof of 
\cref{th:iidshentopt} shows that the 
fidelity is also bounded away from $1$ for the case of protocols 
computing the inner product binary function. To reach the $\tfrac{1}{2}$ 
bound on the tolerable error rate, the parties would then need an 
adaptive strategy that depends on the sequence of errors applied by the 
adversary. However, this is dangerous in a noisy model: depending on the 
error pattern, the parties might not agree on whose turn it is to speak, 
and they could run into synchronisation problems.

\subsection{Proof of Achievability}

\subsubsection{Description of the Simulation}

The proof of achievability is somewhat more involved. It follows ideas 
similar to those of the basic simulation, but the protocol must be carefully 
analysed and optimized. We start by setting up new notation that enables
us to do so. The intuition given in 
\cref{sec:int} still 
applies here, but parameters which were fixed in the basic case 
now depend on the parameter $\epsilon$ when we wish to tolerate
an error rate of~$\tfrac{1}{2} - \epsilon$. In particular, the distance 
parameter $\alpha = 1 - \epsilon_\alpha$, as well as the 
length of the protocol $N^\prime = \ell N$, now changes. Since the parties have access 
to shared entanglement, they do not need to distribute it at the 
beginning of the protocol, and they can also use it to generate a secret 
key unknown to the adversary Eve. The secret key is used to generate a 
blueberry code with erasure parameter \mbox{$\epsilon_\beta = (|\Sigma| - 
1)/ (|\Gamma| - 1)$}, with $\Sigma$ the tree code alphabet and $\Gamma$ the 
blueberry code alphabet. Each of the tree code transmission alphabet 
symbols is further encoded with the blueberry code before transmission 
over the noisy channel. A corruption caused by the adversary is detected 
as an erasure with probability $1 - \epsilon_\beta$. When an erasure is 
detected by either party in a round, that party does not attempt to
continue the simulation (as in the previous section) in that round. 
The corresponding trit sent is $0$, and the teleportation decoding bits 
are $00$. 
Otherwise, the structure of the protocol is mainly unchanged. 

We summarize the optimized protocol below.  Alice and Bob start with the
state~$\ket{\psi_\mathrm{init}}$ in the registers~$A B C_\sA E$, the
register~$C_\sB$ initialized to~$\ket{0}$, the registers~$T_\sA
T_\sB$ initialized to~$N^\prime$ EPR pairs~$\left[\tfrac{1}{\sqrt{2}}(\ket{00}
+ \ket{11})\right]^{\otimes {N^\prime}}$, with one qubit each from each EPR
pair held by Alice and
Bob, and the qubits in registers~$\tilde{A}$, $\tilde{B}$, $\tilde{C}$
initialized to~$\ket{0}$ (cf. the noiseless protocol embedding described in
\cref{sec:nslss}). They measure a suitable number of additional 
EPR pairs to produce a secret key unknown to the adversary.
Using this, they generate common blueberry codes~$\msB_1$, $\msB_2$, $\dotsc$, $
\msB_{N^\prime}$ uniformly and independently from the set of permutations 
over~$\Gamma$.  They also have access to a suitable amount of
classical workspace for local computations required for the simulation.

Alice and Bob repeat the 
following for $i = 1 \cdots \tfrac{N^\prime}{2}$:

\begin{enumerate}
 	\item For~$i = 1$, there is no message to be decoded, and the
                values of the parameters needed for the simulation
                are straightforward. Alice continues with 
step~\ref{step:state}.
		If~$i > 1$, Alice decodes the blueberry encoding of Bob's possibly 
		corrupted last transmission. If she detects an erasure, 
		she sets $M_{i}^{\sA} = 0$, $x_{i}^{\sAD} =  z_{i}^{\sAD} = 0$ and 
		$f_{i-1}^\prime = \perp$ and skips to 
step~\ref{step:tp}. 
		Otherwise, she decodes the transmission as $f_{i-1}^\prime \in 
		\Sigma$, a possibly corrupted version of Bob's last tree 
		encoding $f_{i-1}$, and continues with 
step~\ref{step:next}.

	\item \label{step:next} Alice computes $s_\sB^i = \msD(f_1^\prime \cdots 
		f_{i-1}^\prime)$, and  for~$\ell=1, \dotsc, i-1$ she extracts $b_\ell^i = (x_{\ell }^{i \sBD} 
		z_{\ell}^{ i \sBD}, M_{\ell}^{i  \sB}, x_{\ell}^{i \sBM} z_{\ell 
		}^{i \sBM})$, her best guess for Bob's 
		messages, and the corresponding $c_{\ell}^{i \sB}$, $j_{\ell}^{ 
		i \sB}$.
	\item \label{step:state} Using $s_\sA$, $s_\sB$, she computes her best guess for the state 
		$\ket{\psi_{i}}$ of the joint register, and the 
		corresponding $x_{i}^{ \sAD} z_{i}^{ \sAD}$, $M_{i}^{\sA}$, $c_{i}^{\sA}$, $
		j_{i}^{\sA}$.

	\item \label{step:tp} She completes the teleportation by applying $\rZ^{z_{i}^{ \sAD}} 
		\rX^{x_{i}^{ \sAD}}$ to register $T_\sA^{2(i-1)}$ and swaps this 
		with the $C_\sA$ register.
	\item She tries to make progress in the simulation by applying 
		$U_{j_{i}^{\sA}}^{M_{i}^{\sA}}$ to the $A C_\sA$ register.
	\item She teleports the $C_\sA$ register to Bob using 
		entanglement in register $T_\sA^{2i-1}$ and gets outcomes 
		$x_{i}^{\sAM} z_{i}^{\sAM}$.
	\item Alice updates her history $s_\sA$ by following edge $a_i = 
		(x_{i}^{\sAD} z_{i}^{\sAD}, M_{i}^{\sA}, x_{i}^{\sAM} z_{i}^{\sAM})$, computes 
		$e_i = \msE(a_1 \cdots a_i)$, and transmits the blueberry 
		encoding~$\msB_{2i-1}(e_i)$ of $e_i$ over the noisy channel to Bob.

	\item Upon receiving a possibly corrupted version of Alice's 
		last transmission, Bob decodes the blueberry code layer: 
		he either detects an erasure and sets $e_i^\prime = 
		\perp$, or else decodes the transmission as $e_i^\prime 
		\in \Sigma$, a possibly corrupted version of $e_i$.
	\item Bob computes $x_{i}^{\sBD} z_{i}^{\sBD}$, $M_{i}^{\sB}$ in
the same way as
		Alice, depending on whether or not he detects an erasure. 
		In more detail, if Bob does not detect an erasure, he decodes $s_\sA^i = \msD(e_1^\prime \cdots 
		e_i^\prime)$ and also uses $s_\sB$ to compute the above
parameters. He 
		then performs actions on his registers analogous to Alice's:
		he completes the teleportation step,
		swaps register $T_\sB^{2i-1}$ with $C_\sB$, applies the
operator~$U_{j_i^\sB}^{M_i^\sB}$ to the registers~$BC_\sB$, 
		uses the $T_\sB^{2i}$ register to teleport back the $C_\sB$ 
		register to Alice, computes $f_i$, and transmits the 
		blueberry encoding~$\msB_{2i}(f_i)$ of $f_i$ to Alice. Round $i$ is completed
when Alice receives a possibly corrupted version 
		of this message.
\end{enumerate}

After these $\tfrac{N^\prime}{2}$ rounds, both Alice and Bob extract the output of
the simulation from the $\tilde{A} \tilde{B} \tilde{C}$ registers specified by 
the noiseless protocol embedding.

\subsubsection{Analysis}
	\label{sec:analopt}

As in the proof in 
\cref{sec:anal}, the analysis is first carried 
conditioned on some respective views of Alice and Bob of the transcript
at each round. An additional component is the conditioning on some
classical state $z$ of the $Z$ register of the adversary, Eve, and 
the averaging over the shared secret key used for the 
blueberry code. In 
particular, if the adversary has an adaptive and probabilistic strategy, 
we condition on some strategy consistent with the transcript on which we
have already conditioned. We return to this issue later.

We again define a function $P(i)$ such 
that the simulation succeeds whenever $P(\tfrac{N^\prime}{2} + 1) 
\geq N + 1$. Using the notation and the form of the state 
$\ket{\psi_i}$ as in \cref{eq:coll} on the joint register $A B C E$ at the beginning of round 
$i$ (or at the end of round $i-1$), we
let $P(i) = r_i- 2t_i$ (i.e., the same potential function works for the
enhanced simulation as well). We now have three kinds of rounds: \emph{good\/}
rounds, in which both parties decode correctly the other party's 
history; \emph{bad\/} rounds in which at least one party makes a decoding error; and 
\emph{erasure\/} rounds, in which no party makes a decoding error,
but at least one party decodes an erasure from the blueberry 
code. (In an erasure round, the party
detecting an erasure applies the identity operator on
the quantum register before teleporting it back.)

We state an analogue of the technical 
\cref{lem:pi} 
and its corollary.
\begin{lemma}
	\label{lem:erapi}
	At the end of round $i$, define
\begin{align*}
	N_\rg^i &= |\{j: j \leq i, \text{ round } j \text{ was good}\}|, \\
	N_\rb^i &= |\{j: j \leq i, \text{ round } j \text{ was bad}\}|, \\
 	N_\re^i &= |\{j: j \leq i, \text{ round } j \text{ was an 
		erasure round}\}|.
\end{align*}
Then $P(i + 1) \geq N_\rg^i - 4 N_\rb^i$.
\end{lemma}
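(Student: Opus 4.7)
The plan is to mimic the induction from the proof of Lemma~\ref{lem:pi} and simply add a third case for erasure rounds. The base case is immediate: $P(1)=0=N_g^0-4N_b^0$. For the inductive step I consider three cases according to the type of round~$i$.

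In a \emph{good} round, neither party detects an erasure and both correctly decode the other's state, so the situation is identical to a good round in the basic protocol. The three structural facts established in the proof of Lemma~\ref{lem:pi}---that the first pending $\tilde{U}$ on a party's side equals $V_{\ell'+1}$ or $V_{\ell'}^{-1}$ for the appropriate $V_{\ell'}\in\{U_{r_i-1},U_{r_i}\}$, that consecutive $\tilde{U}$'s on the same side have compatibly shifted indices, and that each party's choice of $(M,j)$ correctly cancels the relevant pending $\tilde{U}$---transfer verbatim, yielding $P(i+1)\ge P(i)+1$. In a \emph{bad} round the argument is also identical to that of Lemma~\ref{lem:pi}: in the worst case both parties apply a wrong unitary, appending up to two pending $\tilde{U}$'s and possibly losing two units of progress in $r$, so $P(i+1)\ge P(i)-4$.

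The genuinely new case is the \emph{erasure} round, in which no party misdecodes but at least one, say Alice, receives the symbol~$\perp$. By the simulation, Alice then sets $M_{iA}=0$ and $x_{iAD}=z_{iAD}=0$ and jumps from step~1 to step~4, so her only quantum operations are a trivial Pauli decoding, a swap of the virtual $C$ register into $C_A$, and an immediate teleportation back with Pauli byproduct $X^{x_{iAM}}Z^{z_{iAM}}$. In the representation~(\ref{eq:psii}) this appends a single Pauli factor on $C$ and no noiseless-protocol unitary; under the recursive cleanup to form~(\ref{eq:coll}) that Pauli merges into the leading $\hat{\sigma}^{i+1}$ (or, if absent there, into $\tilde{\sigma}_1^{i+1}$), so no new $\tilde{U}$ is introduced and neither $r$ nor $t$ is worsened by Alice's move. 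If Bob also detects an erasure he behaves symmetrically and $P(i+1)\ge P(i)$. If Bob does not detect an erasure, then by the definition of an erasure round his decoding of Alice's state is still correct, so he knows Alice did nothing non-trivial this round; he then applies exactly the good-round decision rule to the (essentially unchanged) pair $(r_i,t_i)$, which can only preserve or improve $P$. Since $N_g^i=N_g^{i-1}$ and $N_b^i=N_b^{i-1}$ in an erasure round, $P(i+1)\ge P(i)\ge N_g^{i-1}-4N_b^{i-1}=N_g^i-4N_b^i$ follows from the induction hypothesis.

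The main (and essentially only) obstacle is verifying this claim about the erasure case: one must check that the prescribed ``erasure response'' never pushes a wrong unitary onto the pending stack, i.e.\ never causes $t$ to grow or $r$ to shrink. This is precisely why the response is defined to leave the trit and both Pauli exponents equal to zero: the only operators the erasing party contributes to~(\ref{eq:psii}) are Paulis acting on $C$, which are absorbed harmlessly when reducing to~(\ref{eq:coll}), so the cleanup leaves $(r_{i+1},t_{i+1})=(r_i,t_i)$ up to improvements coming from the non-erasing party's action. Once this local structural check is in hand, the rest of the induction is bookkeeping inherited from Lemma~\ref{lem:pi}.
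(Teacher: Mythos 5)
Your proof is correct and takes exactly the approach the paper indicates: the paper omits the proof of Lemma~\ref{lem:erapi}, stating only that it is ``nearly identical to the proofs in the basic simulation case, the only difference being that a party who detects an erasure does not take any action and by consequence does not affect $P(i)$.'' Your argument faithfully reproduces the induction of Lemma~\ref{lem:pi} for the good and bad cases and spells out the structural check for the new erasure case (that an erasing party's trivial decoding and empty action contribute only Pauli factors on $C$, which merge into $\hat{\sigma}$ during cleanup and so leave $r$, $t$, hence $P$, unchanged), matching the paper's intent.
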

The proof of this lemma and its corollary below are omitted since they 
are nearly identical 
to the proofs in the basic simulation. The only difference is if 
at least one party detects an erasure in some 
round, which may be a bad round or an erasure round.
We sketch the argument in the case that round~$i$
is an erasure round. The only unitary operation applied by a party that
detects an erasure, is a Pauli operator on the virtual communication
register~$C$. If both parties detect an erasure, 
$r_{i+1} = r_i$ and~$t_{i+1} = t_i$. If any one party decodes 
correctly and the other detects an erasure, we have $r_{i+1} \ge r_i$
and~$t_{i+1} \le t_i$, so~$P(i+1) \ge P(i)$. (The function increases only if
the party that decoded correctly can apply~$U_{r_i + 1}$ 
or~$\tilde{U}_{t_i}^{-1}$ as defined by the simulation; i.e., that party
holds the registers on which the said unitary operation acts.)
In both cases, the quantity~$N_\rg^i - 4 N_\rb^i = N_\rg^{i-1}
- 4 N_\rb^{i-1} \le P(i)$, so~$P(i+1) \ge N_\rg^i - 4 N_\rb^i$.

\begin{corollary}
	\label{cor:erapi}
	If $P(\tfrac{N^\prime}{2}+1) \geq N + 1$, then the simulation succeeds
with zero error.
\end{corollary}

Hence, it suffices to bound the ratio of bad to good rounds as a 
function of the corruption rate in order to prove the success of the 
simulation.
To do so, we show that depending on a given tolerable error rate 
$\tfrac{1}{2} - \epsilon$, we can vary the distance parameter $\alpha = 
1-\epsilon_\alpha$ of the tree codes used by Alice and Bob, as well as the 
erasure parameter $\beta = 1-\epsilon_\beta$ of the blueberry codes they 
use, and make this ratio as low as desired (except with negligible 
probability in the random choice of the shared secret key used for the 
blueberry code). However, there is now a third kind of round, and we 
would also want to ensure that the ratio of good rounds versus erasure rounds 
does not become arbitrarily low and that 
$P(\tfrac{N^\prime}{2}+1) \geq N + 1$. 

We focus on the numbers~$N_\rg = 
N_\rg^{\tfrac{N^\prime}{2} + 1}$, $N_\rb = N_\rb^{\tfrac{N^\prime}{2} + 1}$ and 
$N_\re = N_\re^{\tfrac{N^\prime}{2} + 1}$ of good, bad and erasure rounds in 
the whole simulation, respectively. To bound the fraction of bad rounds 
as a fraction of the corruption rate, we appeal to a corollary of the 
following technical lemma. The lemma derives a new bound on tree codes with 
an erasure symbol. Since this result only pertains to the structure of such 
codes independent of our application, it might have applications to
classical interactive coding and other settings as well.

\begin{lemma}
\label{lem:optcor}
 	If there is a bound $\delta$ on the fraction of the total number 
of transmissions $N^\prime$ that are corrupted and not detected as erasure 
errors by the blueberry code, then the number $N_\rb$ of bad rounds in the whole 
simulation is bounded as $N_\rb \leq (2 \delta + \epsilon_\alpha) N^\prime$,
where $\epsilon_\alpha = 1- \alpha$, and~$\alpha$ is the distance parameter 
of the tree code with an erasure symbol used by Alice and Bob.
\end{lemma}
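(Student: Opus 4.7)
The plan is to decompose $N_b \leq N_b^A + N_b^B$, where $N_b^A$ (resp.\ $N_b^B$) counts the rounds in which Alice's (resp.\ Bob's) tree-code decoding of the other party's state is wrong, and to prove the symmetric estimate
\begin{equation*}
	N_b^A \ \leq\ \tfrac{2 U^A}{\alpha}\ +\ \tfrac{\epsilon_\alpha}{2}\, N' ,
\end{equation*}
with $U^A$ the number of undetected errors on the Bob-to-Alice half of the channel. Adding the analogous bound for Bob and using $U^A + U^B \leq \delta N'$ will then yield the claimed $N_b \leq (2\delta + \epsilon_\alpha) N'$.

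For each round $i \in B_A$, set $\ell_i := L(s_B|_{i-1}, s_B^i) \geq 1$, $\bar{e} := \bar{E}(s_B|_{i-1})$, $\bar{e}' := \bar{E}(s_B^i)$, $\bar{e}^* := (f_1',\ldots,f_{i-1}')$, and $I_i := \{i-\ell_i,\ldots,i-1\}$. By the causal structure of $\bar{E}$, the two encodings $\bar{e}$ and $\bar{e}'$ agree on positions $1,\ldots,i-1-\ell_i$, so the tree-code property of Section~\ref{sec:tc} forces $\Delta(\bar{e}, \bar{e}') \geq \alpha \ell_i$ entirely inside $I_i$. Because $\bar{e}'$ minimizes Hamming distance to $\bar{e}^*$ (with the erasure flag $\perp$ treated as a symbol of $\Sigma$ that never appears in an encoding), a short case analysis on the correct, erroneous, and erased positions of $I_i$ yields the per-round inequality
\begin{equation*}
	\alpha\, \ell_i \ \leq\ 2\, u_i^A + r_i^A ,
\end{equation*}
where $u_i^A$ (resp.\ $r_i^A$) is the number of undetected errors (resp.\ detected erasures) on Bob-to-Alice transmissions whose indices lie in $I_i$. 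The factor $2$ in front of $u_i^A$ captures the usual observation that each undetected error gives two units of ``advantage'' to the minimum-distance decoder (it mismatches the truth and matches a wrong candidate), whereas an erasure gives only one.

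To pass from this per-round estimate to a global bound I plan to process $B_A$ in decreasing order of $i$ and greedily extract a maximal subcollection $B_A^\star \subseteq B_A$ with pairwise disjoint intervals $I_i$. On $B_A^\star$ each corruption is counted at most once, so summing the per-round inequality gives $\alpha \sum_{i \in B_A^\star} \ell_i \leq 2 U^A + R^A$, and $\ell_i \geq 1$ controls $|B_A^\star|$. Every residual round $i \in B_A \setminus B_A^\star$ is nested inside some selected interval, which forces its confusion depth to be ``small'' in a quantitative sense; a telescoping/charging argument then bounds $|B_A \setminus B_A^\star|$ by $\tfrac{\epsilon_\alpha}{2} N'$. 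The main obstacle is precisely this combinatorial bookkeeping: the erasure counts $r_i^A$ appear additively on the right-hand side of the per-round inequality but must not appear on the right-hand side of the lemma, so one has to exploit carefully that a long chain of overlapping bad intervals dominated by erasures inherently forces $\sum_i \ell_i$ to be small relative to the union of its intervals, thereby absorbing the erasure contribution entirely into the $\epsilon_\alpha N'$ slack. Once that step is nailed down, the remainder of the proof is a routine Schulman-style aggregation, and symmetrizing to Bob finishes the argument.
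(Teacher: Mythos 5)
Your high-level decomposition matches the paper: prove a per-party bound of the form $N_b^A \leq 2U^A + \tfrac{1}{2}\epsilon_\alpha N'$ (the paper's version has no $\alpha$ in the denominator, by the way -- your $2U^A/\alpha$ is already strictly weaker, so the final constant would not come out to $2\delta + \epsilon_\alpha$ without extra work), and symmetrize. Your per-interval inequality $\alpha\ell_i \leq 2u_i^A + r_i^A$ is also in the right spirit. But the step you flag as "the main obstacle" is genuinely missing, and the hope you state for closing it -- that overlapping bad intervals dominated by erasures force $\sum_i \ell_i$ to be small relative to the union -- is not the mechanism that actually makes the erasures disappear, and I don't see how to make it work as written. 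The difficulty is that you are bounding the confusion depth $\ell_i$, which conflates bad rounds, good rounds and erasure rounds inside $I_i$, and the erasure contribution $r_i^A$ then sits unabsorbed on the right.

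The paper avoids this by proving, by induction on $t$, the statement $|J_b(1,t)| \leq 2|I_b(1,t)| + \epsilon_\alpha t$, where $J_b$ counts \emph{tree-decoding-error rounds} and $I_b$ counts \emph{undetected transmission errors}. In the inductive step at a bad round $t$ with confusion depth $\ell$, one partitions $[1,t]$ at $t-\ell$, applies the inductive hypothesis to $[1,t-\ell]$, and on $[t-\ell+1,t]$ writes $\ell = |J_e| + |J_b| + |J_g|$ (erasure, bad, good tree-decoding rounds). The tree-code distance property gives $|K_s| \leq \epsilon_\alpha\ell$ for the set of positions where the re-encoding of the wrong decode coincides with the truth, and the minimum-distance decoding condition reduces, after peeling off the positions in $K_s$, in $I_e$, and in the "doubly wrong" set $K_d$, to $|I_g \setminus (K_s \cup K_d)| \leq |I_b \setminus (K_s \cup K_d)|$. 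Chaining these gives $\ell \leq |I_e| + 2|I_b| + |K_s|$, and then -- crucially -- one subtracts $|J_e| \geq |I_e|$ and $|J_g| \geq 0$ from both sides to get $|J_b| \leq 2|I_b| + \epsilon_\alpha\ell$. The erasure terms cancel not because erasures are geometrically sparse, but because every round in which the blueberry layer flags an erasure is automatically a tree-decode-\emph{failure} round (in $J_e$, not in $J_b$) and hence is never counted as a bad round at all. Your greedy/charging aggregation does not make this cancellation available, because it bounds $\ell_i$ as a whole rather than the number of \emph{bad} rounds inside $I_i$; you would need to restructure the argument to count $|J_b \cap I_i|$ directly, at which point you essentially recover the paper's induction.
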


\begin{proof}
 	For any $1 \leq i \leq j \leq \tfrac{N^\prime}{2}$, let $I_\re^\sA(i, 
j), I_\rb^\sA(i, j), I_\rg^\sA(i, j)$ be the subset of rounds $i, i+1, \dotsc, 
j-1, j$ in which the symbol that Alice gets from the blueberry decoding is an 
erasure, an error (i.e., an incorrect symbol), or the original encoded 
symbol,
respectively. Note that these are disjoint sets satisfying $I_\re^\sA(i, j) 
\cup I_\rb^\sA(i, j) \cup I_\rg^\sA(i, j) = [i,j]$, where~$[i,j]$
denotes the set~$\{i, i+1, \dotsc, j-1, j \}$. 
Similarly, let
$J_\rb^\sA(i, j)$ and~$J_\rg^\sA(i, j)$ be the subsets of 
$[i, j]$, respectively, in which the sequence of messages Alice receives
from the tree decoding corresponds to
a decoding error and the correct decoding.
Again note that
$I_\re^\sA(i, j) \cup J_\rb^\sA(i, j) \cup J_\rg^\sA(i, j) = [i,j]$, 
a disjoint union.
We define analogous subsets for Bob with $\sA$'s 
replaced by $\sB$'s in the notation. Using this notation, we have
\begin{align*}
  N_\rb \quad = \quad \left|J_\rb^\sA \!\!\left(1, \frac{N^\prime}{2}\right)  \cup J_\rb^\sB \!\!\left(1, \frac{N^\prime}{2}\right)\right|,
\quad \text{and} \\
  \left|I_\rb^\sA \!\!\left(1, \frac{N^\prime}{2}\right)\right| + \left|I_\rb^\sB \!\!\left(1, \frac{N^\prime}{2}\right)\right|  \quad \leq \quad
	\delta N^\prime \enspace.
\end{align*}
The statement we wish to prove is
\begin{align*}
  \left|J_\rb^\sA \!\!\left(1, \frac{N^\prime}{2}\right) \cup J_\rb^\sB \!\!\left(1, \frac{N^\prime}{2}\right)\right| 
	& \quad \leq \quad 2 \delta N^\prime + \epsilon_\alpha N^\prime.
\end{align*}
We prove the following stronger statements, which claim that the number
of rounds in which a party makes a tree code decoding error is only
slightly larger than the number of rounds in which that party makes a blueberry code 
decoding error:
\begin{align}
\label{eqn:alice-bound}
  \left|J_\rb^\sA \!\!\left(1, \frac{N^\prime}{2}\right)\right| & \quad \leq
\quad  2 \left|I_\rb^\sA \!\!\left(1, \frac{N^\prime}{2}\right)\right| 
		+ \frac{1}{2} \epsilon_\alpha N^\prime \enspace,
\end{align}
and 
\begin{align*}
  \left|J_\rb^\sB \!\!\left(1, \frac{N^\prime}{2}\right)\right| \quad  \leq
\quad  2 \left|I_\rb^\sB \!\!\left(1, \frac{N^\prime}{2}\right)\right| 
		+ \frac{1}{2} \epsilon_\alpha N^\prime \enspace.
\end{align*}

The proofs of the two statements are similar,
so we only prove the statement for Alice's subsets. To simplify notation, we 
drop the $\sA$ superscripts. For any subset $K$ 
of $[\tfrac{N^\prime}{2}]$ and any two strings~$\bar{e}, \bar{e}^\prime
\in \Sigma^t$ with $\bar{e} = e_1 \cdots e_t$ and $\bar{e}^\prime = e_1^\prime 
\cdots e_t^\prime$, and~$t \le N^\prime/2$, define 
$\Delta_K(\bar{e}, \bar{e}^\prime) = |\{i \in K : i \leq t, e_i \not= 
e_i^\prime \}|$. Note that with $\bar{K} = [\tfrac{N^\prime}{2}] \setminus 
K$, $\Delta (\bar{e}, \bar{e^\prime}) = \Delta_K (\bar{e}, 
\bar{e}^\prime) + \Delta_{\bar{K}}(\bar{e}, \bar{e}^\prime)$, and 
$\Delta_K(\bar{e}, \bar{e}^\prime) \leq |K|$. 

We are now ready to prove the statement~\cref{eqn:alice-bound}. We prove 
by strong induction on the number of rounds~$t$ 
that $|J_\rb (1, t)| \leq 2|I_\rb(1, t)| + \epsilon_\alpha t$. The base case,
$t = 1$, is immediate: in the first round, Alice does not decode any message,
so that the two sets~$J_\rb (1, 1), I_\rb(1,1)$ are empty.
\suppress{if there is no transmission error during the first 
round, then there is no decoding error, and otherwise $1 \leq 2 + 
\epsilon_\alpha$.}

For~$t > 1$, assume that 
\begin{align*}
 |J_\rb (1, j)| \quad \leq \quad 2|I_\rb(1, j )| +
\epsilon_\alpha j \enspace,
\end{align*}
for all~$j$ with~$0 \le j < t$, where we define $J_\rb (1, 0) 
= I_\rb (1, 0) = \emptyset$. 
If in round $t$, $t > 1$, Alice detects an erasure or decodes 
correctly, then the induction step is immediate. Hence, for the induction 
step, we consider the case of incorrect decoding.
Let $\bar{a} \in [d]^t$ be the 
sequence of transmitted messages, $\bar{e} = \bar{\msE} (\bar{a}) \in 
\Sigma^t$ the corresponding sequence of transmissions, $\bar{e}^\prime 
\in \Sigma^t$ the sequence of possibly corrupted receptions, 
$\bar{a}^\prime = \msD(\bar{e}^\prime) \in [d]^t$ the sequence of decoded 
messages, and $\bar{e}^{\prime \prime} = \bar{\msE} (\bar{a}^\prime)$ the
encoding of~$\bar{a}^\prime$ in the tree code. Then, by the decoding condition, $\Delta(\bar{e}^{\prime 
\prime}, \bar{e}^\prime ) \leq \Delta(\bar{e}, \bar{e}^\prime)$. Let 
$\ell = L(\bar{a}, \bar{a}^\prime)$ be the distance of $\bar{a}, 
\bar{a}^\prime$ to their least common ancestor. Then $\Delta_{[1, t - 
\ell]}(\bar{e}^{\prime \prime}, \bar{e}) = 0$, as the encodings have the
same prefix as well. Since~$\bar{e}^{\prime \prime} \neq \bar{e}$, note
that $1 \leq \ell \leq t$. By the induction hypothesis, 
\begin{align*}
 |J_\rb (1, t - \ell)| \quad \leq \quad 2|I_\rb(1, t - \ell)| + \epsilon_\alpha (t - \ell)
\enspace.
\end{align*}
By definition
\begin{align*}
|J_\rb (1, t)| & \quad = \quad |J_\rb (1, t - \ell)| + |J_\rb (t - \ell+1, t)|, \\
|I_\rb(1, t)| & \quad = \quad  |I_\rb(1, t - \ell)| + |I_\rb(t - \ell + 1, t)|,
\end{align*}
so it suffices to prove 
\begin{align}
\label{eqn:alice2}
|J_\rb (t - \ell+1, t)| \quad  \leq \quad  2 |I_\rb(t - \ell + 1, t)| + \epsilon_\alpha \ell
\end{align}
to complete the proof.

Let~$K = I_\re(t - \ell + 1, t)$, the set of rounds in which Alice
detects an erasure. Since codewords in the tree code, in
particular~$\bar{e}^{\prime \prime}$ and~$\bar{e}$, do
not contain the erasure symbol, the decoding
condition~$\Delta(\bar{e}^{\prime \prime}, \bar{e}^\prime) \le
\Delta(\bar{e}^\prime, \bar{e})$ is equivalent
to~$\Delta_{\bar{K}}(\bar{e}^{\prime \prime}, \bar{e}^\prime)
\le \Delta_{\bar{K}}(\bar{e}^\prime, \bar{e})$.  We therefore have
\begin{align}
\nonumber
\Delta(\bar{e}^{\prime \prime}, \bar{e})
    & \quad = \quad \Delta_{K}(\bar{e}^{\prime \prime}, \bar{e}) +
                \Delta_{\bar{K}}(\bar{e}^{\prime \prime}, \bar{e}) \\
\nonumber
    & \quad \le \quad \size{I_\re(t - \ell + 1, t)} +
                \Delta_{\bar{K}}(\bar{e}^{\prime \prime}, \bar{e}) \\
\nonumber
    & \quad \le \quad \size{I_\re(t - \ell + 1, t)} +
                \Delta_{\bar{K}}(\bar{e}^{\prime \prime}, \bar{e}^\prime)
                + \Delta_{\bar{K}}(\bar{e}^\prime, \bar{e}) \\
\nonumber
    & \quad \le \quad \size{I_\re(t - \ell + 1, t)} 
                + 2\, \Delta_{\bar{K}}(\bar{e}^\prime, \bar{e}) \\
\label{eqn:distance}
    & \quad = \quad \size{I_\re(t - \ell + 1, t)} 
                + 2\, \size{I_\rb(t - \ell + 1, t)} \enspace.
\end{align}
On the other hand, the tree code 
distance condition stipulates that~$\Delta(\bar{e}^{\prime \prime},
\bar{e}) \ge \alpha \ell = (1 - \epsilon_\alpha) \ell$ since~$\bar{a} \neq \bar{a}^\prime$. Along
with~\cref{eqn:distance}, this gives
\begin{align}
\label{eqn:length}
\ell \quad \le \quad \Delta(\bar{e}^{\prime \prime}, \bar{e}) +
               \epsilon_\alpha \ell 
     \quad \le \quad \size{I_\re(t - \ell + 1, t)} 
                + 2\, \size{I_\rb(t - \ell + 1, t)} 
                + \epsilon_\alpha \ell \enspace.
\end{align}
We use this to bound the number of bad rounds for Alice, in terms of the
number of blueberry decoding errors she encounters. We have
\begin{align}
\nonumber
\ell & \quad = \quad \size{I_\re(t - \ell + 1, t)}
                + \size{J_\rb(t - \ell + 1, t)} 
                + \size{J_\rg(t - \ell + 1, t)}  \\
\label{eqn:length2}
    & \quad \ge \quad \size{I_\re(t - \ell + 1, t)}
                + \size{J_\rb(t - \ell + 1, t)} \enspace.
\end{align}
Combining~\cref{eqn:length} and~\cref{eqn:length2}, we get the
claimed bound, as in~\cref{eqn:alice2}.
\end{proof}

\suppress{
Let $K_\ell = [t - \ell + 1, t]$ and $K_\rs = \{i \in K_\ell  : e_i^{\prime 
\prime} = e_i\}$, then $|K_\ell| = \ell = \Delta_{K_\ell} 
(\bar{e}^{\prime \prime}, \bar{e}) + |K_\rs|$, and in fact $\Delta 
(\bar{e}^{\prime \prime}, \bar{e}) = \Delta_{K_\ell} (\bar{e}^{\prime 
\prime}, \bar{e})$ since  $\Delta_{[1, t - \ell]} (\bar{e}^{\prime 
\prime}, \bar{e}) = 0$. By the tree code condition, 
$\Delta_{K_\ell} (\bar{e}^{\prime \prime}, \bar{e}) \geq \alpha \ell$.
Since we have $\alpha = 1 - \epsilon_\alpha$, we get $|K_\rs| \leq 
\epsilon_\alpha \ell$.  Define, for $v \in \{e, b, g \}$,
\begin{align*} 
	J_v & \quad = \quad  J_v (t - \ell+1, t) \enspace, \\
	 I_v & \quad = \quad  I_v (t - \ell+1, t) \enspace, \\
 	K_\rd & \quad = \quad  \{ i \in K_\ell \setminus ( K_\rs \cup I_\re ) : e_i^\prime 
	\not= e_i \mathrm{~and~} e_i^\prime \not= e_i^{\prime \prime} \}
\enspace, \quad \textrm{and} \\
	K_\ra & \quad = \quad  [t] \setminus ( [1, t - \ell] \cup K_\rs \cup I_\re \cup K_\rd ) \\
	       & \quad = \quad  (I_\rb \cup I_\rg) \setminus (K_\rs \cup K_\rd)
\enspace.
\end{align*}
Since $\Delta_{[1, t - \ell]} (\bar{e}^{\prime 
\prime}, \bar{e}) = \Delta_{K_\rs} (\bar{e}^{\prime \prime}, \bar{e}) = 0, 
\Delta_{I_\re} (\bar{e}^{\prime \prime}, \bar{e}^\prime) = \Delta_{I_\re} 
(\bar{e}, \bar{e}^\prime) = |I_\re|$ and $\Delta_{K_\rd} (\bar{e}^{\prime 
\prime}, \bar{e}^\prime) = \Delta_{K_\rd} (\bar{e}, \bar{e}^\prime) = 
|K_\rd|$, the decoding condition $\Delta (\bar{e}^{\prime \prime}, 
\bar{e}^\prime) \leq \Delta (\bar{e}, \bar{e}^\prime)$ is equivalent to 
$\Delta_{K_\ra} (\bar{e}^{\prime \prime}, \bar{e}^\prime) \leq \Delta_{K_\ra} 
(\bar{e}, \bar{e}^\prime)$. Moreover, for all $i \in K_\ra$,
we have $e_i^{\prime 
\prime} \not= e_i$ and either $e_i^\prime = e_i^{\prime \prime}$ or 
$e_i^\prime = e_i$, so that exactly one of the latter two equalities holds. 
Hence,
\begin{align*}
\Delta_{K_\ra} (\bar{e}^{\prime \prime}, \bar{e}^\prime) 
& \quad = \quad | \{i \in (I_\rb \cup I_\rg) \setminus (K_\rs \cup K_\rd) : e_i^\prime = e_i \} | \\
& \quad = \quad |I_\rg \setminus (K_\rs \cup K_\rd)| \enspace,
\end{align*}
and 
\begin{align*}
\Delta_{K_\ra} (\bar{e}, \bar{e}^\prime) \quad = \quad |I_\rb \setminus (K_\rs \cup K_\rd)|
\enspace.
\end{align*}
So we can restate the decoding condition equivalently as
\begin{align*}
|I_\rg \setminus (K_\rs \cup K_\rd)| \quad \leq \quad |I_\rb \setminus (K_\rs \cup K_\rd)|
\enspace.
\end{align*}
By the definition of the sets, we have 
\begin{align*}
K_\ell & \quad = \quad J_\re \cup J_\rb \cup J_\rg \\
  & \quad = \quad I_\re \cup \left[ I_\rb \setminus (K_\rs \cup K_\rd) \right] \cup \left[ I_\rg \setminus (K_\rs \cup 
K_\rd) \right] \cup K_\rs \cup K_\rd \enspace,
\end{align*}
both a disjoint union of the sets.  So,
\begin{align*}
\ell &= |K_\ell| \\
 &= |J_\re| + |J_\rb| + |J_\rg| \\
  &\leq |I_\re| + |I_\rb \setminus (K_\rs \cup K_\rd)| + |I_\rg \setminus (K_\rs \cup 
	K_\rd)| + |K_\rs| + |K_\rd| \\
  &\leq |I_\re| + 2 |I_\rb \setminus (K_\rs \cup K_\rd)| + |K_\rs| + |K_\rd| \\
  &\leq |I_\re| + 2 |I_\rb| + |K_\rs|
\end{align*}
in which we used the property that $K_\rd \subseteq I_\rb$ in the last 
inequality. (By definition, for any~$i \in K_\rd$, the blueberry
decoding did not give an erasure, and further, the transmitted
symbol~$e_i' \not = e_i$. So the blueberry decoding was incorrect, i.e.,
$i \in I_\rb$.)
Since $|I_\re| = |J_\re|$, $|J_\rg| \geq 0$ and $|K_\rs| \leq 
\epsilon_\alpha \cdot \ell$, we have $|J_\rb| \leq 2 |I_\rb| + \epsilon_\alpha 
\ell$, as required.
}

\begin{corollary}
\label{cor:optcor}
 	If the corruption rate~$c$ of the channel satisfies $0 \leq c < \tfrac{1}{2}$, then 
except with probability smaller than $2^{- \Omega(N^\prime)}$, where 
$N^\prime$ is the length of the simulation protocol, the total number of bad 
rounds in the simulation is bounded as $N_\rb \leq (2 \epsilon_\beta + 
\epsilon_\alpha) N^\prime$, where $\epsilon_\alpha = 1 - \alpha$,
$\alpha$ is the 
distance parameter of the tree code, $\epsilon_\beta = 1- \beta$,
and~$\beta$ is the erasure parameter of the blueberry code.
\end{corollary}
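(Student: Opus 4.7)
The plan is to combine the deterministic combinatorial bound of Lemma \ref{lem:optcor} with the randomized erasure-conversion guarantee of Lemma \ref{lem:bbc}. First I would apply Lemma \ref{lem:bbc} to the full length-$N^\prime$ stream of blueberry-encoded transmissions: since each channel corruption is independently randomized by the secret blueberry key and the channel's corruption rate is $c$, except with probability $2^{-\Omega(N^\prime)}$ over that key, at least a $c(1 - 2 \epsilon_\beta)$-fraction of the transmissions are flagged as erasures. Because each corrupted transmission becomes either a flagged erasure or an undetected decoding error, the fraction $\delta$ of transmissions that are both corrupted and passed undetected satisfies
\begin{equation*}
\delta \;\leq\; c - c(1 - 2 \epsilon_\beta) \;=\; 2 c \epsilon_\beta.
\end{equation*}

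Second, I would invoke the hypothesis $c < \frac{1}{2}$ to conclude $\delta < \epsilon_\beta$, and then feed this bound into Lemma \ref{lem:optcor} to obtain
\begin{equation*}
N_b \;\leq\; (2 \delta + \epsilon_\alpha) N^\prime \;<\; (2 \epsilon_\beta + \epsilon_\alpha) N^\prime,
\end{equation*}
which is the claimed bound. The failure probability remains $2^{-\Omega(N^\prime)}$ since the only probabilistic step is the invocation of Lemma \ref{lem:bbc}.

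The only conceptual subtlety I anticipate is ensuring the two lemmas compose cleanly in spite of operating at different logical levels. Lemma \ref{lem:optcor} is a deterministic worst-case statement that holds for every fixed corruption pattern whose undetected-error fraction is at most $\delta$, whereas Lemma \ref{lem:bbc} provides the probabilistic guarantee (over the shared secret key) that such a bound is in fact met. Since the adversary does not have access to the secret key, even adaptive, probabilistic, entangled strategies consistent with the class $\A_{\frac{1}{2}-\epsilon}^S$ randomize each corruption into an erasure with probability $\beta$ in a manner sufficient for the Chernoff-style concentration underlying Lemma \ref{lem:bbc}. Hence the composition is legitimate and no extra care beyond conditioning on a fixed classical outcome of Eve (as done throughout section \ref{sec:analopt}) is required.
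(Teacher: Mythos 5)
Your proposal follows the same overall structure as the paper's proof: invoke Lemma~\ref{lem:bbc} to argue that, with overwhelming probability over the blueberry key, at least a $c(1-2\epsilon_\beta)$-fraction of the transmissions are flagged as erasures; conclude the undetected-error fraction is at most $2c\epsilon_\beta < \epsilon_\beta$; and feed this into the deterministic bound of Lemma~\ref{lem:optcor}. You also correctly note the key conditional-independence point that makes Lemma~\ref{lem:bbc} applicable against adaptive adversaries.

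The one refinement the paper makes that you omit is an explicit case split on whether $c \geq \epsilon_\beta$ or $0 \leq c \leq \epsilon_\beta$. In the latter regime the paper does not invoke Lemma~\ref{lem:bbc} at all: since the total number of corruptions is already at most $\epsilon_\beta N^\prime$, the undetected-error count is trivially and \emph{deterministically} bounded by $\epsilon_\beta N^\prime$, so $\delta = \epsilon_\beta$ works with no probabilistic argument. This matters because the Chernoff constant underlying Lemma~\ref{lem:bbc} scales with the number of corruptions $cN^\prime$; when $c$ is tiny, concentration among only $cN^\prime$ trials gives a weak exponent. By handling $c \leq \epsilon_\beta$ deterministically, the paper guarantees that whenever the probabilistic step is actually used, at least $\epsilon_\beta N^\prime = \Omega(N^\prime)$ transmissions are corrupted, so the $2^{-\Omega(N^\prime)}$ failure probability has a constant depending only on the protocol parameter $\epsilon_\beta$ and not on the adversary's chosen $c$. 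Your version reaches the same conclusion when $c$ is a fixed constant, but the case split is what makes the hidden constant in $\Omega$ uniform over the whole admissible range $0 \leq c < \frac{1}{2}$, which is the form in which the corollary is later applied. It is worth adding this two-line case distinction to your argument.
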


\begin{proof}
 	Suppose that the transmitted symbol is $g_i \in \Gamma$ after a blueberry 
encoding $\msB_j$ (where~$j \in \set{2i-1, 2i}$) and that conditional on her
classical state and some measurement outcomes $z_k$ until round~$i$, Eve
chooses to corrupt $g_i$ into a different $g_i^\prime \in \Gamma$. This 
action is independent of the randomness used in $B_j$, and it 
holds that Pr$[B_i^{-1} (g_i^\prime) \in \Sigma | z_1, \dotsc, z_i] = 
\epsilon_\beta$. This is independent of the classical state and any 
measurement outcome $z_i$ of Eve. We consider two cases. First, suppose
the corruption rate $c$ is bounded as $\epsilon_\beta \leq c < \tfrac{1}{2}$
(so that the corruption rate is at least a constant).
By 
\cref{lem:bbc}, with probability $1 - 2^{- \Omega 
(N^\prime)}$ at least a $(1 - 2 \epsilon_\beta)$-fraction of the~$c
N^\prime$ corrupted transmissions are detected as erasures. 
So the blueberry decoding gives at most $c N^\prime - c (1 - 
2 \epsilon_\beta) N^\prime = 2 c \epsilon_\beta N^\prime < \epsilon_\beta 
N^\prime$ transmission errors, except with probability negligible 
in $N^\prime$. Taking $\delta = \epsilon_\beta$ in the statement of 
\cref{lem:optcor} gives us the corollary. If $0 \leq c \leq \epsilon_\beta$, 
then the corollary is immediate from 
\cref{lem:optcor}, 
with $\delta = \epsilon_\beta$.
\end{proof}

With the above result in hand, we can show that if the corruption rate is 
$\tfrac{1}{2} - \epsilon$ with~$\epsilon > 0$, and we take $\epsilon_\alpha = \tfrac{1}{20} \epsilon, 
\epsilon_\beta = \tfrac{1}{40} \epsilon, N^\prime \geq 
\tfrac{2}{\epsilon} (N + 1)$, then except with negligible 
probability, the simulation succeeds:
\begin{align*}
    P\!\!\left(\frac{N^\prime}{2} + 1 \right)
          \quad &\geq N_\rg - 4 N_\rb \\
		& = \frac{N^\prime}{2} - N_\re - 5 N_\rb 
                    & \textrm{(By 
\cref{lem:erapi})} \\
		& \geq \epsilon N^\prime  - 5 N_\rb
                    & (\textrm{since } N^\prime/2 = N_\rg +
                       N_\rb + N_\re) \\
 		& \geq \epsilon N^\prime  - 5(2 \epsilon_\beta + 
			\epsilon_\alpha)N^\prime 
                    & (\textrm{since } N_\re \le (1/2 -
                       \epsilon) N^\prime) \\
 		& = N^\prime \left(\epsilon - \frac{10}{40} \epsilon - 
			\frac{5}{20} \epsilon \right) 
                    & \textrm{(By 
\cref{cor:optcor})} \\
		& = \frac{1}{2} \epsilon N^\prime \\
		& \geq N + 1 \enspace.
\end{align*}
\suppress{
The first inequality is from 
\cref{lem:erapi}, the first equality is 
by definition of $N_\rg, N_\rb, N_\re$, i.e., $\tfrac{N^\prime}{2} = N_\rg + N_\rb + 
N_\re$, the second inequality is from the fact that the number of erasure 
rounds is bounded by the number of corruption, i.e., $N_\re \leq 
(\tfrac{1}{2} - \epsilon) N^\prime$, and the third inequality is from our 
bound on $N_\rb$ due to 
\cref{cor:optcor}, which holds except with 
negligible probability.
}
That the simulation succeeds is now
immediate from 
\cref{cor:erapi}.
 
The above statement holds conditional on some classical state $z$ of the 
$Z$ register of Eve and on some respective views of Alice and Bob of the 
transcript at each round. To prove 
\cref{th:optach}, we 
argue as in 
\cref{sec:anal} in order to translate 
these results into the output state produced by the protocols, even when we consider 
inputs entangled with some reference register $R$. We do not repeat the
whole analysis here, since it is nearly identical to the analysis in
\cref{sec:anal} once we make the 
following observation. An arbitrary adversary Eve fitting the framework of the shared 
entanglement model could have adaptive, probabilistic behaviour based on 
previous measurement outcomes. However, these probabilistic choices are
independent of the secret key generated by Alice and Bob for the 
blueberry code. As in 
\cref{sec:anal}, the above result holds for each 
probabilistic choice of Eve. Summing over all 
such choices, we obtain the same result, proving 
\cref{th:optach}.

\section{Results in Other Models}
\label{sec:oth}

By adapting the results in the shared entanglement model 
for an adversarial error model, we can obtain several other interesting 
results. We first complete our study of the shared entanglement model 
with results in a random error setting. We then consider the quantum 
model and obtain results for both adversarial and random error settings. 
We also prove that the 
standard forward quantum capacity of the quantum channels used does not 
characterize their communication capacity in
the interactive communication scenario. Finally, we 
consider a variation on the shared entanglement model in which, along 
with the noisy classical communication, the shared entanglement is also 
noisy.

\subsection{Shared Entanglement Model with Random Errors}

In this section we consider two-party protocols with prior shared
entanglement and classical communication over binary symmetric
channels.
Given a two-party quantum protocol of length $N$ in the noiseless 
model and any $C > 0$, we exhibit a simulation protocol in the shared 
entanglement model that is of length $\rO(\tfrac{1}{C} N)$ and succeeds in 
simulating the original protocol with negligible error over classical 
binary symmetric channels of capacity $C$. More precisely, we have the
following theorem.
\begin{theorem}
\label{th:iidshent}
There exist constants $c, l > 0$ such that given any  $C > 0$ 
and  $N \in 2 \mathbb{N}$, there exists a universal simulator $S$ 
for noiseless quantum protocols of length~$N$ with the following
properties.  The simulator~$S$ is in the shared entanglement model, has
length~$N^\prime$, communication rate $R_\rC\geq l C$, transmission 
alphabet of size $2$, and entanglement consumption rate $R_\rE \leq 6$.
Further, the simulation succeeds with error at most $2^{- c N}$ for all 
noiseless protocols of length $N$ over any classical binary symmetric
channel $\M$ of capacity $C$.
\end{theorem}

We complement this with a lower bound for the communication rate. We
exhibit a sequence of two-party quantum protocols of 
increasing length~$N$ in the noiseless model such that for all $C > 0$, 
any corresponding sequence of simulation protocols of length
${\mathrm o}(\tfrac{1}{C} N)$ 
in the shared entanglement model with classical binary 
symmetric channels of capacity $C$ fails at producing 
the final state with low error on some input. Moreover, the family of quantum 
protocols can be chosen as one that computes a distributed binary function. 
More precisely, we have the next theorem.
\begin{theorem}
\label{th:iidshentopt}
There exists a sequence $\{\Pi_N \}_{N \in 2 \mathbb{N}}$ 
of two-party quantum protocols such that for all~$C > 0$, for any 
simulation protocol $S$ in the shared entanglement model of 
length $N^\prime \in {\mathrm o}(N/C)$ with 
communication rate $R_\rC= \tfrac{N}{N^\prime}$ and arbitrary entanglement 
consumption rate $R_\rE$, the simulation produces an error of at least
$1 - {\mathrm o}(1)$ over binary symmetric channels of capacity $C$.
\suppress{
There exists a sequence $\{\Pi_N \}_{N \in 2 \mathbb{N}}$ 
of two-party quantum protocols and constants $d, \epsilon > 0$ such that 
for all $N_0 \in \mathbb{N}$, there exist $N \geq N_0$ and $C > 0$ such 
that for any $R_\rE \geq 0$ and any simulation protocol $S$ in the shared 
entanglement model of length $N^\prime = \tfrac{d}{C} N$ with 
communication rate $R_\rC= \tfrac{N}{N^\prime}$ and arbitrary entanglement 
consumption rate $R_\rE$, the simulation does not succeed with error 
$\epsilon$ over the binary symmetric channels.
}
\end{theorem}

\subsubsection{Discussion of Optimality}

The above results show that in the regime where we use binary symmetric 
channels of classical capacity close to $0$, we cannot do much 
better than what we achieve, up to a multiplicative constant on top of 
the $\tfrac{1}{C}$ dilation factor. If we want to perform better in that 
regime, we would have to use the specifics of the operations implemented 
by the noiseless protocol instead of using these operations as black-boxes, even 
if we are restricting to protocols computing binary functions. We could,
however, hope to be able to get much better hidden constants, since we 
do not match the case of one-way communication in which the constant can 
be made arbitrarily close to $\tfrac{1}{2}$ as the quantum message size 
increases. Another regime of interest would be one for channels of capacity 
close to $1$, in which our techniques dilate the length of the protocols 
by a large multiplicative constant even when the error rate is low. In 
the classical case, recent results of Kol and Raz \cite{KR13} show how to 
obtain communication rates going to $1$ as the capacity goes to $1$. 

\subsubsection{Proof of \cref{th:iidshent}}

In Lemma~2 of Ref.~\cite{Sch96}, it is stated that, given a transmission alphabet 
$\Sigma$, there exists $d > 0$ and $\epsilon \in (0, \tfrac{1}{90})$ such that given a binary 
symmetric channel $\M$ of capacity $C$, there is a $p \in \mathbb{N}$, 
$p \leq d \tfrac{1}{C}$, an encoding function $\msE: \Sigma \rightarrow \{0, 1 
\}^p$ and a decoding function $\msD: \{0, 1 \}^p \rightarrow \Sigma$ such 
that Pr$[\msD( \M (\msE (e))) \not= e] \leq \epsilon$ for all $e \in \Sigma$.
We use this in conjunction with the result of 
\cref{th:bas} and the Chernoff bound to obtain the following result.
Consider $\epsilon  <  \tfrac{1}{80}$, $\Sigma$ given by 
\cref{lem:tccode} for a tree code of arity $48$ and distance parameter 
$\alpha = \tfrac{39}{40}$, the corresponding $d > 0$, 
and the length~$N^{\prime \prime} = 4 (1 + \tfrac{1}{N}) N$ of the 
basic simulation protocol over alphabet $\Sigma$
for the length $N$ of the noiseless protocol to be simulated.
Given a binary symmetric channel of capacity $C$ 
and the corresponding $p \in \mathbb{N}$, $\msE$, and $\msD$,
if all the $\Sigma$ 
transmissions in the basic simulation protocol are done by re-encoding 
over $\{0, 1 \}^p$ with $\msE$ (and decoding with $\msD$), then 
$N^\prime = p N^{\prime \prime}$ is the length of the 
oblivious simulation protocol over the binary symmetric channel,
and except with 
probability $2^{- \Omega(N^{\prime \prime})}$,  the error rate for 
transmission of $\Sigma$ symbols is below $\tfrac{1}{80}$.
By 
\cref{th:bas} the simulation succeeds.

\subsubsection{Proof of \cref{th:iidshentopt}}

It is known that for a classical discrete memoryless channel such as the 
binary symmetric channel, entanglement assistance does not increase the 
classical capacity \cite{BSST02}, and it is also known that allowing for 
classical feedback does not lead to an increase in the classical 
capacity. However, we might hope that allowing for both simultaneously 
might lead to improvements. This is not the case: classical feedback 
augmented by shared entanglement can be seen as equivalent to quantum 
feedback, and it is also known that for discrete memoryless quantum 
channels, the classical capacity with unlimited quantum feedback is equal 
to that with unlimited entanglement assistance \cite{Bow04}. Hence, in 
the shared entanglement model, the classical capacity of the binary 
symmetric channels used is not increased by the entanglement assistance 
and the other binary symmetric channel's feedback. For 
some protocols of length $N$ fitting our general framework in the 
noiseless model, such as those accomplishing a quantum swap function or even 
a classical swap or bitwise XOR functions on inputs of size 
$\tfrac{N}{2}$, the parties effectively exchange their entire inputs 
to produce the correct output. Hence, a dilation factor proportional to the 
inverse of the capacity $\tfrac{1}{C}$ is necessary.
\suppress{since these protocols are equivalent to a communication
of $\tfrac{N}{2}$ bits or qubits in each direction.}
What we wish to prove is even stronger: there exists a family 
of distributed binary functions such that this is necessary. We consider 
the inner product function $\IP_n: \{0, 1 \}^n \times \{0, 1 \}^n 
\rightarrow \{0, 1 \}$, defined as $\IP_n (x, y) = \oplus_{i=1}^n x_i \wedge y_i$, 
which has communication complexity in $\Theta (n)$ in 
both the Yao and the Cleve--Buhrman quantum communication complexity model 
\cite{CvDNT98,NS06}.

By a reduction due to Cleve, van Dam, Nielsen, and Tapp~\cite{CvDNT98},
any protocol evaluating the $\IP_n$ function with small error can be used 
to transmit $n$ classical bits with small probability of error.
Hence, any noise-tolerant simulation of such a protocol over a channel
of classical capacity $C$ can be used to transmit $n$-bit strings
with some small probability of failure. As a consequence,
for small enough error, the simulation requires at least $\tfrac{1}{C} n$ 
uses of the channel. 
\suppress{
Since for any small enough error, the communication 
complexity of $\IP_n$ is $\Theta (n), N \in \Theta (n)$ (if the protocol 
does not waste communication), and $N^\prime \in \Omega (\tfrac{1}{C} n) = 
\Omega (\tfrac{1}{C} N)$ is required for the simulation to succeed.
}
Note 
that we have made the reasonable assumption that we can run the 
simulation backward over the noisy channel at the same communication cost 
or else that we can start with a coherent protocol for the inner product 
function. The restriction of having protocols compute a function in 
a coherent way is natural if we wish to compose quantum simulation 
protocols; then they may be run on arbitrary superpositions of inputs.

\subsection{Quantum Model with Adversarial Errors}

We turn our attention to two-party protocols where there is no prior
entanglement and the communication is over noisy \emph{quantum\/} channels.
Given an adversarial channel in the quantum model with error rate 
strictly smaller than $\tfrac{1}{6}$, we can simulate any noiseless 
protocol of length $N$ over this channel using a number of transmissions
linear in $N$. More precisely, we show the following.
(See~\cref{sec:nqucommqu} for the definition of 
$\A_{\tfrac{1}{6} - \epsilon, q, N^\prime}^\rQ \; $ mentioned in the theorem.)
\begin{theorem}
\label{th:advstand}
There exists a constant $c > 0$ such that 
for arbitrarily small $\epsilon > 0$, there exist a communication rate
$R_\rC > 0$ and an alphabet size $q \in \mathbb{N}$ such that for 
all~$N \in 2 \mathbb{N}$, there exists a universal simulator 
$S$ for noiseless quantum protocols of length~$N$ with the following
properties.
The simulator~$S$ is in the quantum model, has length $N^\prime$,
communication rate at least $R_\rC$, and transmission alphabet size $q$.
Further, the simulation succeeds with error at most
$2^{- c N}$ for all noiseless protocols of length $N$ against 
all adversaries in $\A_{\tfrac{1}{6} - \epsilon, q, N^\prime}^\rQ \; $.
\end{theorem}

\subsubsection{Proof of  \cref{th:advstand}}

The approach we take in the quantum model is to emulate the simulation in 
the shared entanglement model. First, we use the quantum 
channels available to distribute sufficient entanglement.
Alice and Bob can use entanglement to generate a secret key. They then
use the quantum channels
effectively as classical channels along with the entanglement to run the 
simulation protocol from 
\cref{sec:opt}. Thus the simulation
consists of an entanglement distribution phase, followed by a protocol
implementation phase.

Specifically, suppose we wish to emulate a simulation protocol of 
length~$N'$ in the shared entanglement model. Alice uses $l N^\prime$
transmissions, for a parameter~$l$ to be specified below, to
distribute sufficient perfect entanglement to Bob through the use 
of a quantum error correcting code (QECC).  
(We refer the reader to Ref.~\cite[Chapter 10]{NC00} for the definition of a 
QECC.) They then run the simulation protocol in 
\cref{sec:opt}.  During this protocol implementation
phase, before transmission and after reception of a quantum register
through the channel, both the sender and the 
receiver measure the register. These measurements 
have the effect of transforming all possible quantum actions of Eve into 
classical actions. Conditioned on the results of the 
two measurements, the corresponding branches of the simulation proceed 
exactly as if the sender and the receiver had transmitted and received 
information over a classical channel. If the size~$q$ of the
communication register is larger than the alphabet size~$\Gamma$ of the
transmissions, and Eve maps some of these classical messages 
outside of $\Gamma$, Alice and Bob mark these as 
erasures. So Eve does not gain anything by introducing errors outside
$\Gamma$. 

We start by pinning down the parameters of the 
QECCs needed to distribute the necessary amount of entanglement.
In the interest of simplicity, we do not attempt to optimize the 
parameters involved.

For a given $\epsilon > 0$, let $s = 
\tfrac{(|\Gamma|!)}{(|\Gamma|-|\Sigma|)!}$ be the size of the shared 
secret key used to do the blueberry encoding in each round of the
simulation in 
\cref{sec:opt}.
Two maximally entangled states of size $2 s$, i.e., states of 
the form $\sum_{j=0}^{2 s - 1} \ket{j}^{T_\sA} \ket{j}^{T_\sB}$, are used to 
generate the secret keys and to create the EPR pairs 
required for teleportation in every round. 
For a given size $q$ for the communication register, and for a simulation 
protocol in the shared entanglement model of length $N^\prime$, we 
distribute a maximally entangled state over $ N^\prime \log_q (2s)$ 
registers of size~$q$.

In the entanglement distribution phase of the simulation in the quantum
model, we encode the~$N^\prime \log_q (2s)$ registers into $l N^\prime$ 
registers of size $q$. For the encoding, we use a quantum error correcting 
code with alphabet size $q$, transmission rate $R_\rQ \ge \tfrac{1}{l}
\log_q (2 s)$, and maximum tolerable error rate $\delta$ to be determined
shortly.
We only consider exact QECCs, but the analysis 
extends to approximate ones. (Approximate error correction allows for
some deviation from perfect transmission.) 

To determine the relationship
between $q, l,$ and $\delta$ required for the simulation to succeed,
we first note that in the protocol implementation phase (the second
phase of the simulation), we
transmit classical messages chosen from a set of size $|\Gamma|$ over 
the quantum channel. For simplicity, we choose~$q \geq |\Gamma|$.
To ensure that this second phase succeeds, the number of corruptions 
in it should be bounded by $(\tfrac{1}{2} - \epsilon) N^\prime$.
An adversary could choose to put all of the allowed corruptions in the 
first (entanglement distribution) phase, so the QECC should be able 
to recover from the same 
number of errors. In other words, we require~$\delta l N' \ge 
\tfrac{N^\prime}{2} - \epsilon N^\prime$. The length of the message
in the entanglement distribution phase satisfies~$l \geq \tfrac{1 - 2 
\epsilon}{2 \delta}$. In summary, the entire simulation tolerates
$\tfrac{N^\prime}{2} - \epsilon N^\prime$ adversarial errors during a 
total of $(l + 1) N^\prime$ transmissions of size~$q$ registers
provided a suitable QECC exists. The error rate tolerated is
$\tfrac{1 - 2\epsilon}{2 (l + 1)}$.

The above analysis applies to the oblivious communication model.
If we restrict ourselves to the alternating communication model,
we have twice as much communication, i.e., $2 l N'$ size-$q$
registers, in the entanglement transmission phase.
The adversary can choose to corrupt the transmissions of one party
alone, so~$l \geq \tfrac{1 - 2  \epsilon}{2 \delta}$ as before.
The total number of transmissions is, however, $(2l + 1)N^\prime$, so the
error rate tolerated is $\tfrac{1 - 2\epsilon}{2 (2l + 1)}$.

We now appeal to a high-dimensional quantum Gilbert--Varshamov bound \cite{AK01, 
FM04} stating that for arbitrarily small $\epsilon^\prime > 0$, there 
exist strictly positive communication rate $R_\rQ > 0$ and large enough 
transmission alphabet size such that families of quantum codes of 
arbitrarily large length exist which can tolerate a fraction $\tfrac{1}{4} 
- \epsilon^\prime$ of errors and allow for perfect decoding of the quantum 
state. Using these codes with~$\epsilon^\prime = \epsilon$, we 
get~$\delta = \tfrac{1}{4} - \epsilon$, $l \ge
\tfrac{1-2 \epsilon}{2\delta} = \tfrac{2(1-2 \epsilon)}{1- 4\epsilon}$
and net error rate~$\tfrac{1 - 2\epsilon}{2 (2l + 1)} = \tfrac{(1-2
\epsilon)(1- 4\epsilon)}{6 - 16\epsilon} \ge
\tfrac{1}{6} - \epsilon$ that the simulation protocol can tolerate 
in an oblivious model of communication. In an alternating model 
of communication, we are able to tolerate an error rate of $\tfrac{1}{10} - 
\epsilon$.

The above choice of parameters ensures that the error rate in the 
entanglement distribution phase is bounded by~$\tfrac{1}{4} - \epsilon$,
and the received quantum state can be decoded perfectly.
This establishes a shared maximally entangled
state of the required dimension. Moreover, 
the corruption rate of the adversary during the 
protocol implementation phase is lower than $\tfrac{1}{2} - \epsilon$. 
Recall that Alice and Bob measure the states received over the quantum
channel in the standard basis to convert it to a classical channel.
Given any strategy of the adversary, which is necessarily independent 
of the secret key used for the blueberry codes, for any choice of
measurement outcomes for Alice and Bob, the simulation 
succeeds with probability exponentially close to~$1$ (in terms
of~$N^\prime$). The remainder 
of the analysis follows that in  \cref{sec:analopt}, proving  
\cref{th:advstand}.

\subsubsection{Discussion of Optimality}

If we consider only perfect QECCs for quantum 
data transmission, it is known that we cannot tolerate error rates of 
more than $\tfrac{1}{4}$ asymptotically. With the approach of first 
distributing entanglement and then using the $\tfrac{1}{2} - \epsilon$ 
error rate simulation protocol in the shared entanglement model, we get
an overall tolerable error rate for the 
simulation of less than $\tfrac{1}{6}$. Cr\'epeau, Gottesman, and 
Smith \cite{CGS05} showed how we can tolerate an error rate
up to $\tfrac{1}{2}$ asymptotically for data transmission if we 
consider approximate QECCs.  Using these, we could get a tolerable 
error rate of $\tfrac{1}{4} - \epsilon$ for a two phase simulation protocol as 
described above. However, their register size, as well as the number of 
communicated registers, is linear in the number of transmitted qubits in
the original protocol.  This would lead to a communication rate of $0$ 
asymptotically in the simulation.
It would be interesting to see whether we can do 
something similar with register size independent of the transmission 
size, but possibly dependent on the fidelity we want to reach and how 
close to $\tfrac{1}{2}$ (or some other fraction strictly larger than 
$\tfrac{1}{4}$) we would like the tolerable error rate to be. Using this
kind of code, if we break up the simulation into two phases---an 
entanglement distribution part and then a protocol implementation 
part---the above is the best we can do. We might hope to develop a fully 
quantum analogue of tree codes that does not entail the two phase 
simulation, in order to achieve higher error rates. The putative quantum 
codes would require some properties for fault-tolerant computation, 
so that we may coherently apply the noiseless protocol unitary operations 
in the simulation.  This issue does not occur in the fully classical
setting, since we can copy classical information and perform the computation 
on the copy. 

Finally, we note that the proof of 
\cref{th:optobl} applies here as 
well. It establishes a bound of $\tfrac{1}{2}$ 
on the maximum error rate tolerable in an oblivious communication model,
that is, no simulation protocol in the quantum model can 
succeed with arbitrarily small error against all adversaries in 
$\A_{\tfrac{1}{2}, q, N^\prime}^\rQ$ for any $q, N^\prime \in \mathbb{N}$.
(See~\cref{sec:nqucommqu} for the definition of  
$\A_{\tfrac{1}{2}, q, N^\prime}^\rQ \; $.)


\subsection{Quantum Model with Random Errors}

We shift our focus to quantum communication over depolarizing channels.
Given a two-party quantum protocol of length $N$ in the noiseless 
model and any $C_Q > 0$, we devise a simulation protocol in the quantum 
model that is of length $\rO(\tfrac{1}{C_Q} N)$ and succeeds in simulating the 
original protocol with arbitrarily small error over quantum depolarizing 
channels of quantum capacity $C_Q$. (We refer the reader to Ref.~\cite[Chapter 23]{Wilde11} for the definition of quantum capacity $C_Q$.) More precisely,
we state the following theorem.
\begin{theorem}
\label{th:iidstand}
There exist a constant $l > 0$ and a function $f: \mathbb{N} 
\rightarrow \mathbb{R}^+$ with $\lim_{N \rightarrow \infty} f(N) 
= 0$ such that given any  $C_Q > 0$ and  
$N \in 2 \mathbb{N}$, there exists a universal simulator~$P$ 
for noiseless quantum protocols of length~$N$ with the following
properties.
The simulator~$P$ is in the 
quantum model, has length $N^\prime$, communication rate $R_\rQ \geq l 
C_Q$, and transmission alphabet size $2$.
Further, the simulation succeeds with error at most $f(N)$ in
simulating all noiseless protocols of length $N$ over depolarizing 
channel $\M$ of quantum capacity $C_Q$.
\end{theorem}

We point out that quantum capacity with feedback is a lower bound on the
dilation needed to simulate protocols over depolarizing channels.
There exist a sequence of two-party quantum protocols of 
increasing length $N$ in the noiseless model such that for all $C_Q^\rB > 0$, 
any corresponding sequence of simulation protocols of length 
${\mathrm o}(\tfrac{1}{C_Q^\rB} N)$ in the quantum model with 
quantum depolarizing channels of 
quantum capacity~$C_Q^\rB$ with classical feedback fails at producing the final 
state with low error on some input. (We refer the reader to Refs~\cite{BDS97, LLS09} for definitions of quantum capacity with classical feedback $C_Q^\rB$ and quantum capacity with free assistance by two-way classical communication $C_Q^2$.) Moreover, the family of 
quantum protocols can be chosen as one computing a distributed binary 
function. 
\begin{theorem}
\label{th:iidstandconv}
There exists a sequence $\{\Pi_N \}_{N \in 2 
\mathbb{N}}$ of two-party quantum protocols 
such that for all 
$C_Q^\rB > 0$,
for any simulation protocol $P$ in the quantum model 
of length $N^\prime \in {\mathrm o}(N/C)$
with communication rate $R_\rQ = 
\tfrac{N}{N^\prime}$, the simulation produces an error of at least
$\Omega (1)$
over quantum depolarizing channels of quantum capacity~$C_Q^\rB$ with classical feedback.
\end{theorem}
 
It turns out that quantum capacity does not capture the ability to
transmit information in an interactive setting.
Given a two-party quantum protocol of length $N$ in the noiseless 
model, there exist a quantum depolarizing channel of unassisted forward 
quantum capacity $C_Q=0$ and a simulation protocol in the quantum model 
with asymptotically positive rate of communication which succeeds in 
simulating the original protocol with arbitrarily small error over that 
quantum channel.
\begin{theorem}
\label{th:iidstandopt}
There exist constants $c, R_\rQ > 0$ such that given  any $N \in 2  
\mathbb{N}$, there exists a universal simulator $P$ 
for noiseless quantum protocols of length~$N$ with the following
properties.
The simulator~$P$ is in the quantum model, has length $N^\prime$,
communication rate at least $R_\rQ$, and transmission 
alphabet size $2$.
Further, the simulation succeeds with error at most~$2^{- cN}$ at
simulating all noiseless protocols of length $N$ over a 
particular depolarizing quantum channel $\M_0$ of forward 
quantum capacity $C_Q = 0$.
\end{theorem}

\subsubsection{Proof of  \cref{th:iidstand}}

For the case of random error in the quantum model, we use techniques similar 
to the case of adversarial error. Indeed, we split the protocol into two 
phases: an entanglement distribution phase and a protocol implementation
phase. 

It suffices to adapt the result from 
$\cref{sec:bas}$ for a basic simulation protocol of length 
$N^{\prime \prime}$ over some large alphabet $\Sigma$. We then need only 
distribute $N^{\prime \prime}$ maximally entangled states of the
appropriate size. For any depolarizing channel 
of quantum capacity $C_Q > 0$, we use standard coding results from 
quantum Shannon theory~\cite{Wilde11} 
to distribute entanglement at a rate of $\tfrac{d}{C_Q}$ for some $d 
> 0$ with low error. Then, for the protocol implementation phase, we appeal
to two properties.
First, the classical capacity $C$ of a quantum channel is at least as 
large as its quantum capacity. Second, a classical 
capacity achieving strategy for the depolarizing channel is to 
simulate a binary symmetric channel (BSC) of capacity $C$ for each 
transmission by measuring the output in the computational basis, and then 
to block code over the corresponding BSC (see, e.g., 
Ref.~\cite{Wilde11} for details). We can then translate the proof of 
\cref{th:iidshent} in order to design our classical strategy. This succeeds 
with overwhelming probability assuming perfect entanglement, 
and the output is arbitrarily close to the noiseless protocol output. 
Combining the bound on the error from the two phases,
the simulation can be made to succeed with error less 
than $f(N)$ over the depolarizing channel of quantum capacity $C_Q$, for 
some function $f: \mathbb{N} \rightarrow \mathbb{R}^+$ which 
asymptotically goes to zero. 

\subsubsection{Proof of  \cref{th:iidstandconv}}

The idea for this proof is to use the fact that distributing an EPR pair over a quantum depolarizing channel
produces a Werner state, which is symmetric under the interchange of Alice and Bob (see 
\cref{sec:nsent} for a definition of Werner states). 
Moreover, if Bob uses the free classical feedback to teleport to Alice with these Werner states, this creates a virtual
depolarizing channel from him to Alice, with the same parameter as the actual channel from Alice to him.
Hence, a quantum depolarizing channel from Alice to Bob along with free classical feedback is sufficient
to simulate depolarizing channels in both directions, and the total number of uses of the depolarizing channel is the same in both cases. 

Similar to what was argued in the proof of 
\cref{th:iidshentopt} for classical communication, there exist protocols of 
length $N$ that fit our general framework in the noiseless model and
can be used to communicate up to $\tfrac{N}{2}$ qubits in each direction.
Hence, since our simulation protocols of length $N^\prime$ can be simulated by 
$N^\prime$ uses of a depolarizing channel from Alice to Bob supplemented 
by classical feedback from Bob to Alice, we cannot have a rate of 
communication better than $\tfrac{N}{2 C_Q^\rB}$ for small enough error. 

To prove that a protocol to compute a binary function is sufficient, we once 
again consider the inner product function $\IP_n$. 
We apply a coherent version of the idea to use 
the inner product protocol to communicate, as in the proof of 
\cref{th:iidshentopt}. This allows us  to use the depolarizing channel to distribute 
quantum entanglement, and then also to teleport
(again with the inner product protocol used this time to communicate 
classical information). 
For this, it is sufficient to note that what we 
achieved in the proof of 
\cref{th:iidshentopt} using the protocol 
for $\IP_n$ is actually stronger than $\Theta (N)$ bits of classical 
communication: we had a coherent bit channel \cite{Har04} for $\Theta 
(N)$ cobits (coherent bits), which can be used to distribute $\Theta (N)$ 
ebits (EPR pairs).
Note 
that we once again make the reasonable assumption that we can run the 
simulation backward over the noisy channel at the same communication cost 
or that we can start with a coherent protocol for the inner product 
function.

\subsubsection{Proof of \cref{th:iidstandopt}}

The case of the depolarizing channel requires some technical work, so for 
simplicity we first consider the case of the quantum erasure channel. For 
the quantum erasure channel, we use the property that, for erasure 
probability $\tfrac{1}{2} \leq p < 1$, the (forward, unassisted) quantum 
capacity is $0$ while both the classical capacity  and the 
entanglement generation capacity with classical feedback equal
$1-p$~\cite{BDS97}. Moreover, the feedback required to achieve this bound is only one 
message of length linear in the size of the quantum communication. The 
strategy we use is the following: for a basic simulation protocol of 
length $N^{\prime \prime}$ over $\Sigma$, Alice distributes $N^{\prime 
\prime}$ EPR pairs
to Bob by sending $\tfrac{4 N^{\prime \prime}}{(1-p)}$ 
halves of such states over the quantum erasure channel. Then, except with 
negligible probability, at least $N^{\prime \prime}$ of them are received 
intact, and Bob knows which ones these are. The feedback consists of 
informing Alice which $N^{\prime \prime}$ pairs were received intact and
can be used in the protocol. This can be done over the 
quantum erasure channel, with probability negligibly smaller than 1, with a 
classical message of length linear in $N^{\prime \prime}$. 

Then, given a message set $\Sigma$ we can use the quantum erasure channel 
a constant number of times to decrease the probability of error in a 
classical transmission of any symbol $e \in \Sigma$ below $\tfrac{1}{90}$. 
Except with negligible probability, the fraction of $N^{\prime \prime}$ 
transmissions of symbols of $\Sigma$ transmitted in this way is below 
$\tfrac{1}{80}$. We can then use a reasoning similar to that in the proof of 
\cref{th:advstand} to argue that the output is arbitrarily close 
to the noiseless protocol output. 

Now for the depolarizing channel, the reasoning is mostly the same, but we 
have to work harder to obtain (almost) noiseless entanglement. The 
unassisted forward capacity of the depolarizing channel is shown in 
Ref.~\cite{BDSW96} to be equivalent to one-way entanglement distillation 
yield. To separate one-way and two-way entanglement distillation, they 
use a combination of the recurrence method of Ref.~\cite{BBPSSW95} along 
with their hashing method. The recurrence method is an explicitly two-way 
entanglement distillation protocol, which can purify highly noisy 
entanglement but does not have a positive yield in the limit of high 
fidelity distillation.  The hashing method is a 
one-way protocol with positive yield in the perfect fidelity limit, but 
which does not work on highly noisy entanglement. We cannot hope 
to use this strategy to distill near-perfect EPR pairs in our scenario 
since the hashing method as they describe it requires too much 
communication. (We could probably use a derandomization 
argument to avoid communicating the random strings in this protocol.)
To reduce the 
communication cost, we instead use a hybrid approach of entanglement 
distillation followed by quantum error correction. 

Starting with a depolarizing channel with depolarizing parameter as high 
as possible, but still low enough to have $C_Q = 0$, we use it to 
distribute imperfect EPR pairs. This yields (rotated) Werner states with 
the highest possible fidelity to perfect EPR pairs, but such that one-way 
entanglement distillation protocols cannot have a positive yield of EPR 
pairs while two-way entanglement distillation protocols can. (See 
\cref{sec:nsent} for a definition of Werner states.) We then do one round 
of the recurrence method for entanglement distillation to obtain a lesser 
number of Werner states of higher fidelity to perfect EPR pairs, and so 
we could now use one-way distillation protocols on these to obtain a 
positive yield of near-perfect EPR pairs. The amount of 
classical communication required up to this point is one message from 
Alice to Bob of linear length informing him of her measurement outcomes, 
and then one classical message of linear length from Bob to Alice 
informing her which states to keep as well as which rotation to apply to 
these. (The rotation takes the states back to the symmetric Werner form;
$\log{12}$ bits of 
information per pair is sufficient for this purpose \cite{BDSW96}.)
We now use these EPR pairs along with teleportation to effectively 
obtain a depolarizing channel of quantum capacity $C_Q > 0$. We use 
standard coding from quantum Shannon theory~\cite{Wilde11}
over this quantum channel to 
distribute $N^{\prime \prime}$ near-perfect EPR pairs. This new step
only requires a linear amount of classical communication. After 
the initial very noisy entanglement distribution step, we thus only have three 
classical messages to send over the depolarizing channel of classical 
capacity $C > 0$. We generate near-perfect entanglement using 
the depolarizing channel a linear number of times, and then go on to 
the protocol implementation phase as before. Note that we are not yet 
guaranteed
an exponential decay of the error at this point, but only that the error 
tends to zero in the limit of large $N$. To get exponential decay in
error, we adapt 
the above protocol. Before using teleportation and QECCs to 
distribute near-perfect entanglement, we perform a few more rounds of the 
recurrence method until the Werner states reach fidelity parameter above 
$0.82$. Except with negligible probability, starting with some linear 
number of noisy EPR pairs, after a constant number of rounds of the 
recurrence method, we are left with sufficiently many less noisy EPR 
pairs for our next step. At this point, it is known that there exist 
stabilizer codes achieving the hashing bound (which has strictly positive 
yield for this noise parameter) and which have negligible error. Using 
the property that some classical capacity achieving strategy for the 
depolarizing channel also has negligible error, we get the stated 
exponential decay in the error. 

\subsubsection{Discussion of Optimality}

It is known that for some range of the depolarizing parameter, the 
quantum capacity~$C_Q^\rB$ with classical feedback of the depolarizing 
channel is strictly larger than its unassisted forward quantum capacity 
$C_Q$ \cite{BDSW96}. In particular, there exist values for which $C_Q=0$ but 
$C_Q^\rB > 0$. A careful analysis of the related two-way entanglement 
distillation protocols (in particular their communication cost and their 
amount of interaction) reveals that there is some range of the 
depolarizing parameter for which we can achieve successful simulation 
even though $C_Q=0$, by using the depolarizing channels in each direction 
to transmit classical information. 
This proves that the 
standard forward quantum capacity of the quantum channels used does not 
characterize their communication capacity in
the interactive communication scenario.
Note that $C_Q^\rB > 0$ if and only if 
the depolarizing parameter $\epsilon^\prime < \tfrac{2}{3}$, and so $C_Q^\rB > 
0$ if and only if the quantum capacity assisted by two-way classical 
communication $C_Q^2 > 0$. In the case where we are given a depolarizing 
channel with $C_Q^\rB > 0$, we can modify the method used in the proof of 
\cref{th:iidstandopt}. We iteratively use the recurrence method a 
constant number of times on the noisy distributed EPR pairs, until the 
depolarizing channels induced through teleportation over the noisy 
distilled EPR pairs have non-zero forward quantum capacity.
(Here the constant depends on the depolarizing parameter, but not on~$N$.)
Then we distribute entanglement over the induced channels 
using standard QECCs. We achieve asymptotically positive rates of 
communication for our simulation protocols. It is an interesting open 
question whether we can close the gap between our lower and upper bounds 
and always achieve successful simulation at a rate $\rO(\tfrac{1}{C_Q^\rB} N)$. 
The separation result regarding the forward, unassisted quantum capacity 
of the depolarizing channel requires some technical work, but the case of 
the erasure channel already makes it clear that in general for discrete 
memoryless quantum channels, the unassisted forward quantum capacity is 
not the most suitable quantity to consider in the setting of interactive 
quantum communication. 

\subsection{Noisy Entanglement}
	\label{sec:nsent}

The last model we consider is a further variation on the shared 
entanglement model, in which, along with the noisy classical links 
between the honest parties, the entanglement these parties share is also 
noisy.

There are many possible models for noisy entanglement; we consider a 
simple one in this section, in which parties share noisy EPR pairs 
instead of perfect pairs. Following Ref.~\cite{BBPSSW95}, we consider the 
so-called (rotated) Werner states $W_F = F \kb{\Phi_{00}}{\Phi_{00}} + 
\tfrac{1 - F}{3} (\kb{\Phi_{01}}{\Phi_{01}} + \kb{\Phi_{10}}{\Phi_{10}} + 
\kb{\Phi_{11}}{\Phi_{11}})$, which are mixtures of the four Bell states 
parametrized by $0 \leq F \leq 1$. Note that these are the result of 
passing one qubit of an EPR pair through a $\T_{\epsilon^\prime}$ 
depolarizing channel, for $F = 1 - \tfrac{3 \epsilon^\prime}{4}$. The 
purification of these noisy EPR pairs is given to Eve. We use the result 
of Ref.~\cite{BBPSSW95} to show that for any $F > \tfrac{1}{2}$, simulation 
protocols with asymptotically (in $N \rightarrow \infty$, not in $F 
\rightarrow \tfrac{1}{2}$) positive communication rates and which can 
tolerate a positive error rate can succeed with asymptotically zero 
error. This is optimal since at $F = \tfrac{1}{2}$, Werner states are 
separable, so there is no way to use them in conjunction with classical 
communication to simulate quantum communication.

\subsubsection{Adversarial Errors in the Classical Channel}

We first consider the case of adversarial errors. Let $l_c$ be the number 
of rounds of the recurrence method~\cite{BBPSSW95} for entanglement distillation 
necessary to reach the $F = 0.82$ bound. This number is independent of 
$N$, and depends only on the initial value of the parameter~$F$. As described in the proof of 
\cref{th:iidstandopt}, each round of the recurrence method only 
requires a linear length classical message in each direction. After this bound is 
reached, one last linear length classical message is sufficient to 
generate a linear amount of entanglement through teleportation via an 
induced depolarizing channel of non-zero quantum capacity~$C_Q$. Standard 
quantum error correction techniques enable us to extract near-perfect 
entanglement at this point. Once we have near-perfect entanglement, we 
can use techniques from the basic simulation protocol to perform 
successful simulation of noiseless protocols and hence achieve our goal. 
The protocol sketched above requires the communication of $2 l_c + 1$ 
messages to distill near-perfect entanglement, independent of $N$, 
followed by a phase of simulating the message transmissions from the
original protocol. The simulation protocol
tolerates a constant error rate, though inversely proportional to $l_c$.
It requires a constant rate of noisy entanglement consumption, which is 
exponential in $l_c$ since each round of the recurrence method consumes 
at least half of the noisy EPR pairs. The protocol has a constant, positive 
rate 
of communication, though inversely proportional to the number of consumed 
noisy EPR pairs.

\subsubsection{Random Errors in the Classical Channel}

The case of noisy communication through binary symmetric channels once 
again is immediate from the adversarial error case by a concentration 
of measure argument. The communication rate of the resulting protocol 
is inversely proportional to the 
classical capacity $C$, and also to the number of noisy EPR pairs 
consumed.

\section{Conclusion: Discussion and Open Questions}

In this work,
we proposed a simulation of interactive quantum protocols intended 
for noiseless communication over noisy channels. Our approach 
is to replace irreversible measurements by reversible pseudo-measurements 
in the Cleve--Buhrman model, i.e., the model with shared entanglement and classical 
communication. Then, in the noisy version of the model, we teleport back and forth the 
corresponding quantum communication register to avoid losing quantum 
information. 
We develop a representation for such noisy quantum protocols that gives
an analogue of Schulman's protocol tree representation for classical protocols.
We prove that with this approach,
it is possible to simulate the evolution of 
quantum protocols designed for noiseless quantum channels over noisy classical
channels with only a linear dilation factor.

In the case of 
adversarial channel errors in which the parties are allowed to pre-share a 
linear amount of entanglement, we prove that the error rate 
of $\tfrac{1}{2} - \epsilon$ that our simulation tolerates is 
optimal unless we allow adaptive protocols. (An adaptive protocol is a 
generalization of the noisy communication model wherein the
order in which the parties take turns speaking can be adapted to 
the errors.) In a noisy setting, restricting to non-adaptive (oblivious) 
protocols 
seems natural. Adaptive protocols run the risk of entering a deadlock:
depending on the particular view of each party of
the evolution of the protocol due to previous errors, the parties could disagree 
on whose turn it is to speak. This would result in protocols that are 
not well defined.

To get the tolerable error rate as high as $\frac{1}{2} - \epsilon$, we develop
new techniques along with a new bound on tree codes with an erasure symbol, 
\cref{lem:optcor}.
To simplify the exposition, we chose not to optimize the parameters in 
our simulation protocol such as communication and entanglement consumption 
rates, or the size of the communication register.

We adapt our 
findings to a random error model in which parties are allowed to share 
entanglement but communicate over binary symmetric channels of non-zero
capacity $C$. We obtain communication rates proportional to $C$. We 
show that, up to a hidden constant, this is optimal for some family of 
distributed binary functions, for example the inner product functions 
$\IP_n : \{0, 1 \}^n \times \{0, 1 \}^n \rightarrow \{0, 1 
\}$, defined as $\IP_n (x, y) = \oplus_{i=1}^n x_i \cdot y_i$.
\suppress{
\mbox{\textsl{IP}}_n : \{0, 1 \}^n \times \{0, 1 \}^n \rightarrow \{0, 1 
\}$, defined as $\mbox{\textsl{IP}}_n (x, y) = \oplus_{i=1}^n x_i \cdot y_i$.
}
Our 
findings can also be adapted to obtain similar (though not optimal) 
results for the quantum model (the noisy version of Yao's model). Here, 
the simulation protocols run in two phases. In the first, 
a preprocessing phase,
a linear amount of entanglement is distributed with standard techniques 
from quantum Shannon theory for random noise and from quantum coding 
theory for adversarial noise. This is followed by a simulation phase in which 
the actions of the parties parallel those in the shared entanglement 
model. In the case of adversarial noise, we show that we can tolerate an 
error rate of $\tfrac{1}{6} - \epsilon$ in the quantum model. In the case
of random noise in which the parties communicate over depolarizing 
channels of capacity $C_Q > 0$, we obtain rates proportional to $C_Q$. 
Perhaps surprisingly, we show that the use of depolarizing channels 
in both directions enables the simulation to succeed even for some
quantum channels of unassisted forward quantum capacity $C_Q = 0$.
This proves that $Q$ does not characterize a quantum channel's 
capacity for interactive quantum communication.
We extend our ideas to perform simulation in an extension of the 
shared entanglement model in which not only the classical 
communication is noisy but also the entanglement.

A direction of research that immediately grows out of this work is 
characterizing the communication rates in all of the models 
discussed. In particular, 
the precise interactive capacity of the depolarizing channel with a specified 
noise parameter remains open. The question of 
interactive capacity for the binary symmetric channel was raised  in the 
classical context by Schulman \cite{Sch96} and brought to attention 
recently by Braverman in a survey article on the topic of 
interactive coding \cite{Bra12}. Recent developments provide
tight lower and upper bounds for this quantity \cite{KR13}. In the 
classical setting, a particular problem with worst-case interaction of 
one-bit transmissions to which all classical interactive protocols can be 
mapped was proposed for the study of such a quantity. Since every 
interactive quantum protocol can be mapped onto our general problem, 
it would be natural to
study such a quantity in the quantum domain. Would the interactive 
capacity of the binary symmetric channel (with entanglement assistance) 
for quantum protocols be the same as that for classical protocols \cite{KR13}, 
up to a factor of~$2$ for teleportation?
Do the techniques developed in Ref.~\cite{KR13} adapt to the quantum 
setting to obtain an upper bound 
of $\tfrac{1}{2} - \Omega(\sqrt{\rH(\epsilon)} \, )$? What about the 
depolarizing channel and other channels?

Another question that remains open is 
finding the highest adversarial error rate that can be 
withstood in the quantum model. To study this question, it is likely
that a ``fully quantum'' approach with new kinds
of quantum codes is needed. In particular, ideas from 
fault-tolerant quantum computation might be necessary.
Furthermore, the important question of integrating our 
results into a larger fault-tolerant framework, in which the local 
operations are also noisy, remains open. 
Yet another important question for interactive quantum coding
is what would happen in a shared entanglement setting if, along 
with the noisy classical communication, the entanglement provided were 
also noisy; we investigated this question for a depolarizing noise model 
for the entanglement, but other models would also be interesting 
to study, in particular, adversarial noise on the shared EPR 
pairs above the unidirectional binary error rate limit. Note that below 
that bound, standard quantum error correction for qubits with 
teleportation can be used for distillation. Finally, the question of 
computationally efficient simulation also remains open.

\section*{Acknowledgments}
The authors are grateful to Louis Salvail, Benno Salwey and Mark M.\ 
Wilde for useful discussions.

\appendix

\section{Formal Definitions for Noisy Communication Model}

%
\subsection{Quantum Model}
\label{sec:nqucommqu}

For the \emph{quantum model}, Alice possesses a local quantum 
register $ A^\prime$  which contains five subsystems of interest: to 
implement a noiseless protocol $\Pi$ as a black-box, the $A$ and $C_\sA$ parts 
correspond to the registers of the noiseless communication protocol, 
while $\tilde{A}$ and $\tilde{C}_\sA$ are the corresponding registers 
defined by the noiseless protocol embedding, and $A^{\prime \prime}$ is 
some scratch register used for her local quantum computation in the 
simulation. Similarly, Bob possesses a local quantum register 
$ B^\prime$ which  contains four subsystems of interest: to implement 
$\Pi$ as a black-box, the $B$ and $C_\sB$ parts correspond to the registers 
of the noiseless communication protocol, while $\tilde{B}$ is the 
corresponding register defined by the noiseless protocol embedding, and 
$B^{\prime \prime}$ is some scratch register used for his local quantum 
computation in the simulation. Eve possesses a local quantum 
register $ E^\prime$ which  contains two subsystems of interest: the 
$E$ part corresponds to her input register of the noiseless 
communication protocol and $E^{\prime \prime}$ is some scratch register 
used for her local quantum computation in the simulation. 
The input registers $ABC_\sA E$ are purified by a reference register $R$, which remains untouched throughout.
A quantum 
communication register $C^\prime$, of some fixed size $q$ independent of 
the length $N$ of the protocol to be simulated, is exchanged 
back and forth between Alice and Bob, passing through Eve; it is 
held by Alice at both the beginning and the end of the simulation 
protocol. A simulation protocol $Q$ in the quantum model of length $N^\prime$ 
is defined by a 
sequence of quantum instruments $\M_1^{ A^\prime C^\prime}$, $\M_2^{ 
B^\prime C^\prime}$, $\dotsc$, $\M_{N^\prime +1}^{ A^\prime C^\prime}$ such 
that, on state $\ket{\psi_{\mathrm{init}}^\prime}^{A^\prime 
B^\prime C^\prime E^\prime R} = \ket{\psi_{\mathrm{init}}}^{A B C_\sA E R} 
\otimes \ket{0}$ as input, given black-box access to a noiseless protocol $\Pi$ ($\Pi$ is assumed to be known to everyone)
and against an adversary $\msA$
defined by a 
sequence of quantum instruments $\N_1^{ E^\prime C^\prime}$, $\dotsc$, $
\N_{N^\prime}^{ E^\prime C^\prime}$, the protocol outputs the $\tilde{A} 
\tilde{B} \tilde{C}$ subsystems of 
\begin{align}
	\rho_{\mathrm{final}} = \M_{N^\prime + 1}^\Pi \N_{N^\prime} 
	\M_{N^\prime}^\Pi \cdots \M_2^\Pi \N_1 \M_1^\Pi 
	(\kb{\psi_{\mathrm{init}}^\prime}{\psi_{\mathrm{init}}^\prime}).
\end{align}
(Here, the superscript~$\Pi$ emphasizes the black-box access to the
protocol.)
We denote the state of the output registers $\tilde{A} \tilde{B} \tilde{C}$ 
by $Q^\Pi(\msA(\ket{\psi_{\mathrm{init}}}))$, and 
the induced quantum channel from $ABCE$ to $\tilde{A} \tilde{B} \tilde{C} 
\cong ABC$ by $Q^\Pi (\msA)$.  The success of the simulation is measured 
by how close the simulation output state is to the final state of the 
noiseless protocol on the $ABC$ registers, and is captured by the 
following definition.
\begin{definition}
	A simulation protocol $Q$ in the quantum model of length 
	$N^\prime$ succeeds with error $\epsilon$ at simulating all 
	length $N$ noiseless protocols against all adversaries in some class 
	$\A$ if, for all noiseless protocols $\Pi$ of length 
	$N$, for all adversaries $\msA \in \A$, $\| \Pi - Q^\Pi 
	(\msA) \|_\diamond \leq \epsilon$. The communication rate $R_\rQ$ of 
	$Q$ is $R_\rQ = \tfrac{N}{N^\prime \log{q}}$ for $q \geq 2$ the 
	alphabet size of the communication register $C^\prime$.
\end{definition}
Note that the adversary only has to make the simulation 
fail on some particular protocol, and on some particular input, 
to characterize the simulation protocol as ineffective against her.

In a random error model (analogous to that studied in quantum information 
theory \`a la Shannon), Eve is a non-malicious passive environment, and 
$\N_i = \N^Q$ for some fixed quantum channel $\N^Q$, and the class 
$\A$ contains a single element $\N^{C^{\prime \otimes N^\prime}}$
(with trivial $Z, E^\prime$ registers). For simplicity, we then say that 
the simulation succeeds over $\N^Q$. In an adversarial error model 
(analogous to that studied in quantum coding theory, \`a la Hamming), Eve 
is a malicious adversary who wants to make the protocol fail, and we are 
interested in particular classes of adversaries which we denote 
by $\A_{\delta, q, N^\prime}^\rQ$ for some parameter~$\delta$ such that 
$0 \leq \delta \leq 1$. The class 
$\A_{\delta, q, N^\prime}^\rQ$ contains all adversaries with a bound $\delta$ on the 
fraction of communications of the $C^\prime$ register they corrupt, in 
the following sense. Here, $\F_{q^\prime, 1}, \E_{\delta, q, N^\prime}$ are 
	defined in Eqs.~\cref{eq:FqN} and~\cref{eq:Edelta}, respectively.
\begin{definition}
	The class $\A_{\delta, q, N^\prime}^\rQ$ of adversaries in the 
	quantum model with \emph{error rate} bounded by~$\delta$, 
	$0 \leq \delta \leq 1$, contains adversaries of the following
kind: each adversary is specified by a sequence of instruments
$\N_1^{ E^\prime C_1^{\prime}}$, $\dotsc$, $
	\N_{N^\prime}^{ E^\prime C_{N^\prime}^{\prime}}$ 
	with arbitrary local
	quantum register $E^\prime$ of dimension $q^\prime \in \mathbb{N}$. 
	All of these adversaries act on a quantum
	communication register $C^\prime$ of dimension $q \in \mathbb{N}$, 
	and on protocols of length $N^\prime \in \mathbb{N}$.
	For any $\rho 
	\in \D ( E^\prime \otimes C^{\prime \otimes N^\prime})$, 
	the action of such an adversary is
\begin{align*}
	\N_{N^\prime}^{ E^\prime C_{N^\prime}^{\prime}} \cdots
	\N_1^{ E^\prime C_1^{\prime}} (\rho) =
	\sum_{i} 
	 G_{i}
	\rho G_{i}^{\dagger}
	,
\end{align*}
for $i$ ranging over some finite set and with each $G_{i}$ of the form
\begin{align*}
	G_{i} = \sum_{H \in \E_{\delta, q, N^\prime}, F \in \F_{q^\prime, 1}}
	\alpha_{H, F, i} F^{E^\prime} \otimes 
        H^{C^{\prime \otimes N^\prime}} \enspace,
\end{align*}
	which is also subject to the requirement that  $\sum_{i} 
	G_{i}^\dagger G_{i} = \rI^{ E^\prime C^{\prime \otimes N^\prime}}$.
\end{definition}

This adapts to an interactive communication model the
formal definition of adversarial channel given in Ref.~\cite{LS08} 
in a unidirectional communication model. Note that this 
allows for adaptive, probabilistic, entangled strategies for Eve, but 
such that any Kraus operator $G_{i,z,z_0}$ is a linear combination of
operators which act on at most a $\delta$ fraction of the $C^\prime$ registers
non-trivially. We therefore say that the fraction of errors is bounded 
by $\delta$ for all adversaries in $\A_{\delta, q, N^\prime}^\rQ$.

%
\subsection{Shared Entanglement Model}
\label{sec:nqucommsh}

For the \emph{shared entanglement model}, Alice, Bob and Eve possess 
local classical-quantum registers split analogously to those in the 
quantum model. In addition to the entanglement inherent in 
$\ket{\psi_{\mathrm{init}}}^{ABCER}$, Alice and Bob also share 
entanglement to be consumed during the simulation in the form of a large 
state $\ket{\phi}^{T_\sA T_\sB}$ with the registers $T_\sA$, $T_\sB$ held by Alice 
and Bob, respectively. In general, the entanglement registers have a 
product decomposition $T_\sA = T_\sA^1 \otimes \cdots \otimes T_\sA^{N^\prime}, 
T_\sB = T_\sB^{1} \otimes \cdots \otimes T_\sB^{N^\prime}$. A classical 
communication register $C^{\prime \prime}$, of some fixed size $q$ 
independent of the length $N$ of the protocol to be simulated, is 
exchanged back and forth between Alice and Bob, passing through Eve;
it is held by Alice at both the beginning and the end of the 
simulation protocol. A simulation protocol $S$ in the shared entanglement 
model of length $N^\prime$ is defined 
by a sequence of quantum instruments $\M_1^{ A^\prime 
T_\sA C^{\prime \prime}}$, $\M_2^{ B^\prime T_\sB C^{\prime \prime}}$, $\dotsc$, $
\M_{N^\prime + 1}^{ A^\prime T_\sA C^{\prime \prime}}$ such that, with
state $\ket{\psi_{\mathrm{init}}^\prime}^{A^\prime B^\prime C^{\prime 
\prime} E^\prime R} = \ket{\psi_{\mathrm{init}}}^{A B C_\sA E R} \otimes 
\ket{0}$ as input, given black-box access to a noiseless protocol $\Pi$, and 
against an adversary $\msA$ defined by a sequence of quantum 
instruments $\N_1^{ E^\prime C^{\prime \prime}}$, $\dotsc$, $
\N_{N^\prime}^{ E^\prime C^{\prime \prime}}$, the protocol outputs the 
$\tilde{A} \tilde{B} \tilde{C}$ subsystems of the
state~$\rho_{\mathrm{final}}$ given by
\begin{align}
	\rho_{\mathrm{final}} = \M_{N^\prime + 1}^\Pi \N_{N^\prime} 
	\M_{N^\prime}^\Pi \cdots \M_2^\Pi \N_1 \M_1^\Pi 
	(\kb{\psi_{\mathrm{init}}^\prime}{\psi_{\mathrm{init}}^\prime}).
\end{align}
(Again, the superscript~$\Pi$ emphasizes the black-box access to the
protocol by the simulator.)
We denote the state of the output registers $\tilde{A} \tilde{B} \tilde{C}$ 
by $S^\Pi ( \msA (\ket{\psi_{\mathrm{init}}}))$, and 
the induced quantum channel from $ABCE$ to $\tilde{A} \tilde{B} \tilde{C} 
\cong ABC$ by $S^\Pi (\msA)$. The success of the simulation is measured by 
how close the simulation output state is to the final state of the 
noiseless protocol on the $ABC$ registers, and is captured by the 
following definition:
\begin{definition}
	A length~$N^\prime$ simulation protocol $S$ in the shared entanglement model of 
	 succeeds with error $\epsilon$ at simulating 
	all length $N$ noiseless protocols against all adversaries in 
	some class $\A$ if, for all noiseless protocols $\Pi$ 
	of length $N$, for all adversaries $\msA \in \A$, $\| 
	\Pi - S^\Pi (\msA) \|_\diamond \leq \epsilon$. The communication 
	rate $R_\rC$ of $S$ is $R_\rC = \tfrac{N}{N^\prime \log{q}}$ for $q 
	\geq 2$, the alphabet size of the classical communication register 
	$C^{\prime \prime}$, and the entanglement consumption rate $R_\rE$ 
	is $R_\rE = \tfrac{\log{(\max{(\dim{T_\sA}, \dim{T_\sB})})}}{N^\prime 
	}$ for $T_\sA, T_\sB$ the entanglement registers used for the 
	simulation by Alice and Bob, respectively.
\end{definition}

In a random error model, Eve is a non-malicious passive environment, 
$\N_i = \N^S$ for some fixed classical channel $\N^S$, and the class 
$\A$ contains a single element $\N^{C^{\prime \prime \otimes N^\prime}}$
(with trivial $Z, E^\prime$ registers). For simplicity, we 
then say that the simulation succeeds over $\N^S$. In an adversarial 
error model, Eve is a malicious adversary who wants to make the protocol 
fail, and we are interested in particular classes of adversaries, which we 
denote by $\A_{\delta, q, N^\prime}^\rS$ for some parameter $0 \leq \delta \leq 1$. 
The class 
$\A_{\delta, q, N^\prime}^\rS$ contains all adversaries with a bound $\delta$ on the 
fraction of communications of the $C^{\prime \prime}$ register they corrupt, in 
the following sense. Here, for two strings $c, c_0$ over a finite alphabet, 
$\Delta$ is the Hamming distance
function counting the number of positions in which $c, c_0$ differ; 
see 
\cref{sec:tc} for a formal definition.
\begin{definition}
	The class $\A_{\delta, q, N^\prime}^\rS$ of adversaries with
\emph{error rate} bounded by~$\delta$, $0 \leq \delta \leq 1$, in the 
	shared entanglement model contains adversaries of the following
kind: each adversary is specified by instruments
	$\N_1^{ E^\prime C_1^{\prime \prime}}, \dotsc, 
	\N_{N^\prime}^{ E^\prime C_{N^\prime}^{\prime \prime}}$
	with arbitrary local
	quantum register $E^\prime$ of dimension $q^\prime \in \mathbb{N}$. 
	All these instruments act on a classical
	communication register $C^{\prime \prime}$ of dimension $q \in \mathbb{N}$, 
	and on protocols of length $N^\prime \in \mathbb{N}$.
	For any $\rho 
	\in \D ( E^\prime \otimes C^{ \prime \prime \otimes N^\prime})$,
	the action of such an adversary is
\begin{align*}
	\N_{N^\prime}^{ E^\prime C_{N^\prime}^{\prime \prime}} \cdots
	\N_1^{ E^\prime C_1^{ \prime \prime}} (\rho) =
	\sum_{c, c_0} 
	 G_{c, c_0}
	\rho G_{c, c_0}^{\dagger}
	,
\end{align*}
for $c$, $c_0 \in \{0, 1, \cdots, q-1 \}^{N^\prime}$ satisfying 
$\Delta (c, c_0) \leq \delta N^\prime$ and with each $G_{c, c_0}$ of the form
\begin{align*}
	G_{c, c_0} = \sum_{F \in \F_{q^\prime, 1}}
	\alpha_{F, c, c_0}  F^{E^\prime} \otimes \kb{c}{c_0}^{C^{\prime \prime \otimes N^\prime}}
\enspace,
\end{align*}
	also subject to the requirement that for any 
	$c_0 \in \{0, 1, \dotsc, q-1 \}^{N^\prime}$,
\begin{align*}
\sum_{c} 
	G_{c, c_0}^\dagger G_{c, c_0} = \rI^{ E^\prime} \otimes \kb{c_0}{c_0}^{ C^{\prime \prime \otimes N^\prime}}.
\end{align*}
\end{definition}

Note that this allows for adaptive, probabilistic 
strategies for Eve, but such that conditioned on 
any
final transcript $c$ and input transcript $c_0$ on the communication register, 
at most a 
$\delta$ fraction of the actions of Eve have acted non-trivially on the 
$C^{\prime \prime}$ register, even though she can copy all classical 
transmissions in the $E^\prime$ registers. We therefore say that the fraction of 
error is bounded by $\delta$ for all adversaries in $\A_{\delta, q, N^\prime}^\rS$.

Note that the adversaries in the quantum and the shared entanglement 
models are fundamentally different: in the shared entanglement model, Eve 
can copy all classical messages without inducing any error and gather the corresponding information 
to establish her strategy, but she cannot modify Alice's or Bob's quantum 
information, except for what is possible by corrupting their classical 
communication and by using the information in the quantum register $E$ 
purifying the input state. In contrast, in the quantum model, she cannot 
always ``read'' the quantum messages, but she can apply entangled, fully 
quantum corruptions to the quantum register when she chooses to.

\section{Tree Code Figure}
\label{sec:apptreecode}

Figure~\ref{fig:tree-code} depicts two paths $x = za$ and $y = zb$ in a tree with divergence of length $\ell$, along with the encodings $\bar{\msE} (x)$, $\bar{\msE} (y)$, and $\bar{\msE} (z)$ of the strings $x$, $y$, and $z$. Let $a = a_1 a_2 \ldots a_\ell$ and $b = b_1 b_2 \ldots b_\ell$; then the tree code encoding of $x$ and $y$  are
$\bar{\msE}(x) = \bar{\msE}(z) \circ \bar{\msE}(a|z)$ and $\bar{\msE}(y) = \bar{\msE}(z) \circ \bar{\msE}(b|z)$,
in which $\circ$ is the concatenation operator for strings, $\bar{\msE}(a|z) = \msE (z a_1) \msE (z a_1 a_2) \dots \msE (z a)$ and $\bar{\msE}(b|z) = \msE (z b_1) \msE (z b_1 b_2) \cdots \msE (z b)$.

The main property of the tree code is : $\Delta(\bar{\msE}(x), \bar{\msE}(y)) = \Delta(\bar{\msE}(a|z), \bar{\msE}(b|z)) \geq \alpha \cdot \ell \enspace$ ;
i.e., the suffixes of the codewords are at distance at least~$\alpha \ell$.

		\begin{figure}
		\begin{center}
		\begin{overpic}[width=0.65\textwidth]{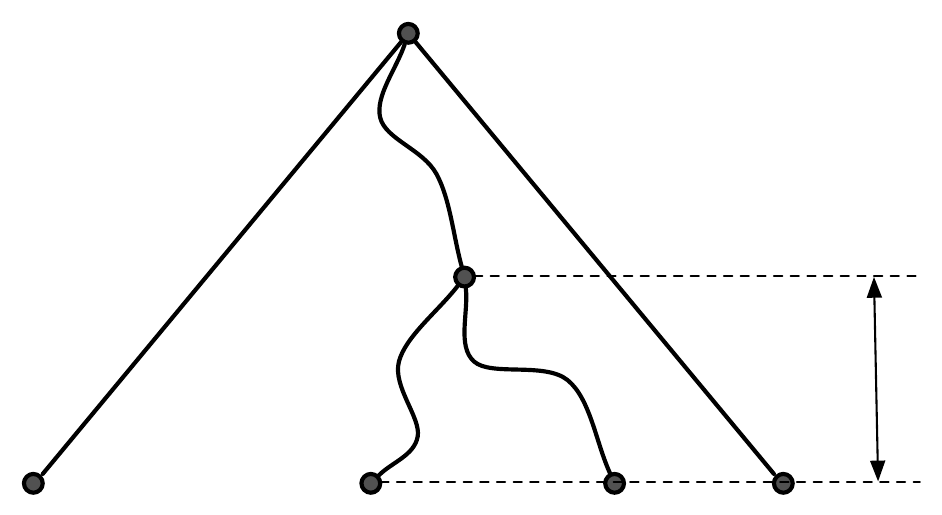}
		  \put(39,15.5){$a$}
		  \put(40,36.7){$z$}
		  \put(50.2,26.7){$\bar{\msE} (z)$}
		  \put(59.5,15.5){$b$}
		  \put(33.6,-2){$\bar{\msE} (x)$}
		  \put(59.5,-2){$\bar{\msE} (y)$}
		  \put(90,14){$\ell$}
		\end{overpic}
		  \caption{Depiction of  paths $x = za$ and $y = zb$ in a tree with divergence of length $\ell$, along with the encodings  $\bar{\msE} (x)$, $\bar{\msE} (y)$, and $\bar{\msE} (z)$ of these strings.}
		  \label{fig:tree-code}
		\end{center}
		\end{figure}

\bibliographystyle{siamplain}
\bibliography{references}
\end{document}